\patchcmd{\section}{\scshape}{\large\bfseries}{}{}
\renewcommand{\@secnumfont}{\bfseries}
\numberwithin{equation}{section}
\newtheorem{lemma}{Lemma}[section]
\newtheorem{thma}{Theorem}
\newtheorem{thmas}[lemma]{Theorem}
\newtheorem{cor}[lemma]{Corollary}
\theoremstyle{remark}
\newtheorem{rem}[lemma]{Remark}
\newtheorem{defin}[lemma]{Definition}
\newcommand{\nc}{\newcommand}
\nc{\Aa}{{\CMcal{A}}}
\nc{\Bb}{{\CMcal{B}}}
\nc{\Cc}{{\CMcal{C}}}
\nc{\Dd}{{\CMcal{D}}}
\nc{\Ee}{{\CMcal{E}}}
\nc{\Ff}{{\CMcal{F}}}
\nc{\Gg}{{\CMcal{G}}}
\nc{\Hh}{{\CMcal{H}}}
\nc{\Ii}{{\CMcal{I}}}
\nc{\Jj}{{\CMcal{J}}}
\nc{\Kk}{{\CMcal{K}}}
\nc{\Ll}{{\CMcal{L}}}
\nc{\Mm}{{\CMcal{M}}}
\nc{\Nn}{{\CMcal{N}}}
\nc{\Oo}{{\CMcal{O}}}
\nc{\Pp}{{\CMcal{P}}}
\nc{\Qq}{{\CMcal{Q}}}
\nc{\Rr}{{\CMcal{R}}}
\nc{\Ss}{{\CMcal{S}}}
\nc{\Tt}{{\CMcal{T}}}
\nc{\Uu}{{\CMcal{U}}}
\nc{\Vv}{{\CMcal{V}}}
\nc{\Ww}{{\CMcal{W}}}
\nc{\Xx}{{\CMcal{X}}}
\nc{\Yy}{{\CMcal{Y}}}
\nc{\Zz}{{\CMcal{Z}}}
\nc{\mA}{{\mathrm{A}}}
\nc{\mB}{{\mathrm{B}}}
\nc{\mC}{{\mathrm{C}}}
\nc{\mD}{{\mathrm{D}}}
\nc{\mE}{{\mathrm{E}}}
\nc{\mF}{{\mathrm{F}}}
\nc{\mG}{{\mathrm{G}}}
\nc{\mH}{{\mathrm{H}}}
\nc{\mI}{{\mathrm{I}}}
\nc{\mJ}{{\mathrm{J}}}
\nc{\mK}{{\mathrm{K}}}
\nc{\mL}{{\mathrm{L}}}
\nc{\mM}{{\mathrm{M}}}
\nc{\mN}{{\mathrm{N}}}
\nc{\mO}{{\mathrm{O}}}
\nc{\mP}{{\mathrm{P}}}
\nc{\mQ}{{\mathrm{Q}}}
\nc{\mR}{{\mathrm{R}}}
\nc{\mS}{{\mathrm{S}}}
\nc{\mT}{{\mathrm{T}}}
\nc{\mU}{{\mathrm{U}}}
\nc{\mV}{{\mathrm{V}}}
\nc{\mW}{{\mathrm{W}}}
\nc{\mX}{{\mathrm{X}}}
\nc{\mY}{{\mathrm{Y}}}
\nc{\mZ}{{\mathrm{Z}}}
\nc{\BB}{{\mathbb{B}}}
\nc{\CC}{{\mathbb{C}}}
\nc{\DD}{{\mathbb{D}}}
\DeclareMathOperator{\EE}{{\mathbb{E}}}
\nc{\FF}{{\mathbb{F}}}
\nc{\GG}{{\mathbb{G}}}
\nc{\HH}{{\mathbb{H}}}
\nc{\II}{{\mathbb{I}}}
\nc{\JJ}{{\mathbb{J}}}
\nc{\KK}{{\mathbb{K}}}
\nc{\LL}{{\mathbb{L}}}
\nc{\MM}{{\mathbb{M}}}
\nc{\NN}{{\mathbb{N}}}
\nc{\OO}{{\mathbb{O}}}
\nc{\PP}{{\mathbb{P}}}
\nc{\QQ}{{\mathbb{Q}}}
\nc{\RR}{{\mathbb{R}}}
\nc{\TT}{{\mathbb{T}}}
\nc{\UU}{{\mathbb{U}}}
\nc{\VV}{{\mathbb{V}}}
\nc{\WW}{{\mathbb{W}}}
\nc{\XX}{{\mathbb{X}}}
\nc{\YY}{{\mathbb{Y}}}
\nc{\ZZ}{{\mathbb{Z}}}
\DeclareMathOperator{\Id}{\mathrm{Id}}
\DeclareMathOperator{\ord}{\mathrm{ord}}
\nc{\Fun}{\mathrm{Fun}}
\DeclareMathOperator{\wind}{\mathrm{wind}}
\def\Vol{\mathrm{Vol}}
\DeclareMathOperator\Pic{\mathrm{Pic}}
\DeclareMathOperator{\Jac}{\mathrm{Jac}}
\DeclareMathOperator{\Div}{\mathrm{Div}}
\def\Bp{\mathsf{B}}
\def\m{\mathfrak{m}}
\DeclareMathOperator{\supp}{\mathrm{supp}}
\let\Re\relax
\let\Im\relax
\DeclareMathOperator{\Re}{\mathrm{Re}}
\DeclareMathOperator{\Im}{\mathrm{Im}}
\DeclareMathOperator{\Res}{\mathrm{Res}}
\DeclareMathOperator{\dbar}{\bar{\partial}}
\nc\chr[2]{\begin{bmatrix}#1 \\ #2\end{bmatrix}}
\def\vphi{\varphi}
\def\cst{\mathrm{cst}}
\def\hm{\mathrm{hm}}
\nc{\indic}{1\!\!1}
\DeclareMathOperator{\E}{\mathsf{E}}
\def\smm{\smallsetminus} 
\def\op{\mathrm{op}}
\DeclareMathOperator\id{\mathrm{id}}
\def\odd{\mathrm{odd}}
\def\diag{\mathrm{diag}}
\def\wtd{\widetilde}
\def\genbox#1#2#3#4#5#6{%
    \leavevmode\raise#4bp\hbox to#5bp{\vrule height#5bp depth0bp width0bp
    \pdfliteral{q .5 w \csname #2COLOR\endcsname\space RG
                       \csname #3PDF\endcsname{#5}{#6} S Q
             \ifx1#1 q \csname #2COLOR\endcsname\space rg 
                       \csname #3PDF\endcsname{#5}{#6} f Q\fi}\hss}}
\begin{document}

\title[Kenyon's identities for the height function and compactified free field]
{Kenyon's identities for the height function and\\ compactified free field in the dimer model}

\author[Mikhail Basok]{Mikhail Basok$^\mathrm{a}$}
\thanks{\textsc{${}^\mathrm{A}$ University of Helsinki, Department of Mathematics and Statistics, Helsinki, Finland}}

\keywords{dimer model, compactified free field}

\maketitle

\begin{abstract} 
  In his seminal paper~\cite{KenyonConfInvOfDominoTilings} published in 2000 Kenyon developed a method to study the height function of the planar dimer model via discrete complex analysis tools. The core of this method is a set of identities representing height correlations through the inverse Kasteleyn operator. In a general setup, such as considered in~\cite{CLR1,CLR2}, scaling limits of these identities produce a set of correlation functions written in terms of a Dirac Green's kernel with unknown boundary conditions. It was proven in~\cite{CLR1} that, in a simply connected domain, these correlation functions always coincide with correlation functions of the Gaussian free field given that they satisfy some natural a priori assumptions. This was generalized to doubly connected domains in the recent work~\cite{ChelkakDeiman}, where correlations are shown to be the correlations of a sum of Gaussian free field and a discrete Gaussian component. We generalize this result further to arbitrary bordered Riemann surfaces.
\end{abstract}

\tableofcontents

\section{Introduction}

\label{sec:Introduction}

In his pioneering works~\cite{KenyonConfInvOfDominoTilings, KenyonGFF} Kenyon considered the dimer model sampled in a sequence of Temperleyan polygons $\Omega_k\subset \frac{1}{k}\ZZ^2$ that approximate a planar domain $\Omega\subset\CC$ and proved the following two results. Assuming simply connectedness of $\Omega$, %
the centered dimer height functions converge to the (properly scaled) Gaussian free field with Dirichlet boundary conditions~\cite{KenyonGFF}. For multiply connected $\Omega$ %
the height gaps between boundary components of $\Omega_k$ converge to unknown conformally invariant random variables~\cite{KenyonConfInvOfDominoTilings}. 

These works of Kenyon are among the first where discrete complex analysis has been used as a tool to prove conformal invariance of a 2d lattice model. The entry point of applying this tool to the dimer model is a set of determinant identities used by Kenyon to express dimer height correlations via the inverse Kasteleyn operator. For a planar bipartite graph with a dimer model on it (a discrete domain) and a fixed collection of disjoint edges $w_1b_1,\dots, w_mb_m$ such an identity can be written as
\begin{equation}
  \label{eq:KenIdent_discrete}
  \EE \prod_{j = 1}^m (\overline h(v_j) - \overline h(u_j)) = \det[1_{i\neq j} K^{-1}(b_i, w_j)]\prod_{j = 1}^m K(w_j,b_j)
\end{equation}
where $\overline h = h - \EE h$ is the centered dimer height function, $K$ is the Kasteleyn operator and $u_jv_j$ is the dual edge of $w_jb_j$. The link with discrete complex analysis originates from interpreting $K$ as a discrete \emph{Dirac operator}. Pushing this forward one can hope to identify the scaling limit of $K^{-1}$ with the Dirac Green's kernel which, in its turn, suggests the following expression for the scaling limit of~\eqref{eq:KenIdent_discrete}:
\begin{equation}
  \label{eq:KenIdent_continuous}
  \EE \prod_{j = 1}^md\overline h(z_j) = \sum_{s\in \{ \pm \}^m} \det[1_{i\neq j}f^{[s_i,s_j]}(z_i,z_j)]\prod_{j = 1}^m dz_j^{[s_j]}
\end{equation}
where $\overline h$ is the (conjectural) random field that appears in the scaling limit, $z_j^{[+]} = z_j,\ z_j^{[-]} = \bar z_j$ and $f^{[\pm,\pm]}$ are holomorphic and anti-holomorphic components of the Dirac Green's kernel.

Once one is able to make this line of arguments rigorous, one still has to identify the right-hand side of~\eqref{eq:KenIdent_continuous}. This can be done, for example, by identifying the boundary conditions of the Dirac operator inherited by the combinatorics of discrete domains along their boundaries. 
For instance, Temperleyan combinatorics considered in~\cite{KenyonConfInvOfDominoTilings, KenyonGFF} imposes Dirichlet boundary conditions. These boundary conditions make $f^{[\pm,\pm]}$ conformally covariant, which implies the conformal invariance of the correlations~\eqref{eq:KenIdent_continuous} and allows one to identify $\overline h$ with the Gaussian free field in the case of a simply connected $\Omega$. As shown in a recent work~\cite{nicoletti2025temperleyan}, if $\Omega$ is multiply connected, these boundary conditions also allow one to express $f^{[\pm,\pm]}$ explicitly via the classical Szeg\"o kernel on the Schottky double of $\Omega$~\cite[Chapter~VI]{Fay}. This allows to fully determine the field $\overline h$ and, in particular, to identify the distribution of boundary height gaps considered by Kenyon in~\cite{KenyonConfInvOfDominoTilings}.

Other examples of explicit boundary conditions were studied in~\cite{russkikh2018dimers, russkikh2020dominos}. In general, however, the boundary conditions of the dimer models in question may be quite nasty and do \emph{not} produce conformally covariant scaling limits $f^{[\pm,\pm]}$. Moreover, even in the case of subdomains of regular lattices, if the boundary combinatorics of discrete domains is chosen arbitrary, there is no guarantee for the kernels $K^{-1}$ to converge or even to be bounded. This reflects the well-known prediction make by Kenyon and Okounkov~\cite{KenyonOkounkov} and asserting that the limit of the centered dimer height functions $\overline h$ becomes conformally invariant only after a proper choice of the conformal structure determined by the geometry of the so-called ``limit shape'', i.e. the scaling limit of $\EE h$~\cite{astala2026dimer, kenyon2022gradient}. If the change of the conformal structure is non-trivial, one can expect the entries of $K^{-1}$ to blow up as the formulae~\eqref{eq:KenIdent_continuous} with the Green's kernel being holomorphic in the initial complex structure would lead to a contradiction. Such a behaviour of $K^{-1}$ is exhibited, for example, in the case of the dimer model sampled on the Aztec diamond~\cite{berggren2024perfect}.

Having said all that, the aforementioned approach leading to formulae~\eqref{eq:KenIdent_continuous} can still in principle be adapted even when the change of the conformal structure is non-trivial by employing the theory of t-embeddings~\cite{CLR1,CLR2, KLRR}. The idea is to determine a \emph{gauge} of the Kasteleyn operator $K$ that will make the kernel $K^{-1}$ bounded and that will carry the geometric information needed to reconstruct the conformal structure. As suggested by~\cite{CLR2}, a natural candidate for such a gauge is a \emph{Coulomb gauge} determining a \emph{perfect t-embedding} of the dimer graph. The results of~\cite{CLR2} show that, after such a gauge was applied, the resulting kernel $K^{-1}$ can be bounded in terms of the regularity of the embedding only. Regularity theory developed in~\cite{CLR1} and results of~\cite{CLR2} imply that subsequential scaling limits of the kernels $K^{-1}$ exist and can be identified with Dirac Green's kernels in the conformal structure of the limiting t-surface (the limit of the lifts of perfect t-embeddings to the Lorentz space $\RR^{2,2}$) with unknown boundary conditions, given that the limiting t-surface is \emph{Lorentz minimal}. As discussed in ~\cite[Section~4.2]{CLR2}, the latter is expected to be Lorentz minimal as soon as the dimer height function is expected to have Gaussian fluctuations in the limit. Basing on these constructions and replacing the initial $K^{-1}$ in~\eqref{eq:KenIdent_discrete} with the one obtained after applying the gauge, we can derive formulas~\eqref{eq:KenIdent_continuous} for the limits of height correlations with $f^{[\pm,\pm]}$ being components of a Dirac Green's kernel on a Lorentz minimal surface in the Lorentz space $\RR^{2,2}$.

Perfect t-embeddings have been shown to exist and exhibit the expected properties in a number of particular cases (see~\cite{ChelkakRamassamy, berggren2024perfect, berggren2025perfect, berggren2024lozenge, keating2025perfect} and~\cite[Section~3.3]{KLRR}), yet the problem of their existence for general planar bipartite graphs with positive weights remains widely open. Nevertheless, the discussion above significantly strengthens the motivation to study the structure that formulas~\eqref{eq:KenIdent_continuous} impose on the correlations of the limiting field $\overline h$. While the functions $f^{[\pm,\pm]}$ that appear in these formulas may not be conformally covariant and sometimes depend on the particular subsequence along which the kernels $K^{-1}$ converge, the limiting field $\overline h$ is always expected to belong to a universal family of conformally invariant random generalized functions corresponding to compactified free fields in $\Omega$. This poses the problem of determining which properties of $f^{[\pm,\pm]}$ are sufficient to guarantee the conformal invariance of the right-hand side of~\eqref{eq:KenIdent_continuous} and, moreover, to identify the field $\overline h$ with a canonical object such as compactified free field.

In the case of simply connected domains a compelling answer to this question is given by~\cite[Lemma~7.3]{CLR1} that asserts the following. Denote the right-hand side of~\eqref{eq:KenIdent_continuous} by $U_m$. By definition $U_m$ is a harmonic differential in each of the variables. Assume that $U_2$ and $U_3$ have Dirichlet boundary conditions (which one expects as $\overline h$ is supposed to vanish outside of the domain) and correct singularities in the bulk: that is, $U_2$ has a quadratic pole with a prescribed residue along the diagonal and $U_3$ is regular. Then there exists a \emph{gauge} relating $f^{[\pm,\pm]}$ to the components of the Dirac Green's kernel satisfying ``spinor-type'' boundary conditions, and, in particular, $U_m$'s coincide with multi-point correlations of the gradient of the Gaussian free field.

When $\Omega$ is multiply connected with boundary components $B_0,B_1,\dots, B_n,\ n\geq 1,$ %
the well-known heuristics~\cite[Lecture~24]{gorin2021lectures} suggest that the limit field $\overline h$ should be of the form
\begin{equation}
  \label{eq:cff_domain}
  \overline h = \phi + \sum_{j = 1}^n (c_j - \EE c_j)\hm_j
\end{equation}
where $\phi$ is the Gaussian free field with zero boundary conditions scaled such that $\EE\phi(x)\phi(y)\sim -\frac{1}{2\pi^2}\log|x-y|$ as $|x-y|\to 0$, deterministic functions $\hm_j$ are harmonic measures of components $B_j$, and $c = (c_1,\dots, c_n)\in \RR^n$ is a random vector independent of $\phi$ satisfying the following properties:
\begin{enumerate}
  \item There exists a \emph{shift} $c_0\in \RR^n$ such that $c - c_0$ is almost surely an integer vector.
  \item For a given $u\in \ZZ^n + c_0$ we have
    \begin{equation}
      \label{eq:Dirichlet_energy}
      \PP[c = u]\propto \exp\left( -\frac{\pi}{2}\sum_{i,j = 1}^n u_iu_j\int_\Omega \nabla \hm_i\cdot \nabla \hm_j \right).
    \end{equation}
\end{enumerate}
Thus, generalizing~\cite[Lemma~7.3]{CLR1} to multiply connected domains requires to determine the distribution of the discrete component $c$ additionally to the GFF component $\phi$. In the case of a doubly connected domain such generalization was very recently obtained in~\cite{ChelkakDeiman} using %
properties of the Weierstrass $\wp$-function. This, in particular, allowed the authors of~\cite{ChelkakDeiman} 
to study the dimer model on a cylinder with Temperleyan conditions of different `colors' on two boundaries, in which case the Dirac Green's kernel $f^{[\pm,\pm]}$ is \emph{not} conformally covariant (see Section~\ref{subsec:Cylinder with black and white boundaries}).

The goal of this work is %
to generalize this approach further: from simply and doubly connected setups to arbitrary multiply connected domains further to bordered Riemann surfaces. In particular, we show that the algebraic structure of Kenyon's identities~\eqref{eq:KenIdent_continuous} %
allows one to identify the form of the joint distribution of dimer height gaps between boundary components and thus provide a strong supporting evidence for~\cite[Conjecture~24.1]{gorin2021lectures}.

\subsection{Main result: multiply connected domains}
\label{subsec:Main results: multiply connected domains}

We first formulate our theorem in the particular case of multiply connected domains. We need several definitions. %
Let $U_m$ denote the right-hand side of~\eqref{eq:KenIdent_continuous} as before. Let us say that $h = \phi + \sum_{j = 1}^n c_j \hm_j$ is a compactified free field with the shift $c_0\in \RR^n$ if $\phi$ and $c$ are %
as described above. Recall %
that the right-hand side of~\eqref{eq:cff_domain} is $h - \EE h$. Given an integer vector $\m\in \ZZ^n$ define
\begin{equation}
  \label{eq:E_m}
  \E_\m[X] \coloneqq \Bigl(\EE[\exp(\pi i\m\cdot (c-c_0))] \Bigr)^{-1}\cdot \EE[X\exp(\pi i\m\cdot (c-c_0))].
\end{equation}
We say that a harmonic differential $u$ defined in a neighborhood of $\partial\Omega$ satisfies Dirichlet boundary conditions if every primitive of $u$ extends continuously to $\partial\Omega$ and is locally constant there. We say that the differential $U_m$ satisfies Dirichlet boundary conditions it satisfies Dirichlet boundary conditions in every variable with other variables being fixed but arbitrary.

For various applications it is also convenient to allow functions $f^{[\pm,\pm]}$ to have multiplicative monodromies. (For instance, this  includes the setup of~\cite{ChelkakDeiman} in which case $f^{[\pm,\pm]}$ have monodromies $-1$ in each of the variables.) Let us say that $f^{[\pm,\pm]}(z_1,z_2)$ are multivalued with multiplicative monodromy if $f^{[\pm,\pm]}(z_1,z_2)$ are functions defined for $(z_1,z_2)$ from the universal cover of $\Omega\times\Omega$, and there exist characters $\chi_{1,2}:\pi_1(\Omega)\to \TT$ such that for every deck transformation $(\gamma_1,\gamma_2)\in \pi_1(\Omega)\times \pi_1(\Omega)$ 
\begin{equation}
  \label{eq:multivalued}
  (\gamma_1,\gamma_2)^\ast f^{[\pm,\pm]} = \chi_1(\gamma_1)\chi_2(\gamma_2)f^{[\pm,\pm]}.
\end{equation}
By abusing the notation we will still consider $f^{[\pm,\pm]}$ as functions on $\Omega\times\Omega$.

\renewcommand{\thethma}{A}
\begin{thma}
  \label{thma:domains}
  Assume that the functions $f^{[\pm,\pm]}$ defining $U_m$ are multivalued functions on $\Omega\times\Omega\smm\diag$ satisfying the following properties: 
  \begin{enumerate}
    \item All $U_m$ are single-valued.
    \item We have $\overline{f^{[+,+]}} = f^{[-,-]}$ and $\overline{f^{[+,-]}} = f^{[-,+]}$.
    \item $f^{[+,+]}$ is holomorphic in both variables while $f^{[+,-]}(z_1,z_2)$ is holomorphic in $z_1$ and anti-holomorphic in $z_2$.
    \item $U_2$ and $U_3$ satisfy Dirichlet boundary conditions, $U_3$ is extends continuously to $\Omega^3$ and $U_2(z_1,z_2) = \Re\frac{dz_1dz_2}{2\pi^2(z_1 - z_2)^2} + O(1)$ when $z_1$ is close to $z_2$ and $z_1,z_2$ stay at %
        a definite distance from $\partial\Omega$.
  \end{enumerate}
  Then there exist $c_0\in \RR^n$ and $\m\in \ZZ^n$ such that for the compactified free field with the shift $c_0$ we have $\EE[\exp(\pi i\m\cdot (c-c_0))]\neq 0$ and for each $z_1,\dots, z_m\in \Omega$
  \[
    U_m(z_1,\dots, z_m) = \E_\m \prod_{j = 1}^m d(h - \E_\m h)(z_j).
  \]
\end{thma}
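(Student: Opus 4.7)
The plan is to generalize the gauge argument of \cite[Lemma~7.3]{CLR1} from the simply connected case to the multiply connected setting by incorporating the topology of $\Omega$ and the monodromies of $f^{[\pm,\pm]}$, then to match the determinantal identity \eqref{eq:KenIdent_continuous} against the tilted compactified free field correlations via a theta-type identity. The first task is to extract the data $(c_0,\m)$ from the analytic input. Single-valuedness of all $U_m$ together with conditions (2)--(3) forces the existence of a character $\chi:\pi_1(\Omega)\to\TT$ controlling the monodromies of all four blocks $f^{[\pm,\pm]}$ (with $f^{[+,+]}$ transforming by $(\chi,\chi^{-1})$ and $f^{[+,-]}$ by $(\chi,\chi)$). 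A careful analysis of the Dirichlet boundary conditions on $U_2$ using its holomorphic/anti-holomorphic decomposition then forces $\chi$ to be $\{\pm 1\}$-valued on loops around each boundary component, so that writing $\chi(\gamma_k)=(-1)^{m_k}$ produces $\m\in\ZZ^n$. The shift $c_0$ is then extracted from the boundary values of a primitive $F(\cdot,z_2)$ of $U_2(\cdot,z_2)$: these are of the form $\sum_k A_{jk}\hm_k(z_2)+\text{const}$ on each $B_j$ for some real symmetric matrix $A$, and $c_0\in\RR^n/\ZZ^n$ is selected by $A$ together with the Gram matrix $P_{jk}=\int_\Omega\nabla\hm_j\cdot\nabla\hm_k$ so as to match the expected discrete covariance of the CFF.

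Next I would verify the correlation identity in the base cases $m=2,3$. Writing $h=\phi+\sum_j c_j\hm_j$ and using the independence of $\phi$ and $c$, the two-point function $\E_\m[d(h-\E_\m h)(z_1)\,d(h-\E_\m h)(z_2)]$ expands as the sum of the Gaussian two-point function $\EE[d\phi(z_1)\,d\phi(z_2)]$ (determined by the Dirichlet Green's function) and a discrete contribution $\sum_{j,k}\mathrm{Cov}_{\E_\m}(c_j,c_k)\,d\hm_j(z_1)\,d\hm_k(z_2)$. Matching this against $U_2$ reduces to a theta-series identity expressing the twisted Dirac Green's kernel (with monodromy $\chi$ and Dirichlet-type boundary conditions) in terms of the Riemann theta function associated with the harmonic-measure lattice and the character $(\m,c_0)$. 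The three-point identity provides rigidity, as in the simply connected case of \cite{CLR1}: the regularity and Dirichlet boundary data of $U_3$ exclude spurious harmonic terms and pin down the overall normalization of the gauge.

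For general $m$, expand the CFF correlation into a sum over subsets $S\subseteq\{1,\dots,m\}$ of Wick pairings of $d\phi$ over $S^c$ times joint moments of $(c-\E_\m c)$ over $S$; the Gaussian side of this expansion is standard, while the discrete side is controlled by the tilt $\exp(\pi i\m\cdot(c-c_0))$. On the other side, the signed sum over $s\in\{\pm\}^m$ in \eqref{eq:KenIdent_continuous} decomposes the determinant into holomorphic/anti-holomorphic blocks whose diagonal pairings reproduce the Gaussian contractions, while Poisson summation on the lattice $\ZZ^n+c_0$ realizes the discrete sums as the remaining off-diagonal contributions. The higher-order identities then follow from the $m=2$ matching and the bilinear algebra of the signed sum, combined with the case-$m=3$ normalization.

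The main obstacle will be the theta-series identity underlying the $m=2$ matching. In the simply connected case it is trivial, and in the doubly connected case of \cite{ChelkakDeiman} it is essentially the addition law for the Weierstrass $\wp$-function; in higher topological complexity, Fay-type identities on the Schottky double of $\Omega$ are the natural tool. Carefully matching normalizations, monodromies, and Dirichlet conditions across all boundary components $B_j$, while ensuring that the electric correlator $\EE[\exp(\pi i\m\cdot(c-c_0))]$ does not vanish for the chosen $(c_0,\m)$, is the delicate core of the argument.
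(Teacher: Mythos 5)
Your overall shape (reflect to the Schottky double, read off monodromy/shift data, match against a tilted compactified free field) is the right one, but two steps that you treat as routine are in fact the places where the real work happens, and as written they are gaps.

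First, you extract the character $\chi$ and the reflection behaviour of the individual blocks $f^{[\pm,\pm]}$ from ``a careful analysis of the Dirichlet boundary conditions on $U_2$''. But $U_2$ (and $U_3$) only see the \emph{quadratic and cubic combinations} $\Aa_2(p_1,p_2)=f(p_1,p_2)f(p_2,p_1)\omega(p_1)\omega(p_2)$ and the symmetrized triple products; Dirichlet boundary conditions for $U_2$ give you a reflection relation for $\Aa_2$, not for $f$ itself, and the monodromy/boundary data of $f$ in the first variable cannot be separated from that in the second without an extra input. The missing idea is the square-root trick of~\cite{ChelkakDeiman} (Lemma~\ref{lemma:extension} in this paper): the antisymmetrized triple product satisfies $\wtd\Aa_3^2=\Aa_3^2-4\Aa_2(p_1,p_2)\Aa_2(p_2,p_3)\Aa_2(p_1,p_3)$, the right-hand side extends to the double, one checks its zero divisor has even multiplicity, and only then does $f(p_1,p)f(p,p_2)\omega(p)$ extend across $\partial\Omega$ as a single-valued differential. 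This is what establishes the duality between the (unknown) boundary conditions in the two variables and lets one assemble $f$ into the Cauchy kernel of a line bundle $\Ff_0\otimes\Ll$ on the double; the existence of that Cauchy kernel then \emph{forces} $\deg\Ll=0$, hence flatness, hence the splitting of the Jacobian point into $(c_0,\m)$ with $\m$ integer by the $\sigma$-symmetry, and also yields $\EE\exp(\pi i\m\cdot(c-c_0))\neq 0$ for free via Lemma~\ref{lemma:Em_exists}. In your version the existence and admissibility of $(c_0,\m)$ is asserted, not derived.

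Second, your passage from $m=2,3$ to general $m$ does not work as stated. The discrete component $c$ is not Gaussian, so its $m$-th tilted moments $\E_\m\prod(c_{k_j}-\E_\m c_{k_j})$ are not determined by the covariance, and correspondingly the signed sum of $m\times m$ determinants of Szeg\H{o}-type kernels is not reducible to the two-point identity by ``bilinear algebra''; what you need for each $m$ is the full bosonization/Fay determinantal identity on the double, which is a theorem in its own right. The paper sidesteps this by proving a variational identity for the characteristic functional $F_\m(v)=\E_\m\exp(i\int v\wedge dh)$ (Theorem~\ref{thmas:variation_of_F}), differentiating it $m$ times to obtain the determinantal formula for the tilted correlations for \emph{all} $m$ at once (Theorem~\ref{thmas:determinantal_identity}, Corollary~\ref{cor:correlations}), and then only needs the gauge identification of $f^{[\pm,\pm]}$ with the Cauchy kernel. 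If you want to keep your theta-function route you must invoke Fay's generalized addition theorem for the $m$-point determinants explicitly; the $m=2,3$ cases plus Poisson summation do not imply it.
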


Note that in order to pin $\m$ from Theorem~\ref{thma:domains} down to zero it is enough to assume, in addition, that $U_m$ define moments of a positive distribution which is usually known a priory. Moreover, it is shown in~\cite{ChelkakDeiman} that in the case of double connected domains it is enough to know that
\begin{equation}
  \label{eq:positivity}
  \E_\m (c_1^2) > (\E_\m c_1)^2.
\end{equation}
to conclude that $\m = 0$. A proper analogue of the criterion~\eqref{eq:positivity} would be the positive definiteness of the covariance matrix
\begin{equation}
  \label{eq:positivity_covariance}
  \E_\m (c_ic_j) - (\E_\m c_i)(\E_\m c_j),\quad i,j = 1,\dots, n.
\end{equation}
We hope to investigate this further in the future.

\subsection{Main result: bordered Riemann surfaces}
\label{subsec:Main results: bordered Riemann surfaces}

The next natural step after Theorem~\ref{thma:domains} is to consider the dimer model on a bordered Riemann surface. The obstacle for applying discrete complex analysis tools in this %
setup is that they require %
the dimer graphs to be adapted to the Euclidean metric on the plane. The way around this is to consider Riemann surfaces equipped with a locally flat metric with conical singularities. Dimer model in such a setup has been studied in the recent work~\cite{basok2023dimers} of the author as a part of a bigger research program %
initiated in~\cite{BerestyckiLaslierRayI,BerestyckiLaslierRayII}. The combination of these works provides a comprehensive description of the asymptotics of the dimer model sampled on Temperleyan graphs on Riemann surfaces (including multiply connected domains).

In general, if one applies Kenyon's method in the aforementioned setting, one ends %
up with a collection of functions $f^{[\pm,\pm]}$ which are multivalued and, furthermore, may have singularities of the form $z^\theta$ at conical singularities of the metric. The role of $dz$ in the correlation formulae~\eqref{eq:KenIdent_continuous} is %
played by a multivalued holomorphic differential $\omega$ defined so %
that $|\omega|^2$ is the locally flat singular metric on the surface. Kenyon's method produces harmonic differentials
\begin{equation}
  \label{eq:Um_surface}
  U_m(z_1,\dots, z_m) = \sum_{s\in \{ \pm \}^m} \det[1_{i\neq j}f^{[s_i,s_j]}(z_i,z_j)]\prod_{j = 1}^m d\omega^{[s_j]}(z_j)
\end{equation}
where $\omega^{[+]} = \omega$ and $\omega^{[-]} = \bar\omega$. These differentials correspond to height correlations evaluated with respect to a \emph{sign-indefinite} measure. The reason for the signs in the measure to occur lies in the way how the Kasteleyn theorem (and hence the combinatorial identities~\eqref{eq:KenIdent_discrete}) changes when the topology is non-trivial, see~\cite{Cimasoni} and references therein.

To define a proper notion of a compactified free field let us consider an open Riemann surface $\Sigma_0$ with boundary components $B_0,\dots, B_n,\ n\geq 0,$ and $g_0$ handles. We say that a harmonic differential $u$ on $\Sigma_0$ has Dirichlet boundary conditions if each boundary component $B_j$ has a neighborhood in which %
the primitive $\int u$ defines a single-valued harmonic function that is constant on $B_j$.
\begin{defin}
  \label{defin:integer_differential}
  Let $u$ be a harmonic differential on $\Sigma_0$ with Dirichlet boundary conditions. We say that $u$ is integer if for every path $\gamma$ connecting two boundary components we have $\int_\gamma u\in \ZZ$. Note that in this case for every loop $\gamma$ we must have $\int_\gamma u \in \ZZ$ as well.
\end{defin}
Generalizing the definition given above for %
planar domains we define a compactified free field $h$ on $\Sigma_0$ to be a multivalued random distribution that can be written as
\[
  h = \phi + \int\psi
\]
where $\phi$ is the Gaussian free field with zero boundary conditions normalized as before and $\psi$ is a random real-valued harmonic differential on $\Sigma_0$ satisfying the following properties:
\begin{enumerate}
  \item $\psi$ has Dirichlet boundary conditions and there exists a harmonic differential $\psi_0$ such that $\psi -\psi_0$ is almost surely integer as defined in Definition~\ref{defin:integer_differential}.
  \item If $u$ is a harmonic differential satisfying the previous assumption, then
    \begin{equation}
      \label{eq:Dirichlet_energy_surface}
      \PP[\psi = u]\propto \exp\left( -\frac{\pi}{2}\int_{\Sigma_0}u\wedge \ast u \right)
    \end{equation}
    where $\ast$ is the Hodge star (see Section~\ref{subsec:differential_forms}).
\end{enumerate}

Let us now define the sign-indefinite measure mentioned above. Assume that the genus $g_0$ of $\Sigma_0$ is positive. Let $A_{n+1}, \ldots, A_{n+g_0}, B_{n+1},\ldots, B_{n+g_0}$ be %
a set of loops on $\Sigma_0$ that complements $B_1,\dots, B_n$ to a simplicial basis in $H_1(\Sigma_0,\ZZ)$ (see Section~\ref{subsec:Schottky double} for more details). Given a harmonic differential $u$ on $\Sigma_0$ satisfying Dirichlet boundary conditions define
\begin{equation}
  \label{eq:def_of_Q0_differentials}
  Q_0(u) = \sum_{j = n+1}^{n+g_0} \int_{A_j}u\int_{B_j}u
\end{equation}
Finally, denote by $l_1,\dots, l_n$ simple paths connecting the boundary component $B_0$ to the boundary components $B_1,\dots, B_n$ and not intersecting $A_{n+1}, \ldots, A_{n+g_0}, B_{n+1},\ldots, B_{n+g_0}$. Given an integer vector $\m = (m_1,\dots, m_n, a_1,\dots, a_{g_0}, b_1,\dots, b_{g_0})\in \ZZ^{n + 2g_0}$ define
\begin{equation}
  \label{eq:def_of_Qm}
  Q_\m(u) = Q_0(u) - \sum_{j = 1}^n m_j\int_{l_j}u + \sum_{j = 1}^{g_0} \left(a_j \int_{B_{n+j}}u - b_j\int_{A_{n+j}}u\right).
\end{equation}
Similarly to~\eqref{eq:E_m}, given the compactified free field $h = \phi + \int \psi$ with a shift $\psi_0$ we define
\begin{equation}
  \label{eq:E_m_surfaces}
  \E_\m[X] \coloneqq \Bigl(\EE[\exp(\pi iQ_\m(\psi - \psi_0))] \Bigr)^{-1}\cdot \EE[X\exp(\pi iQ_\m(\psi-\psi_0))].
\end{equation}
We can finally formulate our main result.

\renewcommand{\thethma}{B}
\begin{thma}
  \label{thma:surfaces}
  Fix an arbitrary collection of points $x_1,\dots, x_N\in \Sigma_0$ and define $\Sigma_0' = \Sigma_0\smm\{ x_1,\dots, x_N \}$. Assume that the functions $f^{[\pm,\pm]}$ and the holomorphic differential $\omega$ %
  used to define $U_m$ in~\eqref{eq:Um_surface} are multivalued on $\Sigma_0'\times\Sigma_0'\smm\diag$ and $\Sigma_0'$ respectively and satisfy the following properties: 
  \begin{enumerate}
    \item The differentials $U_m$ are single-valued.
    \item We have $\overline{f^{[+,+]}} = f^{[-,-]}$ and $\overline{f^{[+,-]}} = f^{[-,+]}$.
    \item $f^{[+,+]}$ is holomorphic in both variables while $f^{[+,-]}(z_1,z_2)$ is holomorphic in $z_1$ and anti-holomorphic in $z_2$.
    \item $U_2$ and $U_3$ extend continuously to the whole $\Sigma_0\times\Sigma_0\smm\diag$ and $\Sigma_0\times\Sigma_0\times\Sigma_0$ respectively, satisfy Dirichlet boundary conditions and we have
      \[
        U_2(p_1,p_2) = \Re\frac{dz(p_1)dz(p_2)}{2\pi^2(z(p_1) - z(p_2))^2} + O(1)
      \]
      if $z$ is a local coordinate, $p_1$ is close to $p_2$, and $p_1,p_2$ stay at %
      a definite distance from $\partial\Omega$.
  \end{enumerate}
  Then there exist $\m\in \ZZ^{n+2g_0}$ and a harmonic differential $\psi_0$ satisfying Dirichlet boundary conditions such that for the compactified free field with the shift $\psi_0$ we have $\EE[\exp(\pi iQ_\m(\psi - \psi_0))]\neq 0$ and for each $p_1,\dots, p_m\in \Omega$
  \[
    U_m(p_1,\dots, p_m) = \E_\m \prod_{j = 1}^m d(h - \E_\m h)(p_j).
  \]
\end{thma}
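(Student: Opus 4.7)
The plan is to follow the road map of \cite{CLR1,ChelkakDeiman} but work on the Schottky double $\Sigma = \Sigma_0 \cup \tau(\Sigma_0)$, a closed Riemann surface of genus $g = 2g_0 + n$ carrying the anti-holomorphic involution $\tau$ that fixes $\partial\Sigma_0$. The elementary tools used there (Cauchy kernel in \cite{CLR1}, Weierstrass $\wp$ in \cite{ChelkakDeiman}) are replaced by the Szeg\H{o} kernel, the prime form, and Fay's trisecant identity on $\Sigma$, cf.~\cite[Chapter~VI]{Fay}.

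\emph{Step 1 (identification of $f^{[\pm,\pm]}$).} Using conditions (2)--(4), I extend $f^{[\pm,\pm]}$ to a four-tuple of meromorphic sections on $\Sigma \times \Sigma$ via the reflection dictionary $f^{[-,-]}(z_1,z_2) = \overline{f^{[+,+]}(\tau z_1, \tau z_2)}$ and its companions; the single-valuedness of $U_m$ combined with the prescribed multi-valuedness~\eqref{eq:multivalued} is exactly the condition that the combined monodromy is a $\tau$-compatible character $\chi$ of $\pi_1(\Sigma)$, i.e.\ a spin structure twisted by a flat line bundle. Condition~(4) pins down the residue of $f^{[+,+]}$ along the diagonal to be that of the Szeg\H{o} kernel $S_\chi$; the Dirichlet condition on $U_3$, together with its continuity up to the boundary, supplies the off-diagonal identity that, exactly as in \cite[Lemma~7.3]{CLR1}, eliminates the finite-dimensional ambiguity and forces $f^{[+,+]} = S_\chi$, with appropriate modifications around the conical points $x_1,\dots, x_N$. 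The discrete part of the characteristic of $\chi$ becomes the integer vector $\m \in \ZZ^{n + 2g_0}$ of the theorem, while its continuous part parametrises the flat twist and is absorbed into the harmonic shift $\psi_0$.

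\emph{Step 2 (compactified-free-field side).} Writing $h = \phi + \int \psi$ and applying Wick's theorem to $\E_\m \prod d(h - \E_\m h)(p_j)$, the gradient correlations of $\phi$ contribute, via pairings, the Bergman-type two-point function whose diagonal singularity is exactly that prescribed in~(4). The harmonic component $\psi$, after the twist by $\exp(\pi i Q_\m(\psi - \psi_0))$ in~\eqref{eq:E_m_surfaces}, produces a finite-dimensional Gaussian sum over the affine integer lattice on which $\psi$ lives; by the Poisson summation / $\theta$-series calculus this sum equals a ratio of theta functions with characteristic $\m$. Summing and re-organising through the Fay-type identity expressing $S_\chi(z_1,z_2) S_\chi(z_2,z_1)$ as the Bergman kernel plus an explicit holomorphic bidifferential yields precisely the holomorphic/antiholomorphic determinantal decomposition of $U_m$ defined by~\eqref{eq:Um_surface}; the matching for general $m$ reduces to a repeated application of Fay's trisecant identity, exactly as the planar analogue of \cite[Lemma~7.3]{CLR1} reduces to Cauchy's.

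\emph{Main obstacle.} The hardest point is Step~1: controlling the $(n + 2g_0)$-dimensional family of admissible characters $\chi$ together with the finite-dimensional ambiguity in $f^{[+,+]}$ that is compatible with (1)--(4). In genus zero this ambiguity is trivial, and in the doubly-connected case it is handled by the explicit $\wp$-computation of \cite{ChelkakDeiman}; in the present setting one must use the infinitesimal version of the off-diagonal Fay identity to show that the Dirichlet conditions on $U_2$ and $U_3$, combined with the continuity asserted in~(4), already rule out any holomorphic bilinear correction to $S_\chi$. A secondary technical difficulty is that the conical singularities at $x_1,\dots, x_N$ together with the intrinsic multi-valuedness of $\omega$ modify the line bundles on $\Sigma \times \Sigma$ that carry $S_\chi$; these modifications have to be absorbed into the definition of $\chi$ and into the choice of the cycles $A_{n+j},B_{n+j}$ and cuts $l_j$ used in~\eqref{eq:def_of_Qm} without altering the final theta-function identity.
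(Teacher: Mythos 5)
Your overall architecture (pass to the Schottky double, identify the $f^{[\pm,\pm]}$ with a twisted Szeg\H{o}/Cauchy kernel of a spin bundle tensored with a flat bundle, then match against the compactified free field) is the right one, and Step~1's reflection dictionary is exactly how the paper begins. However, there are two genuine gaps, both located at the points you yourself flag as ``obstacles'' but then do not resolve.

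First, the extension of $f$ across $\partial\Sigma_0$ is not justified. The hypotheses give Dirichlet boundary conditions only for $U_2$ and $U_3$, which are \emph{symmetric} expressions in $f$: the $(1,0)$-components are $\Aa_2(p_1,p_2)=f(p_1,p_2)f(p_2,p_1)\,\omega(p_1)\omega(p_2)$ and the symmetrized cyclic triple product $\Aa_3$. Your reflection dictionary defines $f$ only on $(\Sigma\smm\partial\Sigma_0)^2$; to get a meromorphic object on the closed double one must extract the \emph{antisymmetric} combination $\wtd\Aa_3=f(p_1,p_2)f(p_2,p_3)f(p_3,p_1)-f(p_1,p_3)f(p_3,p_2)f(p_2,p_1)$ from the identity $\wtd\Aa_3^{\,2}=\Aa_3^2-4\Aa_2(p_1,p_2)\Aa_2(p_2,p_3)\Aa_2(p_1,p_3)$ and prove that the zero divisor of the right-hand side has even multiplicity along $\partial\Sigma_0$, so that a global single-valued square root exists. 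This square-root trick (imported from the doubly connected case) is what converts Dirichlet conditions on $U_2,U_3$ into the statement that the cyclic product $f(p_1,p)f(p,p_2)\omega(p)$ extends to $\Sigma\smm\{p_1,p_2\}$ with simple poles; without it your claim that $f^{[+,+]}$ ``is'' a Szeg\H{o} kernel with some character $\chi$ of $\pi_1(\Sigma)$ has no foundation, since the monodromy/boundary data of $f$ across $\partial\Sigma_0$ is precisely what is unknown. Relatedly, the elimination of the ``finite-dimensional ambiguity'' does not need an infinitesimal Fay identity: once one knows the cyclic products define a Cauchy kernel of \emph{some} line bundle $\Ff_0\otimes\Ll$, uniqueness of Cauchy kernels plus the degree computation $\deg(\Ff_0\otimes\Ll)=g-1$ forces $\Ll\cong\Ll_{\alpha_0+\alpha_\m}$ and there is nothing left to eliminate.

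Second, your Step~2 is a genuinely different route from the paper's, but as written it is an unproven assertion. Establishing, for all $m$, that $\E_\m\prod_j d(h-\E_\m h)(p_j)$ equals the determinant of twisted Szeg\H{o} kernels via ``Wick $+$ Poisson summation $+$ repeated Fay trisecant'' is the full bosonization identity on a higher-genus surface --- a substantial theta-function computation that you do not carry out and that is the content, not a corollary, of the matching. The paper avoids it entirely: it differentiates the observable $F_\m(v)=\E_\m\exp(i\int v\wedge dh)$ and uses a variation identity expressing $\tfrac{d}{dt}\log F_\m$ through the near-diagonal expansion of $\Dd_\beta^{-1}$, from which the determinantal formula for all $m$ follows by iterated differentiation. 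If you insist on the theta-function route you must either prove the general-$m$ bosonization identity or cite it precisely; you would also then still need to address why $\EE[\exp(\pi i Q_\m(\psi-\psi_0))]\neq 0$ for the chosen $\m$, which in the paper comes from Riemann's theorem on the theta divisor via the existence of the Cauchy kernel.
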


\subsection{Discussion of the proof}
\label{subsec:Discussion of the proof}

The proof of Theorems~\ref{thma:domains} and~\ref{thma:surfaces} is based on three ingredients. The first is a formula analogous to~\eqref{eq:KenIdent_continuous} where on the left-hand side $\EE$ is replaced by $\E_\m$ with unknown $\m$, the field $h$ is a compactified free field with some shift, and on the right-hand side $f^{[\pm,\pm]}$ is replaced by entries of the \emph{Cauchy kernel} of a perturbed Cauchy--Riemann operator acting on sections of a spin line bundle on the \emph{Schottky double} of the Riemann surface. We derive this formula from a variation identity for the observable $F_\m(v) = \E_\m\exp\left(i\int v\wedge dh\right)$ derived in~\cite{basok2023dimers}, see Theorem~\ref{thmas:variation_of_F} and Corollary~\ref{cor:correlations}. In the case of a planar multiply connected domain such a formula can also be derived from~\cite[Theorem~4.2]{berggren2025gaussian}.

The second ingredient is a square root trick that appeared in~\cite[Theorem~3.8]{ChelkakDeiman}, see Lemma~\ref{lemma:extension}. This trick allows to derive a duality between (yet unknown) boundary conditions of $f^{[\pm,\pm]}(p,q)$ in the first and in the second variables respectively. This in its turn allows to interpret these functions as a Cauchy kernel of a certain (also unknown) line bundle on the Schottky double.

The last ingredient is an abstract argument asserting that a line bundle on a Riemann surface has a Cauchy kernel if and only if it is isomorphic to a tensor product of a spin line bundle and a flat line bundle. This implies that there is a gauge relating $f^{[\pm,\pm]}$ with a Cauchy kernel from the first ingredient, thus identifying the correlations. The point in the Jacobian parametrizing the flat line bundle determines the shift of the compactified free field and the choice of $\m$. We demonstrate how this works in practice via a number of examples in Section~\ref{sec:Examples}.

To conclude the discussion, let us note that Theorem~\ref{thma:surfaces} holds also for non-bordered surfaces. In fact, it is much easier in this case as one does not need to deal with boundary conditions of $f^{[\pm,\pm]}$, thus the second (the most technical) step can be omitted and the theorem becomes a simple corollary of the calculations made in Section~\ref{subsec:Fermionization of correlations of a compactified free field}.

\subsection{Organization of the paper}
\label{subsec:Organization of the paper}

The methods that we use require a number of basic facts from the theory of Riemann surface. In hope of making the paper as self-contained as possible we have included preliminary Section~\ref{sec:Preliminaries on Riemann surfaces} where most of these facts are set out. 
Seasoned readers can jump straight to Section~\ref{sec:Compactified free field}, which
contains some preliminary results about compactified free field obtained in~\cite{basok2023dimers} and their immediate corollaries. Section~\ref{sec:proof} contains the proof of the main result. Let us emphasize that the proof, although looking rather abstract on the first glance, allows to compute the parameters from Theorem~\ref{thma:domains} and~\ref{thma:surfaces} in many particular cases. This is demonstrated in Section~\ref{sec:Examples} dedicated to examples.

\subsection*{Acknowledgment}
The author thanks Dmitry Chelkak and Zach Deiman for bringing their very recent research~\cite{ChelkakDeiman} to author's attention before publication and for enlightening discussions of the doubly connected setup. The work was supported by Research Council of Finland grants 333932 and 363549.

\section{Preliminaries on Riemann surfaces}
\label{sec:Preliminaries on Riemann surfaces}

The purpose of this section is to remind some basic facts and constructions from the theory of Riemann surfaces that we will need. We will try to keep the presentation as self-contained as possible, though some of the proofs will have to be omitted. All the facts below are fairly standard and can be found in standard textbooks such as~\cite{MumfordTata1, GriffitsHarris, Forster, Fay}; we also address the reader to~\cite[Appendix]{basok2023dimers} and references therein.

Throughout this section $\Sigma$ denotes a compact Riemann surface without a boundary and $g$ denotes its genus.

\subsection{Harmonic differentials and Hodge decomposition}
\label{subsec:differential_forms}

Every differential form of degree 1 (a 1-form for brevity) $\omega$ on a Riemann surface can be locally written as 
\[
  \omega = f\,dz + g\,d\bar z
\]
where $z$ is a holomorphic local coordinate and $f,g$ are functions. Globally this defines a decomposition $\omega = \omega^{1,0} + \omega^{0,1}$ where $\omega^{1,0}$ is obtained from $\omega$ by forgetting the term $g\,d\bar z$ (one can easily show that such projection is consistent with a coordinate change). The forms $\omega^{1,0}$ and $\omega^{0,1}$ are called $(1,0)$- and $(0,1)$-forms respectively.

A 1-form $\omega$ is called holomorphic if it is a $(1,0)$-form and can be locally written as $f\,dz$ where $f$ is holomorphic. The $(0,1)$-form $\bar \omega$ is then called anti-holomorphic. A 1-form is called harmonic if it is a linear combination of holomorphic and anti-holomorphic forms. Note that if a harmonic 1-form $u$ is real-valued, then $u = u^{1,0} + \overline{u^{0,1}} = 2\Re u^{1,0}$, i.e. every real-valued harmonic 1-form is a real part of a holomorphic one.

The Hodge star $\ast$ is a linear operation on 1-forms defined such that $\ast\ast = -1$ and $\ast \omega = i\bar\omega$ if $\omega$ is a $(1,0)$-form. Note that if $f$ is a smooth function on $\Sigma$, then 
\[
  \int_\Sigma df\wedge \ast df
\]
is the Dirichlet energy of $f$. A 1-form is called exact if it can be written as $df$ for some smooth function $f$, and co-exact if it can be written as $\ast df$. The Hodge decomposition theorem asserts that every smooth 1-form $u$ can be uniquely decomposed as
\begin{equation}
  \label{eq:Hodge_decomposition}
  u = u_h + df + \ast dh
\end{equation}
where $u_h$ is harmonic. By taking $(0,1)$ components of both sides of~\eqref{eq:Hodge_decomposition} we conclude that every smooth $(0,1)$ form $\alpha$ can be uniquely decomposed as
\begin{equation}
  \label{eq:Hodge_decomposition_01}
  \alpha = \alpha_h + \dbar F
\end{equation}
where $\alpha_h$ is anti-holomorphic. This is called the Dolbeault decomposition of $\alpha$.

Note that every harmonic 1-form $u$ is closed, i.e. $du = 0$. In particular, one can define a pairing between the space of harmonic 1-forms and the first homology group $H_1(\Sigma, \ZZ)$: if $u$ is a harmonic 1-form and a loop $\gamma$ represents a homology class from $H_1(\Sigma,\ZZ)$, then
\begin{equation}
  \label{eq:pairing}
  \langle u, [\gamma]\rangle = \int_\gamma u.
\end{equation}
A standard fact from the Hodge theory says that this pairing is non-degenerate. In particular, the dimension of the space of harmonic 1-forms is $2g$. This implies that the (complex) dimension of the space of holomorphic 1-forms is $g$.

\subsection{Simplicial basis in $H_1(\Sigma,\ZZ)$ and normalized differentials}
\label{subsec:Simplicial basis and normalized differentials}

Recall that $H_1(\Sigma,\ZZ)$ has a non-degenerate skew-symmetric bilinear form given by the algebraic intersection number. We can choose a basis of cycles $A_1,\dots, A_g,B_1,\dots, B_g\in H_1(\Sigma, \ZZ)$ such that for each $i,j$
\[
  A_i\cdot A_j = B_i\cdot B_j = 0,\qquad A_i\cdot B_j = \delta_{ij}.
\]
Such basis is called a simplicial basis.

Given a simplicial basis as above we can fix holomorphic differentials $\omega_1,\dots, \omega_g$ such that
\[
  \int_{A_i}\omega_j = \delta_{ij},\qquad i,j = 1,\dots,g.
\]
Such differentials form a basis in the space of holomorphic differentials. They are called normalized holomorphic differentials. Define the $g\times g$ matrix $\Bp$ by
\[
  \Bp_{ij} = \int_{B_i}\omega_j.
\]
The matrix $\Bp$ is called the matrix of b-periods. It is symmetric and $\Im \Bp$ is positively defined.

Given a basis $A_1,\dots, A_g,B_1,\dots, B_g\in H_1(\Sigma,\ZZ)$ we can identify the space of real harmonic differentials on $\Sigma$ with $\RR^{2g}$ by
\[
  u\mapsto (\int_{A_1}u,\ldots,\int_{A_g}u,\int_{B_1}u,\ldots, \int_{B_g}u)\in \RR^{2g}.
\]
If the basis is simplicial, this identification intertwines the wedge pairing on differentials with the standard anti-symmetric form on $\RR^{2g}$:
\begin{equation}
  \label{eq:bilinear_relations}
  \int_\Sigma u\wedge v = \sum_{j = 1}^g\left(\int_{A_j}u\int_{B_j}v - \int_{B_j}u\int_{A_j}v\right).
\end{equation}

\subsection{Line bundles on Riemann surfaces}

A (holomorphic) line bundle $\Ll$ over $\Sigma$ is a complex manifold together with a projection $\pi: \Ll\to \Sigma$ which is a holomorphic map and satisfies the following assumptions:
\begin{itemize}
  \item For each $p\in \Sigma$ the fiber $\pi^{-1}(p)$ has a structure of one-dimensional vector space over $\CC$.
  \item The fibration $\pi$ is locally trivial, that is, for each $p\in \Sigma$ there exists a neighborhood $U\subset \Sigma$ of it and a biholomorphic map $F: \pi^{-1}(U)\to U\times \CC$ such that the diagram
\begin{equation}
  \label{eq:line_bundle}
  \vcenter{\xymatrix{\pi^{-1}(U) \ar[rd]^\pi \ar[rr]^F && U\times \CC\ar[dl] \\
                                                     & U}}
\end{equation}
is commutative and for each $p$ the restriction of $F$ to the corresponding fibers is an isomorphism of vector spaces. The map $F$ is called a trivialization of $\Ll$ over $U$.
\end{itemize}

A morphism between two line bundles $\Ll_1,\Ll_2$ is a holomorphic map $F:\Ll_1\to \Ll_2$ that commutes with the projection to $\Sigma$ and acts as a linear map on each fiber. A morphism is an isomorphism if it has an inverse. A line bundle is called trivial if it is isomorphic to the line bundle $\Sigma\times\CC$. The restriction $\Ll\vert_U$ of a line bundle $\Ll$ to $U\subset \Sigma$ is the line bundle $\pi:\pi^{-1}(U)\to U$ over $U$.

It is convenient to represent line bundles via transition functions. Fix an open cover $(U_i)_{i\in I}$ of $\Sigma$ and assume that a line bundle $\Ll$ is trivial over each $U_i$ (such cover exists by definition). Fix a trivialization $F_i$ of $\Ll\vert_{U_i}$. Then for each $i,j\in I$ there exists a holomorphic function $\vphi_{ij}:U_i\cap U_j\to \CC^\ast$ such that 
\[
  F_i(v) = \vphi_{ij}(\pi(v))\cdot F_j(v)
\]
for each $v\in \Ll\vert_{U_i\cap U_j}$, where on the right-hand side we multiply the second coordinate of $F_j$ only. The functions $\vphi_{ij}$ are called transition functions. It is easy to see that $(\vphi_{ij})_{i,j\in I}$ determine $\Ll$ up to an isomorphism. Note that $\vphi_{ij} = \vphi_{ji}^{-1}$ and $\vphi_{ij}\vphi_{jk}\vphi_{ki} = 1$. Given any collection $(\vphi_{ij})_{i,j\in I}$ satisfying these two properties we can form a line bundle with $(\vphi_{ij})_{i,j\in I}$ being its transition functions.

Let us give some important examples of line bundles. Complex structure on $\Sigma$ allows to define the holomorphic cotangent bundle traditionally denoted by $K_\Sigma$. We define it using transition functions. Fix an open cover $(U_i)_{i\in I}$ as above and let for each $i$ the map $z_i:U_i\to \CC$ be a local holomorphic coordinate. Put 
\[
  \vphi_{ij} = \frac{dz_j}{dz_i},
\]
that is, $\vphi_{ij}$ is the derivative of $z_j$ with respect to $z_i$. Chain rule ensures that $(\vphi_{ij})_{i,j\in I}$ satisfies the properties of transition function, thus they define a line bundle $K_\Sigma$.

Another example of a line bundle can be constructed as follows. Let $p\in \Sigma$ be a point and $z$ be a local coordinate defined in a neighborhood $U_1$ of $p$. Assume that $z(p) = 0$ and put $U_2 = \Sigma\smm\{ p \}$. Define the line bundle $\Oo_\Sigma(p)$ by declaring the transition function $\vphi_{12}$ to be
\[
  \vphi_{12}(x) = z(x).
\]

Given two line bundles $\Ll_1,\Ll_2$ its tensor product $\Ll_1\otimes\Ll_2$ is obtained by tensor multiplying $\Ll_1,\Ll_2$ fiberwise. Note that if $\Ll_{1,2}$ has transition functions $\vphi^{1,2}_{ij}$ defined with respect to the same cover, then $\vphi^1_{ij}\vphi^2_{ij}$ are transition functions of $\Ll_1\otimes \Ll_2$.

Let $\Ll$ be a lint bundle. The dual line bungle $\Ll^\ast$ is obtained from $\Ll$ by replacing each fiber by the dual vector space. If $\Ll$ has transition functions $\vphi_{ij}$, then $\vphi_{ij}^{-1}$ are transition functions of $\Ll^\ast$. Note that $\Ll\otimes \Ll^\ast$ is trivial. 
The Picard group of $\Sigma$ is the group $\Pic(\Sigma)$ whose elements are isomorphism classes of line bundles, the product is given by tensor multiplication and the inversion is given by $\Ll\mapsto \Ll^\ast$.

Finally, a morphism between line bundles $\pi_1:\Ll_1\to \Sigma,\ \pi_2:\Ll_2\to \Sigma$ is any map $\Phi:\Ll_1\to\Ll_2$ such that $\pi_2\circ\Phi = \pi_1$ and for each $p\in \Sigma$ the map $\Phi$ restricted to the fiber of $\Ll_1$ over $p$ acts as a linear map between $\Ll_1\vert_p$ and $\Ll_2\vert_p$. If $\Ll_1,\Ll_2$ is represented by transition functions $\vphi_{ij}^{1,2}$, then a morphism can be represented by a collection of maps $\Phi_i: U_i\to \CC$ satisfying
\[
  \Phi_i\vphi^{(1)}_{ij} = \vphi^{(2)}_{ij}\Phi_j.
\]
A morphism is smooth if the map $\Phi$ is smooth and holomorphic if $\Phi$ is holomorphic. A morphism is an isomorphism if it admits an inverse.

\subsection{Sections of a line bundle}
\label{subsec:Sections of a line bundle}

A section $f$ of a line bundle $\Ll$ over a set $U\subset \Sigma$ is a map $f: U\to \Ll$ such that $\pi\circ f = \id$, that is, for each $p\in U$ the $f(p)$ is a point in the fiber over $p$. Sections over $U$ can be added together and multiplied by a function on $U$. A section is called smooth (resp. holomorphic) if $f$ is a smooth (resp. holomorphic) map. A global section is a section over $\Sigma$. Note that since fibers of a line bundle are vector spaces, each fiber has a distinguished point corresponding to the origin. In particular, each line bundle has a global zero section that sends each point of $\Sigma$ to the origin. This section is denoted by $0$. Let us put
\[
  \begin{split}
    &\mC^\infty(U,\Ll) = \{ \text{smooth sections of $\Ll$ over }U \},\\
    &H^0(U,\Ll) = \{ \text{holomorphic sections of $\Ll$ over }U \}.
  \end{split}
\]
Note that $\mC^\infty(U,\Ll)$ and $H^0(U,\Ll)$ are vector spaces. Both spaces have the zero section $0$ as the origin.

If $f_1,f_2$ are sections of line bundles $\Ll_1$ and $\Ll_2$, then $f_1f_2$ is naturally a section of $\Ll_1\otimes \Ll_2$. If $f$ is a section of $\Ll$, then $f^{-1}$ is a section of $\Ll^\ast$ outside of the zero locus of $f$.

Note that sections of the bundle $\Sigma\times \CC$ are just functions on $\Sigma$. Moreover, let $\Ll$ be a line bundle with transition functions $\vphi_{ij}$ corresponding to a cover $(U_i)_{i\in I}$. Then any section $f$ of $\Ll$ over $U$ can be represented as a collection of functions $f_i: U_i\cap U\to \CC$ such that
\[
  f_i(p) = \vphi_{ij}(p) f_j(p),\qquad p\in U_i\cap U_j\cap U.
\]
A change of trivializations of $\Ll$ over $U_i$ amounts in a gauge transformation of $f_i$-s: each $f_i$ is multiplied by a holomorphic function $\vphi_i:U_i\to \CC^\ast$. A section is smooth/holomorphic if $f_i$ are smooth/holomorphic.

A section $f$ of $\Ll$ is a meromorphic section over $U$ if it is a section of $\Ll$ over $U$ minus a discrete set and the functions $f_i$ representing $f$ are meromorphic on $U$. For example, if $f$ is a holomorphic section of $\Ll$, then $f^{-1}$ is a meromorphic section of $\Ll^\ast$. It is easy to see that the notion of a meromorphic section does not depend on the choice of the cover and trivializations. Moreover, locations and orders of poles and zeros of meromorphic sections are also invariant. In particular, we can define the divisor of a global meromorphic section as a formal linear combination of points from $\Sigma$:
\[
  \Div(f) = \sum_{p\in \Sigma} \ord_pf \cdot p.
\]
The degree of this divisor is the sum of the coefficients:
\[
  \deg \Div(f) = \sum_{p\in \Sigma} \ord_pf.
\]

Let us consider some examples. It is easy to see that sections of the holomorphic cotangent bundle $K_\Sigma$ are complex-valued 1-forms on $\Sigma$ that can locally be written as $f(z)\,dz$ where $z$ is a local coordinate, that is, sections of $K_\Sigma$ are $(1,0)$-forms. Holomorphic sections of $K_\Sigma$ are then holomorphic 1-forms, also called holomorphic differentials.

A line bundle is trivial if and only if it has a non-zero holomorphic section. If $\Ll$ has such a section $f$, then the map $\Sigma\times \CC\ni (p,z)\mapsto zf(p)\in \Ll$ defines an isomorphism. The line bundle $\Oo_\Sigma(p)$ has a canonical global section $1_p$ that is represented by $f_1,f_2$ where
\[
  f_1 = z,\qquad f_2 = 1.
\]
The section $1_p$ has a simple zero at $p$. Consequently, the line bundle $\Oo_\Sigma(-p) = \Oo_\Sigma(p)^\ast$ has a meromorphic section $1_p^{-1}$ with a simple pole at $p$.

Let now $\Ll$ be a line bundle. Define
\begin{equation}
  \label{eq:L(p)}
  \Ll(p) = \Ll\otimes \Oo_\Sigma(p),\qquad \Ll(-p) = \Ll\otimes \Oo_\Sigma(-p).
\end{equation}
We have injective linear maps
\[
  H^0(\Sigma, \Ll(-p))\to H^0(\Sigma, \Ll)\to H^0(\Sigma, \Ll(p)),\qquad f\mapsto f\cdot 1_p.
\]
Using the first map one can identify $H^0(\Sigma, \Ll(-p))$ with the space of sections of $\Ll$ vanishing at $p$. Given a holomorphic section $f$ of $\Ll(p)$ one can form a meromorphic section of $\Ll$ by considering $f\cdot 1_p^{-1}$. This section has a simple pole at $p$ and no other poles. We conclude that $H^0(\Sigma, \Ll(p))$ can be identified as the space of meromorphic sections of $\Ll$ with a simple pole at $p$.

\subsection{Metric and curvature. Riemann--Roch theorem}
\label{subsec:Metric and curvature. Riemann--Roch theorem}

Let $\Ll$ be a line bundle. A Hermitian metric on $\Ll$ is a choice of a Hermitian metric $|\cdot|^2$ in each fiber such that for each smooth section $f$ of $\Ll$ the function $|f|^2$ is smooth. Using that $\Ll$ is locally trivial and a partition of unity it is easy to show that a Hermitian metric always exists.

Assume that $\vphi_{ij}$ are transition functions of $\Ll$ corresponding to a cover $(U_i)_{i\in I}$. Then any Hermitian metric on $\Ll$ can be represented by a collection of positive functions $\rho_i: U_i\to \RR$ such that 
\begin{equation}
  \label{eq:rho_i}
  \rho_i = |\vphi_{ij}|^{-2}\rho_j.
\end{equation}
Given a collection of such functions and a section $f$ represented by functions $f_i:U_i\to \CC$ we can calculate $|f(p)|^2$ as $\rho_i(p)|f_i(p)|^2$ if $p\in U_i$; the condition above ensures that this does not depend on $i$.

Recall that a (complex valued) 1-form on $\Sigma$ evaluated at a given point $p\in \Sigma$ is a linear map from the tangent space $T_p\Sigma$ to $\CC$. Similarly, a 1-form with values in a line bundle $\Ll$ evaluated at a given point $p\in \Sigma$ is a linear map from $T_p\Sigma$ to the fiber $\Ll\vert_p$. An example of such a 1-form is $f \omega$ where $f$ is a section of $\Ll$ and $\omega$ is a usual 1-form. Every 1-form with values in $\Ll$ can be written as a linear combination of the 1-forms such as above.

Let $|\cdot|^2$ be a Hermitian metric on $\Ll$ represented by functions $\rho_i$ as above. In each chart $U_i$ consider the differential operator
\[
  f_i\mapsto df_i + f\,\partial \log \rho_i.
\]
It is easy to see that these operators are invariant under a change of trivializations of $\Ll$ over $U_i$, thus they glue together to a well-defined differential operator $\nabla$ that sends each $\mC^\infty$ section $f$ to a 1-form $\nabla f$ with values in $\Ll$. Note that $\nabla$ is linear and satisfies the Leibniz rule: if $h$ is a smooth function, then $\nabla (hf) = fdh + h\nabla f$. Such differential operators are called connections. The connection $\nabla$ has two important additional properties:
\begin{itemize}
  \item Compatibility with complex structure: $\nabla f = 0$ if and only if $f$ is holomorphic.
  \item Compatibility with the metric: $d|f|^2 = 2\Re\langle \nabla f,f\rangle$, where $\langle\nabla f, f\rangle$ evaluated at a given point $p$ is a function on the tangent space $T_p\Sigma$ that sends a tangent vector to $\Ll\vert_p$ via $\nabla f$ and then scalar multiplies the image by $f(p)$.
\end{itemize}
Such connection is called the metric connection.

Besides the metric connection, a Hermitian metric $|\cdot|^2$ on $\Ll$ has the metric curvature form. On $U_i$ consider the differential form
\[
  \Theta_i = \dbar\! \partial\log \rho_i.
\]
It is easy to check that $\Theta_i$ do not depend on the trivializations, thus, they together form a well-defined 2-form $\Theta$ called the curvature form of the metric $|\cdot|^2$.

Assume for the moment that $\Ll$ has a holomorphic section $f$ that does not vanish (i.e. $\Ll$ is trivial). Then $\dbar\!\partial \log |f|^2 = \Theta$, thus $\int_\Sigma\Theta = 0$ by the Stokes theorem. If $f$ is a meromorphic section, then the $\dbar\!\partial \log |f|^2 = \Theta$ outside of the divisor of $f$ and it is easy to see that 
\[
  \int_\Sigma \Theta = -2\pi i \deg \Div(f).
\]
In fact, it can be shown (e.g. by proving that every line bundle admits a meromorphic section) that for every line bundle $\Ll$ the number
\[
  \deg\Ll = \frac{i}{2\pi} \int_\Sigma \Theta
\]
is integer and does not depend on the metric. This number is called the degree of $\Ll$.

For example, the $\deg \Oo_\Sigma(p) = 1$ and $\deg K_\Sigma = 2g-2$. The last relation is nothing but the Gauss--Bonnet theorem: indeed, it is easy to see that in this case
\[
  \Theta = \frac{i}{2}K\cdot d\Vol,
\]
where $K$ is the Gauss curvature and $d\Vol$ is the volume form determined by Riemannian metric corresponding to the metric on $K_\Sigma$.

The Riemann--Roch theorem is an important formula that allows to estimate the dimension of the space of holomorphic sections of a line bundle. It asserts that
\begin{equation}
  \label{eq:Riemann-Roch}
  \dim H^0(\Sigma, \Ll) = \deg \Ll - g + 1 + \dim H^0(\Sigma, K_\Sigma\otimes \Ll^\ast).
\end{equation}

\subsection{Flat line bundles and Cauchy--Riemann operators}
\label{subsec:Flat line bundles and Cauchy--Riemann operators}

Recall that $\Pic(\Sigma)$ is the set of isomorphism classes of line bundles on $\Sigma$ endowed with the group structure corresponding to the tensor product. It is easy to see that the function $\Ll\mapsto \deg\Ll$ defines a homomorphism from $\Pic(\Sigma)$ to $\ZZ$. Let us focus on the kernel of this homomorphism, i.e. on the set $\Pic^0(\Sigma)$ of line bundles of degree 0. One way to classify them is by looking at families of Cauchy--Riemann operators.

Given an anti-holomorphic 1-form $\alpha$ we can now construct a line bundle $\Ll_\alpha$ as follows. Fix a cover $(U_i)_{i\in I}$ of $\Sigma$ by simply-connected sets. For each $j\in I$ fix a base point $p_j\in U_j$ and define functions
\[
  \Phi_j(p) = \exp(-2i\int_{p_j}^p\alpha_j),\qquad \vphi_{ij} = \Phi_j/\Phi_i,
\]
where the integration is taken inside $U_j$. By definition for each $i,j$ the function $\vphi_{ij}$ is constant and $|\vphi_{ij}| = 1$. Denote by $\Ll_\alpha$ the holomorphic line bundle determined by $\vphi_{ij}$. The fact that $\vphi_{ij}\in \TT$ implies that $\Ll_\alpha$ admits a metric which is flat, i.e. whose curvature vanishes. In particular, $\deg \Ll_\alpha = 0$.

Note that functions $\Phi_j$ put together define a smooth isomorphism $\Phi$ between $\Ll_\alpha$ and the trivial line bundle $\CC\times \Sigma$. We emphasize that this morphism is smooth but not holomorphis, thus $\Ll_\alpha$ is not necessary trivial. By definition, $\Phi$ sends sections of $\Ll_\alpha$ to smooth functions. Given a smooth section of $\Ll_\alpha$ represented by a collection of functions $f_i:U_i \to \CC$ the corresponding smooth function $f$ is given point-wise by $f(p) = \Phi_j(p)f_j(p)$. Note that
\[
  (\dbar + \alpha)f = \vphi_j \dbar f_j
\]
over each $U_j$. We conclude that $\Phi$ intertwines $\dbar$ acting on $\mC^\infty(\Sigma, \Ll_\alpha)$ with $\dbar + \alpha$ acting on $\mC^\infty(\Sigma)$.

\begin{lemma}
  \label{lemma:flat_line_bundles}
  Assume that $\Ll$ is a holomorphic line bundle on $\Sigma$ and $\deg \Ll = 0$. Then there exists an anti-holomorphic $\alpha$ such that $\Ll\cong \Ll_\alpha$.
\end{lemma}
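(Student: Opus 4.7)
My plan is to exhibit a global smooth trivialization of $\Ll$ with respect to which the Cauchy--Riemann operator $\dbar_\Ll$ becomes the operator $\dbar+\alpha$ on $\mC^\infty(\Sigma)$ for some anti-holomorphic 1-form $\alpha$. Since the morphism $\Phi$ constructed immediately before the lemma is a smooth isomorphism $\Ll_\alpha\to\Sigma\times\CC$ intertwining $\dbar_{\Ll_\alpha}$ with $\dbar+\alpha$, composing with $\Phi^{-1}$ will produce a smooth bundle isomorphism $\Psi:\Ll\to\Ll_\alpha$ intertwining the two Cauchy--Riemann operators; such a $\Psi$ automatically sends local holomorphic frames to local holomorphic frames and is therefore a holomorphic isomorphism.

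The starting point is to fix a global smooth nowhere-vanishing section $s_0$ of $\Ll$. This is possible precisely because $\deg\Ll = 0$: complex line bundles on a closed oriented surface are classified up to smooth isomorphism by their first Chern class, which here coincides with the degree as defined in Section~\ref{subsec:Metric and curvature. Riemann--Roch theorem}. Using $s_0$ as a global frame, every smooth section of $\Ll$ is uniquely of the form $g\cdot s_0$ for $g\in\mC^\infty(\Sigma)$, and a direct calculation gives
\[
  \dbar_\Ll(g\cdot s_0) = (\dbar g + g\beta)\cdot s_0,\qquad \beta\coloneqq s_0^{-1}\dbar s_0,
\]
so $\dbar_\Ll$ corresponds to $\dbar+\beta$, where $\beta$ is a globally defined smooth $(0,1)$-form satisfying $\dbar\beta = -s_0^{-2}\,\dbar s_0\wedge\dbar s_0 = 0$ (the second factor vanishes because $\omega\wedge\omega = 0$ for any $(0,1)$-form $\omega$).

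Since $\beta$ is $\dbar$-closed, the Dolbeault--Hodge decomposition~\eqref{eq:Hodge_decomposition_01} lets me write $\beta = \alpha + \dbar F$ with $\alpha$ anti-holomorphic and $F\in\mC^\infty(\Sigma)$. Replacing the frame by $s\coloneqq e^{-F}s_0$ (again smooth and nowhere vanishing) and computing $\dbar_\Ll(g\cdot s) = \dbar_\Ll(ge^{-F}\cdot s_0)$ using the previous paragraph, one expands $\dbar(ge^{-F}) = e^{-F}\dbar g - ge^{-F}\dbar F$ to obtain
\[
  \dbar_\Ll(g\cdot s) = (\dbar g + g(\beta-\dbar F))\cdot s = (\dbar g + g\alpha)\cdot s,
\]
so under the new trivialization $\dbar_\Ll$ corresponds exactly to $\dbar+\alpha$, which by the first paragraph identifies $\Ll$ with $\Ll_\alpha$ as holomorphic line bundles. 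The only genuinely nontrivial ingredient is the topological triviality of $\Ll$ invoked to construct $s_0$---this is where the hypothesis $\deg\Ll = 0$ enters in an essential way; every subsequent step is a formal consequence of the Dolbeault--Hodge decomposition and of the definition of $\Ll_\alpha$.
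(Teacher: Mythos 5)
Your proof is correct. The paper states this lemma without proof (it is among the standard facts whose proofs are declared omitted in Section~\ref{sec:Preliminaries on Riemann surfaces}), and your argument is the standard one: smooth triviality of $\Ll$ from $c_1 = \deg\Ll = 0$, reading off the $(0,1)$-connection form $\beta = s_0^{-1}\dbar s_0$ of the global frame, and absorbing its $\dbar$-exact part via the Dolbeault decomposition~\eqref{eq:Hodge_decomposition_01}. Two cosmetic remarks: the verification $\dbar\beta = 0$ is superfluous, since every $(0,2)$-form on a Riemann surface vanishes and~\eqref{eq:Hodge_decomposition_01} is stated for arbitrary smooth $(0,1)$-forms; and with the paper's normalization $\Phi_j = \exp(-2i\int\alpha_j)$ the intertwined operator is really $\dbar + 2i\alpha$ rather than $\dbar + \alpha$, so strictly one should conclude $\Ll\cong\Ll_{\alpha/(2i)}$ --- since anti-holomorphic forms are closed under scalar multiplication this does not affect the existence statement.
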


Recall that $\deg$ is a homomorphism from $\Pic(\Sigma)$ to $\ZZ$. We conclude that if $\deg\Ff = \deg\Ff_0$, then there exists a unique $[\alpha]\in \Jac(\Sigma)$ such that $\Ff\cong \Ff_0\otimes \Ll_\alpha$. In particular, for any $d\in \ZZ$ isomorphism classes of bundles of degree $d$ are all isomorphic to bundles of the form $\Ff_0\otimes \Ll_\alpha$ for an arbitrary fixed $\Ff_0$ with $\deg \Ff_0 = d$. For example, we can pick $\Ff_0 = \Oo_\Sigma(dp) \coloneqq \Oo_\Sigma(p)^{\otimes d}$ for some $p\in \Sigma$.

\subsection{Jacobian of a Riemann surface}
\label{subsec:Jacobian of a Riemann surface}

We have observed that degree zero line bundles have the form $\Ll_\alpha$; the next natural question is which $\alpha$-s define isomorphic bundles. Note that $\Ll_\alpha\otimes\Ll_\beta^\ast\cong \Ll_{\alpha - \beta}$, thus out questions boils down to a description of $\alpha$-s such that $\Ll_\alpha$ is trivial. By definition the latter is true if and only if $\Ll_\alpha$ admits a non-vanishing global holomorphic section. Such section would correspond to a non-zero function $f\in \mC^\infty(\Sigma)$ such that $(\dbar + \alpha) f = 0$. Such function exists if and only if for each loop $\gamma$ on $\Sigma$ we have
\[
  \int_\gamma\Im \alpha \in \pi\ZZ
\]
in which case we can put $f(p) = \exp(-2i\int_{p_0}^p\Im \alpha)$. Denote the space of anti-holomorphic differentials by $\Omega^{0,1}_\Sigma$ and the subspace of differentials satisfying the relation above by $\Lambda\subset \Omega_\Sigma^{0,1}$. We conclude that degree zero line bundles are classified by the factor space
\begin{equation}
  \label{eq:def_of_Jac}
  \Jac(\Sigma)\coloneqq \frac{\Omega_\Sigma^{0,1}}{\Lambda}
\end{equation}
Since $\Ll_\alpha\otimes \Ll_\beta\cong\Ll_{\alpha+\beta}$ we conclude that the established correspondence between $\Pic^0(\Sigma)$ and $\Jac(\Sigma)$ is a group isomorphism.

Recall that $\Omega_\Sigma^{0,1}$ is a $g$-dimensional vector space, thus $\Jac(\Sigma)$ possess a structure of a complex manifold. Another point of view on $\Jac(\Sigma)$ is provided by the one-to-one correspondence between anti-holomorphic differentials $\alpha$ and the harmonic differential 
\begin{equation}
  \label{eq:def_of_psi}
  \psi = 2\pi^{-1}\Im\alpha.
\end{equation}
Choosing a simplicial basis $A_1,\dots, A_g,B_1,\dots, B_g$ in $H_1(\Sigma,\ZZ)$ we can further identify $\psi$ with the vector
\[
  (\int_{A_1}\psi,\dots, \int_{A_g}\psi,\int_{B_1}\psi,\dots, \int_{B_g}\psi)\in \RR^g
\]
which induces the smooth diffeomorphism
\begin{equation}
  \label{eq:Jac_torus}
  \Jac(\Sigma) \cong \frac{\RR^{2g}}{(2\ZZ)^{2g}}.
\end{equation}
Finally, notice that $\psi$ induces a character $\chi:H_1(\Sigma,\ZZ)\to \TT$ given by
\begin{equation}
  \label{eq:def_of_chi}
  \chi(\gamma)= \exp(\pi i \int_\gamma\psi)
\end{equation}
and that $\chi$ is trivial if and only if $\psi$ corresponds to the identity in $\Jac(\Sigma)$. It is worth noting that all the above mentioned correspondences are isomorphisms of Abelian groups.

\begin{defin}
  \label{defin:Jac_coord}
  In what follows we will often be switching between the different perspective on $\Jac(\Sigma)$ described above, that is, we will often be saying that $\alpha$, $\psi$, $(a,b)\in \RR^{2g}$ or $\chi$ represent the same point in $\Jac(\Sigma)$ meaning that $\psi = 2\pi^{-1}\Im\alpha$, $(a,b)$ are the $A$- and $B$-periods of $\psi$, $\chi$ is defined by~\eqref{eq:def_of_chi} etc.
\end{defin}

\begin{rem}
  \label{rem:multivalued_functions}
  Consider a smooth section of $\Ll_\alpha$ identified with a smooth function $f$. Fix a base point $p_0\in \Sigma$ and consider the multivalued function $f_\alpha(p) = \exp(2i\int_{p_0}^p\Im\alpha)f(p)$. Note that the monodromy of $f$ is given by the character $\chi$ representing the same point in the Jacobian as $\alpha$. Note also that $f$ represents a holomorphic section if and only if $\dbar f_\alpha = 0$. Thus, we the map $f\mapsto f_\alpha$ provides a way to identify sections of $\Ll_\alpha$ with multivalued functions in such a way that the Cauchy--Riemann operators acting on the former and on the latter are intertwined.
\end{rem}

\subsection{Cauchy kernel of a line bundle}
\label{subsec:Cauchy kernel of a line bundle}

Let $\Ll$ be a line bundle and $f\in \mC^\infty(\Sigma, \Ll)$. Fix an open cover $(U_i)_{i\in I}$ and let $\vphi_{ij}$ be transition functions corresponding to $\Ll$. Let $f_i:U_i\to \CC$ represent $f$. Since $\vphi_{ij}$ are holomorphic, $(0,1)$-forms $\dbar f_i$ form together a $(0,1)$-form with values in $\Ll$. We denote this 1-form by $\dbar f$. Taking completions with respect to suitable Sobolev norms allows to treat $\dbar$ as a Fredholm operator. If this operator is invertible, then the inverse has a kernel that is called a Cauchy kernel of $\Ll$.

To define such kernel formally we need the following auxiliary notation. Given a holomorphic map $f:X\to Y$ and a line bundle $\pi:\Ll\to Y$ we can define $f^\ast\Ll$ by
\[
  f^\ast\Ll = \{ (v,x)\in \Ll\times X\ \mid\ \pi(v) = f(x) \}.
\]
It is straightforward to verify that $f^\ast\Ll$ has a structure of a line bundle on $X$. Let $p,q: \Sigma\times\Sigma\to \Sigma$ be the projections onto the first and the second factors respectively. Given two line bundles $\Ll_1,\Ll_2$ on $\Sigma$ define
\begin{equation}
  \label{eq:boxtimes}
  \Ll_1\boxtimes\Ll_2\coloneqq p^\ast\Ll_1\otimes q^\ast\Ll_2.
\end{equation}
Sections of $\Ll_1\boxtimes\Ll_2$ can be viewed as objects of the form $C(p,q),\ p,q\in \Sigma,$ which behave as sections of $\Ll_1$ in $p$ and as sections of $\Ll_2$ in $q$.

Let us say that $\Dd^{-1}(p,q)$ is the Cauchy kernel for $\dbar$ acting on smooth sections of $\Ll$ if $\Dd^{-1}$ is a meromorphic section of $\Ll\boxtimes (K_\Sigma\otimes\Ll^\ast)$ with a simple pole along the diagonal and such that for each $f\in \mC^\infty(\Sigma, \Ll)$
\begin{equation}
  \label{eq:def_of_Dd-1}
  f(p) = \int_\Sigma \Dd^{-1}(p,q)\wedge \dbar f(q).
\end{equation}
To make sense of the integrand in the expression above, note that for a fixed $p$ the section $\Dd^{-1}(p,q)$ behaves as a $(1,0)$-form with values in $\Ll^\ast$, and $\dbar f$ is a $(0,1)$-form with values in $\Ll$. Thus we can use the pairing between $\Ll^\ast$ and $\Ll$ to interpret the expression $\Dd^{-1}(p,q)\wedge\dbar f(q)$ as a 2-form on $\Sigma$ which makes the integral in~\eqref{eq:def_of_Dd-1} well-defined.

Let $U\subset \Sigma$ be an open set such that there exist a $\zeta\in H^0(U,\Ll)$ that never vanishes. Then $\zeta^{-1}\in H^0(U,\Ll^\ast)$ naturally. Let $z$ be a local coordinate on $U$
\[
  \Dd^{-1}(p,q) = C(p,q)\,\zeta(p)\zeta(q)^{-1}\,dz(q),\qquad p,q\in U
\]
where $C(p,q)$ is a meromorphic function on $U\times U$ with a simple pole on a diagonal. We can expand
\[
  C(p,q) = \frac{a(q)}{z(q) - z(p)} + O(1),\qquad p\to q.
\]
Define
\[
  \Res_{p=q} \Dd^{-1}(p,q) \coloneqq a(q).
\]
It is easy to see that $\Res_{p=q} \Dd^{-1}(p,q)$ does not depend on the choice of $\zeta$ and $z$, that the relation~\eqref{eq:def_of_Dd-1} is equivalent to
\[
  \Res_{p=q}\Dd^{-1}(p,q) = \frac{1}{2\pi i}.
\]
In particular, note that if $\Dd^{-1}$ is the Cauchy kernel for $\Ll$, then $\wtd\Dd^{-1}(p,q) = \Dd^{-1}(q,p)$ is the Cauchy kernel for $K_\Sigma\otimes\Ll^\ast$.

\begin{lemma}
  \label{lemma:Dd-1_exists}
  Let $\Ll$ be a holomorphic line bundle on $\Sigma$. If the Cauchy kernel of $\Ll$ exists, then $\deg\Ll = g-1$ and $\dim H^0(\Sigma,\Ll) = 0$. Moreover, if the Cauchy kernel exists, then it is unique.
\end{lemma}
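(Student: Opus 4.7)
The plan is to extract from the reproducing formula~\eqref{eq:def_of_Dd-1} the two cohomological vanishings $H^0(\Sigma, \Ll) = 0$ and $H^0(\Sigma, K_\Sigma \otimes \Ll^\ast) = 0$, and then read off both the degree and the uniqueness from Riemann--Roch~\eqref{eq:Riemann-Roch}. The existence of $\Dd^{-1}$ is essentially an inverse for $\dbar$ acting on sections of $\Ll$, so morally both $H^0$ and $H^1$ of $\Ll$ must vanish; by Serre duality the latter is $H^0$ of $K_\Sigma \otimes \Ll^\ast$, which we will extract directly (without invoking Serre duality) via the swap remark immediately preceding the lemma.

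First I would apply~\eqref{eq:def_of_Dd-1} to an arbitrary $f \in H^0(\Sigma, \Ll)$: since $\dbar f = 0$, the right-hand side is identically zero, so $f \equiv 0$ and $H^0(\Sigma, \Ll) = 0$. For the dual vanishing I would use the observation just above the lemma that $\wtd \Dd^{-1}(p, q) = \Dd^{-1}(q, p)$ is a Cauchy kernel for $K_\Sigma \otimes \Ll^\ast$; the same argument applied to this swapped kernel gives $H^0(\Sigma, K_\Sigma \otimes \Ll^\ast) = 0$. Feeding both vanishings into~\eqref{eq:Riemann-Roch} yields
\[
  0 \;=\; \deg \Ll - g + 1 + 0,
\]
so $\deg \Ll = g - 1$, as claimed.

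For uniqueness, let $\Dd_1^{-1}, \Dd_2^{-1}$ be two Cauchy kernels. By definition each is a meromorphic section of $\Ll \boxtimes (K_\Sigma \otimes \Ll^\ast)$ whose only singularity is a simple pole along the diagonal with residue $\tfrac{1}{2\pi i}$. The difference $\Delta = \Dd_1^{-1} - \Dd_2^{-1}$ therefore has no pole on the diagonal and extends to a global holomorphic section of $\Ll \boxtimes (K_\Sigma \otimes \Ll^\ast)$. For each fixed $p$ the slice $q \mapsto \Delta(p, q)$ is then a global holomorphic section of $K_\Sigma \otimes \Ll^\ast$, which must vanish by the second vanishing above; letting $p$ vary gives $\Dd_1^{-1} = \Dd_2^{-1}$. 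The only point that is not a purely formal consequence of the reproducing property is the invocation of the swap remark, and the main thing I would double-check is that the residue normalization $\tfrac{1}{2\pi i}$ is indeed symmetric under $(p,q) \mapsto (q,p)$ so that the argument for $H^0(\Sigma, \Ll) = 0$ really does apply verbatim to $\wtd \Dd^{-1}$.
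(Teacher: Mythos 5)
Your proof is correct and follows essentially the same route as the paper: the reproducing formula kills $H^0(\Sigma,\Ll)$, the swap remark handles the dual bundle, Riemann--Roch pins down the degree, and the vanishing of holomorphic sections gives uniqueness. The only (cosmetic) difference is that you extract the vanishing of $H^0(\Sigma,K_\Sigma\otimes\Ll^\ast)$ directly and apply Riemann--Roch once, whereas the paper applies it twice to get the two inequalities $\deg\Ll\leq g-1$ and $\deg(K_\Sigma\otimes\Ll^\ast)\leq g-1$; both are valid.
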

\begin{proof}
  Let $\Dd^{-1}$ denote the Cauchy kernel. The fact that $\Dd^{-1}$ is the left inverse to $\dbar$ implies that $\dbar$ has zero kernel, that is, $\dim H^0(\Sigma, \Ll) = 0$. Riemann--Roch formula~\eqref{eq:Riemann-Roch} implies that
  \[
    \dim H^0(\Sigma, K_\Sigma\otimes\Ll^\ast) = \dim H^0(\Sigma, K_\Sigma\otimes\Ll^\ast) - \dim H^0(\Sigma, \Ll) = g-1-\deg\Ll
  \]
  whereas
  \[
    \deg\Ll\leq g-1.
  \]
  As we observed above, $\wtd\Dd^{-1}(p,q) = \Dd^{-1}(q,p)$ is the Cauchy kernel for $K_\Sigma\otimes \Ll^\ast$. Acting as previously we conclude that
  \[
    \deg (K_\Sigma \otimes \Ll^\ast) = 2g-2 - \deg\Ll \leq g-1.
  \]
  Combining these two inequalities we conclude that $\deg \Ll = g-1$.

  Finally, if $\Dd_1^{-1}$ is another Cauchy kernel, then, for a generic $q$, the difference $\Dd^{-1}(p,q) - \Dd_1^{-1}(p,q)$ is a non-zero holomorphic section of $\Ll$ which is a contradiction.
\end{proof}

\subsection{Spin line bundles}
\label{subsec:Spin line bundles}

By definition, a spin line bundle $\Ff$ is a line bundle on $\Sigma$ for which there exists an isomorphism $\Ff^{\otimes 2}\cong K_\Sigma$. Note that in this case
\[
  \deg \Ff = \tfrac{1}{2} \deg K_\Sigma = g-1.
\]
Spin line bundles exist: indeed, one can pick an arbitrary $\Ff_0$ of degree $g-1$ and $\alpha\in \Omega_\Sigma^{0,1}$ such that $\Ff_0^{\otimes 2}\cong K_\Sigma\otimes \Ll_\alpha$ (see Section~\ref{subsec:Flat line bundles and Cauchy--Riemann operators} for the definition of $\Ll_\alpha$). Then $\Ff_0\otimes \Ll_{-\alpha/2}$ is a spin line bundle. Moreover, if $\Ff_1,\Ff_2$ are two line bundles, then $\Ff_1\otimes \Ff_2^\ast\cong \Ll_\alpha$ where $\alpha$ represents a root of unity in the Jacobian, that is, $2\alpha\in \Lambda$ (see Section~\ref{subsec:Jacobian of a Riemann surface}). This implies that there are $2^{2g}$ isomorphism classes of spin line bundles.

Spin line bundles can be also understood topologically. Let $\zeta$ be a smooth section of $\Ff$ with isolated zeros and $\gamma$ be a smooth oriented loop on $\Sigma$ such that $\zeta\vert_\gamma$ does not vanish. Identifying $\zeta^2$ with a 1-form and applying it to the tangent vector field on $\gamma$ we obtain a map $f:\gamma\to \CC^\ast$. Let $\wind(\gamma, \zeta^2)\in \ZZ$ be the index of this map. It is easy to see that $\wind(\gamma, \zeta^2)\mod 2$ does not depend on the choice of $\zeta$ and thus defines an invariant of $\Ff$. Note that if $\gamma$ is contractible we have $\wind(\gamma, \zeta^2)\mod 2 = 1$. Put
\begin{equation}
  \label{eq:def_of_q}
  q_\Ff(\gamma) = \wind(\gamma, \zeta^2) + 1\mod 2.
\end{equation}
One can show that $q_\Ff(\gamma)$ depends only on the homology class of $\gamma$, that is, $q_\Ff$ is a map from $H_1(\Sigma, \ZZ)$ to $\ZZ/2\ZZ$. A fundamental theorem of Johnson asserts that such a map $q$ is related to a spin line bundle if and only if it is quadratic, that is, for each $\gamma_1,\gamma_2\in H_1(\Sigma, \ZZ)$
\begin{equation}
  \label{eq:quadratic_relation}
  q(\gamma_1+\gamma_2) = q(\gamma_1) + q(\gamma_2) + \gamma_1\cdot\gamma_2 \mod 2.
\end{equation}
Let $A_1,\dots, A_g,B_1,\dots,B_g$ be a simplicial basis in $H_1(\Sigma, \ZZ)$. Clearly, the values of $q_\Ff$ on this basis determine $q_\Ff$ uniquely. In particular, we have exactly $2^{2g}$ possible maps $q$ which matches with the number of non-isomorphic spin line bundles.

\subsection{Family of Cauchy--Riemann operators}
\label{subsec:Family of Cauchy--Riemann operators}

Recall the notation $\Ll_\alpha$ from Section~\ref{subsec:Flat line bundles and Cauchy--Riemann operators}. Let $A_1,\dots, A_g,B_1,\dots,B_g$ be a simplicial basis in $H_1(\Sigma, \ZZ)$ and let $q_0:H_1(\Sigma, \ZZ)\to \ZZ/2\ZZ$ be the quadratic function such that
\[
  q(A_i) = q(B_i) = 0,\qquad i = 1,\dots, g
\]
(note that $q\neq 0$ in general), and let $\Ff_0$ be the corresponding spin line bundle. We call $\Ff_0$ the spin line bundle with zero characteristics. Note that the definition of $\Ff_0$ depends on the choice of the simplicial basis. 
Pick an antiholomorphic $\alpha$ and recall that we have a natural smooth isomorphism $\Phi:\Ll_\alpha \to \CC\times \Sigma$ intertwining $\dbar$ acting on $\mC^\infty(\Sigma,\Ll_\alpha)$ with the operator $\dbar+\alpha$ acting on $\mC^\infty(\Sigma)$. Consequently, the smooth isomorphism $\Id\otimes\Phi:\Ff_0\otimes\Ll_\alpha\to \Ff_0$ intertwines $\dbar$ acting on $\mC^\infty(\Sigma,\Ff_0\otimes\Ll_\alpha)$ with $\dbar+\alpha$ acting on $\mC^\infty(\Sigma,\Ff_0)$. Denote the latter operator by $\Dd_\alpha$. The Cauchy kernel $\Dd^{-1}$ of $\Ff_0\otimes\Ll_\alpha$ (if exists) gets to be intertwined with the kernel of the inverse operator $\Dd_\alpha^{-1}$. 

Recall (see Section~\ref{subsec:Cauchy kernel of a line bundle}) that $\Dd^{-1}$ is defined as a meromorphic section of $(\Ff_0\otimes \Ll_\alpha)\boxtimes (K_\Sigma\otimes\Ff_0^\ast\otimes\Ll_\alpha^\ast)$ with a simple pole along the diagonal. Since $\Ff_0\otimes\Ff_0\cong K_\Sigma$ we have
\[
  (\Ff_0\otimes \Ll_\alpha)\boxtimes (K_\Sigma\otimes\Ff_0^\ast\otimes\Ll_\alpha^\ast)\cong (\Ff_0\otimes \Ll_\alpha)\boxtimes (\Ff_0\otimes \Ll_{-\alpha}).
\]
Consequently, $\Dd_\alpha^{-1}$ is a section of $\Ff_0\boxtimes \Ff_0$ with a singularity along the diagonal. The following remark will play an important role in the main proof:

\begin{rem}
  \label{rem:gauge}
  It is straightforward to verify that, given distinct $p_1,\dots, p_m\in \Sigma$ we can identify the expressions
  \begin{equation}
    \label{eq:Dd_vs_Ddalpha}
    \Dd^{-1}(p_1,p_2)\Dd^{-1}(p_2,p_3)\cdot\ldots \cdot \Dd^{-1}(p_m,p_1) = \Dd_\alpha^{-1}(p_1,p_2)\Dd_\alpha^{-1}(p_2,p_3)\cdot\ldots \cdot \Dd_\alpha^{-1}(p_m,p_1)
  \end{equation}
  by treating both of them as holomorphic differentials. To emphasize this fact we call the morphism $\Id\otimes\Phi$ a `gauge' relating $\Dd^{-1}$ with $\Dd_\alpha^{-1}$.
\end{rem}

\section{Compactified free field and family of Cauchy--Riemann operators}
\label{sec:Compactified free field}

\subsection{Schottky double of $\Sigma_0$}
\label{subsec:Schottky double}

Leg $\Sigma_0$ be a Riemann surface with $n+1$ boundary components $B_0,\dots, B_n$ and $g_0$ handles. Denote by $\Sigma_0^\op$ a copy of $\Sigma_0$ with reversed orientation. Put
\[
  \Sigma = \Sigma_0\cup\Sigma_0^\op
\]
with the boundaries of $\Sigma_0$ and $\Sigma_0^\op$ being identified in the natural way. There is a natural conformal structure on $\Sigma$ which is compatible with those on $\Sigma_0$ and $\Sigma_0^\op$. This makes $\Sigma$ a Riemann surface called Schottky double of $\Sigma_0$. The involution $\sigma:\Sigma\to \Sigma$ interchanging $\Sigma_0$ and $\Sigma_0^\op$ is anti-holomorphic. Note that $\partial\Sigma_0$ is precisely the set fixed points of $\sigma$. Denote by $g$ the genus of the Schottky double $\Sigma$. Note that
\[
  g = 2g_0 + n.
\]
We also fix a simplicial basis $A_1,\dots, A_g, B_1,\dots, B_g$ in $H_1(\Sigma, \ZZ)$ (see Section~\ref{subsec:Simplicial basis and normalized differentials}) as follows.
\begin{itemize}
  \item Take $B_1,\dots, B_n$ to be the homology classes represented by boundary components of $\Sigma_0$.
  \item Take $l_1,\dots, l_n$ to be simple paths connecting $B_0$ with $B_1,\dots, B_n$ respectively, and put $A_j = l_j\cup \overleftarrow{\sigma(l_j)},\ j = 1,\dots, n,$ where $\overleftarrow{l}$ is $l$ with reversed orientation.
  \item Complete this by $A_{n+1},\dots, A_{n+g_0}, B_{n+1},\dots, B_{n+g_0}\subset \Sigma_0$ in an arbitrary way and put $A_{n+g_0 + j} = -\sigma_\ast A_{n+j},\ B_{n+g_0+j} = \sigma_\ast B_{n+j},\ j = 1,\dots, g_0$.
\end{itemize}

The notion of Schottky double provides a convenient way to encode boundary conditions of differentials.

\begin{defin}
  \label{defin:bc}
  Let us say that a smooth 1-form $u$ on $\Sigma_0$ satisfies Dirichlet boundary conditions if it admits an smooth extension $u$ to $\Sigma$ such that $\sigma^\ast u = -u$. A differential satisfies Neumann boundary conditions if it admits a smooth extension $u$ to $\Sigma$ satisfying $\sigma^\ast u = u$.
\end{defin}

\subsection{Observables of a compactified free field}
\label{subsec:Observables of a compactified free field}

Let $v$ be a smooth 1-form on $\Sigma_0$ with Neumann boundary conditions and $h = \phi + \int\psi$ be a compactified free field defined as in Section~\ref{subsec:Main results: bordered Riemann surfaces}. Consider the random variable
\begin{equation}
  \label{eq:observable}
  \int_{\Sigma_0}v\wedge dh.
\end{equation}
In order to make a precise sense out of it we can smoothen the Gaussian free field $\phi$ by convoluting it with a smooth modifier; then the integral~\eqref{eq:observable} becomes well-defined and we may take the limit to define the actual random variable. Let
\[
  v = u + df_1 + \ast df_2
\]
be the Hodge decomposition of $v$ on $\Sigma$ (see Section~\ref{subsec:differential_forms}). By the uniqueness of the Hodge decomposition we have
\[
  \sigma^\ast u = u,\qquad \sigma^\ast f_1 = f_2,\qquad \sigma^\ast f_2 = -f_2.
\]
It is easy to see (this can also be taken as a definition)
\begin{equation}
  \label{eq:observable_decomposed}
  \int_{\Sigma_0}v\wedge dh = \int_{\Sigma_0}\phi\,d\ast df_2 + \int_{\Sigma_0}u\wedge \psi.
\end{equation}

Recall that for each $\m = (m_1,\dots, m_n,a_1,\dots, a_{g_0}, b_1,\dots, b_{g_0})\in \ZZ^{n+2g_0}$ we defined the twisted expectation $\E_\m$ and the function $Q_\m$, see Section~\ref{subsec:Main results: bordered Riemann surfaces}. By abusing the notation we denote by $\m$ the harmonic differential on $\Sigma$ satisfying $\sigma^\ast\m = \m$ and such that 
\begin{equation}
  \label{eq:periods_of_m}
  \int_{B_j}\m = m_j,\quad j = 1,\dots, n,\qquad \int_{A_{n+j}}\m = a_j,\ \int_{B_{n+j}}\m = b_j,\quad j = 1,\dots,g_0.
\end{equation}
Note that other periods of $\m$ are determined by the symmetry $\sigma^\ast \m = \m$. Bilinear relations~\eqref{eq:bilinear_relations} imply that
\begin{equation}
  \label{eq:Q_symmetric}
  Q_\m(u) = Q_0(u) + \int_{\Sigma_0} \m\wedge u.
\end{equation}

\begin{defin}
  \label{defin:observable}
  Assume that $\m$ is such that $\E_\m$ is well-defined. Define the function $F_\m$ on the space of smooth 1-form $v$ on $\Sigma_0$ with Neumann boundary conditions by
  \[
    F_\m(v) = \E_\m \exp\left(i\int_{\Sigma_0}v\wedge dh\right).
  \]
\end{defin}

\begin{rem}
  \label{rem:dependence_on_psi0}
  We emphasize that when $\Sigma_0$ is a multiply connected domain we have $\E_0 = \EE$ and $F_0(v)$ is just the characteristic function of $h$. On the other hand, when $g_0>0$ the expectation $\E_\m$ is always an expectation with respect to a sign indefinite measure since the quadratic term $Q_0$ is non-trivial. Moreover, in this case $\E_\m$ depends on the choice of $\psi_0$ only modulo $2\ZZ$: adding an integer differential $\m_1$ to $\psi_0$ is equivalent to replacing $\E_\m$ with $\E_{\m + \tilde\m_1}$ where $\tilde\m_1$ is defined such
  \[
    \begin{split}
      &\int_{A_j}\tilde\m_1 = \int_{B_j}\tilde\m_1 = 0,\qquad j = 1,\dots, n,\\
      &\int_{A_j}\m_1 = \int_{A_j}\tilde\m_1,\ \int_{B_j}\m_1 = \int_{B_j}\tilde\m_1,\ j = n+1,\dots, n+g_0,\\
      &\sigma*\tilde\m_1 = \tilde\m_1.
    \end{split}
  \]
\end{rem}

\subsection{Variation of the observable}
\label{subsec:Variation of the observable}

Following the notation from Section~\ref{subsec:Family of Cauchy--Riemann operators} denote by $\Ff_0$ the spin line bundle on the double $\Sigma$ corresponding to the quadratic function $q_0: H_1(\Sigma,\ZZ)\to \ZZ/2\ZZ$ such that
\begin{equation}
  \label{eq:def_of_q0}
  q_0(A_i) = q_0(B_i) = 0\qquad \forall i
\end{equation}
where $A_1,\dots,A_g,B_1,\dots, B_g$ is the simplicial basis fixed in Section~\ref{subsec:Schottky double}. Given a smooth $(0,1)$-form $\beta$ on $\Sigma$ define
\[
  \Dd_\beta = \dbar + \beta
\]
to be acting on $\mC^\infty(\Sigma,\Ff_0)$ (the space of smooth sections of $\Ff_0$). Applying the Dolbeault decomposition to $\beta$ (see Section~\ref{subsec:differential_forms}) we can write
\[
  \beta = \alpha + \dbar \vphi
\]
where $\alpha$ is anti-holomorphic and $\vphi$ is a complex-valued smooth function. Note that
\begin{equation}
  \label{eq:Dd_beta=eDd_alphae}
  \Dd_\beta = e^{-\vphi} \Dd_\alpha e^\vphi,
\end{equation}
thus the inverting kernel of $\Dd_\beta$ is given by
\begin{equation}
  \label{eq:Dd_beta_inv}
  \Dd_\beta^{-1}(p,q) = e^{\vphi(q) - \vphi(p)}\Dd_\alpha^{-1}(p,q).
\end{equation}
In particular, $\Dd_\beta^{-1}$ exists if and only if the Cauchy kernel $\Dd_\alpha^{-1}$ discussed in Section~\ref{subsec:Family of Cauchy--Riemann operators} exists.

Assume now additionally that $\sigma^\ast\beta = -\bar\beta$. We have $\sigma^\ast \Im\beta = \Im \beta$, that is, $\Im \beta$ restricted to $\Sigma_0$ has Neumann boundary conditions and we can consider $F_\m(2\Im\beta)$. It was shown in~\cite{basok2023dimers} that the logarithmic variation of $F_\m(2\Im \beta)$ has a close relation with the near diagonal asymptotics of the Cauchy kernel of a certain $\dbar$ operator. Let us recall this relation.

Fix a simply connected coordinate chart $U$ on $\Sigma$ with a holomorphic local coordinate $z$. By~\eqref{eq:Dd_beta_inv} and the properties of $\Dd_\alpha^{-1}$ discussed in Section~\ref{subsec:Family of Cauchy--Riemann operators} we can write when $p,q\in U$
\begin{multline}
  \label{eq:Dd_beta_inv_expansion}
  \Dd_\beta(p,q) = \\
  = e^{\vphi(q) - \vphi(p) + 2i\int_p^q\Im\alpha} \left[ \frac{1}{2\pi i(z(q) - z(p))} + R_\alpha(z(q)) + O(z(q) - z(p)) \right]\,dz(p)^{1/2}dz(q)^{1/2}
\end{multline}
where $dz^{1/2}$ is a holomorphic section of $\Ff_0$ over $U$ whose square is equal to $dz$. A straightforward check shows that 
\begin{equation}
  \label{eq:def_of_r}
  r_\alpha(q)\coloneqq R_\alpha(z(q))\,dz(q)
\end{equation}
does not depend on the choice of the coordinate $z$, thus it is a globally defined holomorphic differential on $\Sigma$.

Consider now a compactified free field $h = \phi + \int\psi$. Fix a harmonic differential $\psi_0$ such that $\psi-\psi_0$ is integer. Define
\begin{equation}
  \label{eq:def_of_alpha0}
  \alpha_0 = \pi i\psi_0^{0,1}
\end{equation}
where $\psi_0^{0,1}$ is the $(0,1)$ part of $\psi_0$ (see Section~\ref{subsec:differential_forms}). Note that $\alpha_0$ is an anti-holomorphic 1-form and we have
\begin{equation}
  \label{eq:psi0_vs_alpha0}
  \psi_0 = 2\pi^{-1}\Im\alpha_0.
\end{equation}
Since $\psi_0$ has Dirichlet boundary conditions the form $\alpha_0$ extends to $\Sigma$ in such a way that $\sigma^\ast\alpha_0 = \bar\alpha_0$.

\begin{lemma}
  \label{lemma:Em_exists}
  Assume that $\m$ is a harmonic differential with integer periods and such that $\sigma^\ast\m = \m$. Let $\alpha_\m = \pi i \m^{0,1}$. Then $\EE \exp(\pi i Q_\m(\psi - \psi_0))\neq 0$ if and only if the Cauchy kernel of the line bundle $\Ff_0\otimes \Ll_{\alpha_\m + \alpha_0}$ exists.
\end{lemma}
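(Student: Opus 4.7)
The plan is to compute the partition function $\EE\exp(\pi i Q_\m(\psi-\psi_0))$ explicitly as a Gaussian theta-like sum over the lattice of integer Dirichlet differentials on $\Sigma_0$, identify it with a value of a Riemann theta function on $\Jac(\Sigma)$, and then invoke the classical Riemann vanishing theorem together with Lemma~\ref{lemma:Dd-1_exists}.

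First, write $\psi = \psi_0 + v$ where $v$ ranges over the rank-$g$ lattice $L$ of integer harmonic differentials on $\Sigma_0$ with Dirichlet boundary conditions. Using \eqref{eq:Dirichlet_energy_surface} and \eqref{eq:Q_symmetric}, I would express the partition function (up to a non-zero $v$-independent factor) as
\[
\sum_{v\in L}\exp\left(-\tfrac{\pi}{2}\int_{\Sigma_0}(\psi_0+v)\wedge\ast(\psi_0+v) + \pi i Q_0(v) + \pi i \int_{\Sigma_0}\m\wedge v\right).
\]
Via the Schottky double, each $v\in L$ extends uniquely to an antisymmetric ($\sigma^\ast v = -v$) real harmonic differential on $\Sigma$ with integer periods, and $\int_{\Sigma_0}v\wedge\ast v = \tfrac{1}{2}\int_\Sigma v\wedge\ast v$.

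Second, using the simplicial basis of Section~\ref{subsec:Schottky double}, I would parametrize $v$ by its periods in $\ZZ^g$ and expand the Dirichlet energy. Writing $v = v^{1,0}+\overline{v^{1,0}}$ gives $\int_\Sigma v\wedge\ast v = 2i\int_\Sigma v^{1,0}\wedge\overline{v^{1,0}}$, which the bilinear relations \eqref{eq:bilinear_relations} express as a Hermitian form involving $\Im\Bp$, where $\Bp$ is the period matrix of $\Sigma$. Combining this quadratic part with $\pi i Q_0(v)$ — the extra "off-diagonal" term from $Q_\m$ — produces a complex quadratic form matching $\pi i\Bp$ on the image of $L$. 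The linear terms in $v$ (the cross term $-\pi\int_{\Sigma_0}\psi_0\wedge\ast v$ from the Dirichlet energy and $\pi i\int_{\Sigma_0}\m\wedge v$ from $Q_\m$) assemble, via \eqref{eq:def_of_alpha0} and \eqref{eq:psi0_vs_alpha0}, into the Jacobian coordinate of $\alpha_\m+\alpha_0 \in \Omega_\Sigma^{0,1}/\Lambda$. The outcome is an identity of the form
\[
\EE\exp(\pi i Q_\m(\psi-\psi_0)) = C\cdot\theta[\delta_\m](z_0\,|\,\Bp), \qquad C\neq 0,
\]
where $\delta_\m \in \tfrac{1}{2}\ZZ^{2g}/\ZZ^{2g}$ is the spin characteristic of $\Ff_0\otimes\Ll_{\alpha_\m}$ (depending only on $\m \bmod 2$) and $z_0$ represents $[\alpha_0] \in \Jac(\Sigma)$.

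Finally, by the classical Riemann vanishing theorem (see \cite{Fay, MumfordTata1}), $\theta[\delta](z\,|\,\Bp) = 0$ if and only if $\dim H^0(\Sigma, \Ff_\delta\otimes\Ll_z) > 0$. Combined with Lemma~\ref{lemma:Dd-1_exists} and standard Fredholm theory for $\dbar$ acting on sections of a degree-$(g-1)$ line bundle (giving the converse direction: vanishing of $H^0$ implies invertibility of $\dbar$ and hence existence of the Cauchy kernel), this yields the equivalence claimed in the lemma.

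The main obstacle I expect is the bookkeeping in the second step: matching the Gram matrix of the Dirichlet energy on the rank-$g$ sublattice of antisymmetric integer differentials to $\pi\Im\Bp$ for the Schottky double's period matrix, and correctly identifying the half-integer shifts induced by the parities of the components of $\m$ with the spin characteristic $\delta_\m$. The antiholomorphic involution $\sigma$ forces $\Bp$ into a specific Schottky block structure, and it is precisely this structure that collapses the sum over a rank-$g$ sublattice of $L \subset H^1(\Sigma,\RR) \cong \RR^{2g}$ into a bona fide $g$-variable Riemann theta function on $\Jac(\Sigma)$.
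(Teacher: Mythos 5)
Your route is essentially the paper's: the proof given there consists precisely of (i) identifying $\EE\exp(\pi i Q_\m(\psi-\psi_0))$ with a theta constant evaluated at the point of $\Jac(\Sigma)$ determined by $\alpha_0+\alpha_\m$ (citing~\cite[Lemma~7.2]{basok2023dimers} for the lattice-sum-to-theta computation you sketch, including the role of $Q_0$ in producing the holomorphic quadratic form $\pi i\Bp$) and (ii) applying the Riemann vanishing theorem together with the degree/$H^0$ characterization of Cauchy kernels (Lemma~\ref{lemma:Dd-1_exists} and~\cite[Theorem~9.3]{basok2023dimers}). The bookkeeping you flag as the main obstacle is exactly the content of the cited lemma, so your proposal is a correct unpacking of the same argument rather than a different one.
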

\begin{proof}
  This follows from the expression for $\EE \exp(\pi i Q_\m(\psi - \psi_0))$ obtained in~\cite[Lemma~7.2]{basok2023dimers} and relating it with an appropriate theta constant, and from Riemann theorem on theta divisor, see~\cite[Theorem~9.3]{basok2023dimers}.
\end{proof}

\begin{thmas}
  \label{thmas:variation_of_F}
  Let $\beta = \beta(t)$ be a differentiable family of smooth $(0,1)$-forms on $\Sigma$ such that $\sigma^\ast\beta(t) = -\overline{\beta(t)}$ for each $t$ and $\Dd_{\beta(t)}^{-1}$ exists for each $t$. Let $\beta(t) = \alpha(t) + \dbar\vphi(t)$ be the Dolbeault decomposition. Assume that $\m$ is a harmonic differential with integer periods and such that $\sigma^\ast\m = \m$. Let $\alpha_\m = \pi i\m^{0,1}$, note that $\sigma^\ast\alpha_\m = -\bar\alpha_\m$. We have
  \begin{equation}
    \label{eq:variation_identity}
    \frac{d}{dt}\log F_\m(2\Im\beta(t)) = \int_\Sigma\left( r_{\alpha(t) +\alpha_\m + \alpha_0} + \tfrac{1}{\pi i}\partial\Re\vphi(t) \right)\wedge \dot{\beta}(t)
  \end{equation}
  where $\dot{\beta}$ denotes the derivative with respect to $t$.
\end{thmas}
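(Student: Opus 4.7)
The plan is to evaluate $F_\m(2\Im\beta)$ in closed form by factoring the observable into independent Gaussian and discrete pieces, then to differentiate its logarithm and match the result against the near-diagonal expansion of the Cauchy kernel.

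First, using the Dolbeault decomposition $\beta(t) = \alpha(t) + \dbar\vphi(t)$ with $\alpha$ anti-holomorphic, I compute $2\Im\beta = 2\Im\alpha + d\,\Im\vphi + \ast d\,\Re\vphi$; this is a Hodge-type decomposition of $v = 2\Im\beta$ on the double with harmonic part $u = 2\Im\alpha$ and $f_2 = \Re\vphi$. Substituting into~\eqref{eq:observable_decomposed} and using independence of $\phi$ and $\psi$ factorizes
\[
F_\m(2\Im\beta) \;=\; G(\Re\vphi)\cdot T(\alpha,\m),
\]
where $G(\Re\vphi) = \EE\exp\bigl(i\int_{\Sigma_0}\phi\,d\ast d\Re\vphi\bigr)$ is a Gaussian characteristic function and $T(\alpha,\m)$ is a weighted theta-type sum over integer harmonic differentials. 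Up to a fixed normalization, $G$ is the exponential of a quadratic form in $\Re\vphi$ built from the Dirichlet Green's function of $\Sigma_0$, while $T$ is (by the computation from~\cite{basok2023dimers} used in Lemma~\ref{lemma:Em_exists}) a theta constant on $\Jac(\Sigma)$ evaluated at the Jacobian point corresponding to the flat twist $\Ll_{\alpha+\alpha_\m+\alpha_0}$.

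Differentiating $\log F_\m$ in $t$, the theta factor contributes the classical Fay-type variation formula: its logarithmic derivative along the $t$-family of flat twists equals $\int_\Sigma r_{\alpha(t)+\alpha_\m+\alpha_0}\wedge\dot\alpha(t)$, where $r_\alpha$ is precisely the subleading residue~\eqref{eq:def_of_r} of the Cauchy kernel of the corresponding spin bundle. The Gaussian factor contributes the $\dbar$-exact part of the variation: a first-variation of $\log G$ produces, after integration by parts and doubling to $\Sigma$ with $\sigma^\ast = -\cdot$ symmetry, an expression of the form $\int_\Sigma \frac{1}{\pi i}\partial\Re\vphi\wedge\partial_t\dbar\vphi$; the $\Im\vphi$ piece only contributes a total derivative and drops out.

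Finally, I assemble the two contributions via the gauge identity~\eqref{eq:Dd_beta_inv}, $\Dd_\beta^{-1}(p,q) = e^{\vphi(q)-\vphi(p)}\Dd_\alpha^{-1}(p,q)$. Expanding near the diagonal as in~\eqref{eq:Dd_beta_inv_expansion} shows that the subleading coefficient of $\Dd_\beta^{-1}$ is exactly $r_\alpha+\frac{1}{\pi i}\partial\Re\vphi$; writing $\dot\beta = \dot\alpha + \dbar\dot\vphi$ and collecting the two derivatives produces the right-hand side of~\eqref{eq:variation_identity}. The main obstacle is the bookkeeping of signs and $2\pi i$-normalizations: one must verify that the Dirichlet Green's function of $\Sigma_0$, doubled to $\Sigma$, coincides with the Cauchy kernel $\Dd_{\alpha_0}^{-1}$, so that the reference differential $\psi_0$ is absorbed correctly into the Jacobian shift, and that the Fay-type variation formula for the theta factor appears with the exact normalization matching the definition of $r_\alpha$ in~\eqref{eq:def_of_r}.
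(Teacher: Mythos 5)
Your strategy --- factor $F_\m(2\Im\beta)$ into a Gaussian factor in $\Re\vphi$ and a theta-type sum in $\alpha$ via \eqref{eq:observable_decomposed}, differentiate each factor, and identify the variation of the theta factor with $\int_\Sigma r\wedge\dot\alpha$ through Fay's expansion of the Szeg\H{o}/Cauchy kernel --- is exactly the route taken in the argument that the paper delegates to \cite{basok2023dimers} (the paper's own proof is a two-sentence reduction to Lemma~4.4 and Corollary~7.1 there), so you are reconstructing the cited proof rather than finding a different one. The skeleton is sound: in particular the cross terms $\int_\Sigma r\wedge\dbar\dot\vphi$ and $\int_\Sigma\partial\Re\vphi\wedge\dot\alpha$ both vanish by Stokes (as $r$ and $\dot\alpha$ are closed differentials), so the right-hand side of \eqref{eq:variation_identity} really does split into the two pieces you compute separately.

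Two points need repair. First, the assembly step misstates the near-diagonal expansion: from \eqref{eq:Dd_beta_inv_expansion} the subleading coefficient of $\Dd_\beta^{-1}$ is $r_\alpha+\tfrac{1}{2\pi i}(\partial\vphi+\bar\alpha)$, not $r_\alpha+\tfrac{1}{\pi i}\partial\Re\vphi$; the discrepancy $\tfrac{1}{2\pi i}(\bar\alpha-\partial\bar\vphi)$ only disappears after wedging with $\dot\beta$, integrating over $\Sigma$, and using symmetry. Second, and more seriously, you never invoke the hypothesis $\sigma^\ast\beta=-\bar\beta$, and it is not decorative: it forces $\Re\vphi$ to be $\sigma$-antisymmetric and $\Im\vphi$ to be $\sigma$-symmetric, which is precisely what kills the term $\tfrac1\pi\int_\Sigma\partial\Re\vphi\wedge\dbar\,\Im\dot\vphi$ hidden inside $\tfrac{1}{\pi i}\int_\Sigma\partial\Re\vphi\wedge\dbar\dot\vphi$. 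For a general $\beta$ this term is a nonzero, purely imaginary quantity depending on $\Im\dot\vphi$, whereas the left-hand side of \eqref{eq:variation_identity} cannot depend on $\Im\dot\vphi$ at all, since the exact part $d\,\Im\vphi$ of $2\Im\beta$ drops out of the observable (as you correctly note); without the symmetry your formula would therefore be inconsistent. Finally, the identification of the $t$-derivative of the discrete factor with $\int_\Sigma r_{\alpha+\alpha_\m+\alpha_0}\wedge\dot\alpha$ in the exact normalization \eqref{eq:def_of_r} is where essentially all the content of the theorem sits; labelling it ``classical Fay-type'' and relegating the constants to bookkeeping is the same move the paper makes by citing \cite{basok2023dimers}, but in a self-contained write-up this is the step that must actually be carried out.
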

\begin{rem}
  \label{rem:integration_over_Sigma}
  Note that we have $\sigma^\ast\alpha(t) = -\overline{\alpha(t)}$ and $\sigma^\ast\alpha_0 = \bar\alpha_0$. It can be shown that this implies $\sigma^\ast r_{\alpha(t) +\alpha_\m + \alpha_0} = \bar r_{-\alpha(t) - \alpha_\m + \alpha_0}$. This implies that the phase of the integral on the right-hand side of~\eqref{eq:variation_identity} is non-trivial, thus $F(2\Im\beta(t))$ is not real-valued.
\end{rem}
\begin{proof}
  A straightforward calculation shows that the differential $-2\pi i r_\alpha$ coincides with the differential $W_\alpha$ introduced in the proof of~\cite[Lemma~4.4]{basok2023dimers}. The variation identity now follows from the variation identity given in the proof of~\cite[Lemma~4.4]{basok2023dimers} and the identity given in~\cite[Corollary~7.1]{basok2023dimers}.
\end{proof}

\begin{rem}
  \label{rem:detDelta}
  Fix an arbitrary Hermitian metric on $\Ff_0$ and $\Sigma$ and consider the Laplace operator $\Delta_\beta = \Dd_\beta^\ast\Dd_\beta$. Comparing~\eqref{eq:variation_identity} and the classical Quillen's variation identity~\cite{Quillen} one can show that 
  \[
    \det_\zeta\Delta_{\beta+\alpha_0} = C_0|F_0(2\Im \beta)|^2
  \]
  where $C_0>0$ is some constant independent of $\beta$. Note also that $F_\m(v) = \cst F_0(v + \alpha_\m)$.
\end{rem}

\subsection{Fermionization of correlations of a compactified free field}
\label{subsec:Fermionization of correlations of a compactified free field}

Theorem~\ref{thmas:variation_of_F} allows to give a short proof of determinant expressions for correlations of a compactified free field $h$. In the case of multiply connected domains such expressions appeared in~\cite{nicoletti2025temperleyan} (as one can easily see by using the standard formula for $\Dd^{-1}_\alpha$ in terms of theta functions~\cite[Chapter~VI]{Fay}).

\begin{lemma}
  \label{lemma:variation_of_r}
  Let $\beta(t)$ be a differentiable family of smooth $(0,1)$-forms on $\Sigma$ such that $\Dd_{\beta(t)}^{-1}$ exists for each $t$. Let $\beta(t) = \alpha(t) + \dbar\vphi(t)$ be the Dolbeault decomposition. Assume that $p\in \Sigma$ is such that $p\notin \supp \beta(t)$ for each $t$. Then we have
  \begin{equation}
    \label{eq:ddt_r}
    \frac{d}{dt}(r_{\alpha(t)}(p) + \tfrac{1}{\pi i}\partial \Re\vphi(p)) = -\int_\Sigma\Dd_{\beta(t)}^{-1}(p,q)\Dd_{\beta(t)}^{-1}(q,p)\wedge\dot{\beta}(q,t) + \frac{1}{2\pi i}\overline{\dot{\beta}(p,t)}.
  \end{equation}
  where $\dot{\beta}$ is the derivative of $\beta$ with respect to $t$.
\end{lemma}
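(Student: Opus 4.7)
The approach is standard perturbation theory for the family $\Dd^{-1}_\beta$ combined with an extraction of a regularized diagonal value.

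First, differentiate the operator identity $\Dd_{\beta(t)}\Dd^{-1}_{\beta(t)}=\id$ in $t$ and use $\partial_t\Dd_\beta=\dot\beta$ to get $\partial_t\Dd^{-1}_\beta=-\Dd^{-1}_\beta\dot\beta\Dd^{-1}_\beta$, which in kernel form (with the conventions of Section~\ref{subsec:Cauchy kernel of a line bundle}) reads
\[
\partial_t \Dd^{-1}_\beta(p,q)=-\int_\Sigma\Dd^{-1}_\beta(p,r)\wedge\dot\beta(r,t)\,\Dd^{-1}_\beta(r,q),\qquad p\neq q.
\]

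Next, I realize $r_\alpha$ as a regularized diagonal value of $\Dd^{-1}_\beta$. The expansion~\eqref{eq:Dd_beta_inv_expansion} says that the \emph{dressed} singular kernel
\[
S_\beta(p,q)\coloneqq\frac{e^{\vphi(q)-\vphi(p)+2i\int_p^q\Im\alpha}}{2\pi i(z(q)-z(p))}\,dz(p)^{1/2}dz(q)^{1/2}
\]
differs from $\Dd^{-1}_\beta(p,q)$ by a section that extends continuously to the diagonal with value $r_\alpha(p)$. Differentiating the identity $r_\alpha(p)=\lim_{q\to p}[\Dd^{-1}_\beta(p,q)-S_\beta(p,q)]$ in $t$, computing $\lim_{q\to p}\partial_t S_\beta$ by Taylor-expanding $\vphi(q)-\vphi(p)+2i\int_p^q\Im\alpha$ around $p=q$ (interpreting the limit as a circular average, under which the directional ratio $(\bar z(q)-\bar z(p))/(z(q)-z(p))$ drops out), and substituting the kernel variation of Step~1 yields
\[
\partial_t r_\alpha(p)=-\int_\Sigma\Dd^{-1}_\beta(p,q)\Dd^{-1}_\beta(q,p)\wedge\dot\beta(q,t)\;-\;\tfrac{1}{2\pi i}\bigl[\partial\dot\vphi(p)-\overline{\dot\alpha(p)}\bigr],
\]
with the singular integral read in the principal-value sense associated to this regularization. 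Finally, adding $\partial_t\bigl(\tfrac{1}{\pi i}\partial\Re\vphi(p)\bigr)=\tfrac{1}{2\pi i}\bigl[\partial\dot\vphi(p)+\overline{\dbar\dot\vphi(p)}\bigr]$ to both sides and using $\dot\beta=\dot\alpha+\dbar\dot\vphi$, the two non-integral terms on the right collapse to $\tfrac{1}{2\pi i}\,\overline{\dot\beta(p,t)}$, which is exactly~\eqref{eq:ddt_r}.

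\emph{The main obstacle} is the rigorous treatment of the $q\to p$ limit: the integrand carries a non-integrable $|r-p|^{-2}$ singularity at the merging point $r=p=q$, so it must be interpreted as a Hadamard/principal-value regularization. The cleanest justification uses an $\eps$-cutoff: write $\int_{\Sigma\smm B_\eps(p)}\Dd^{-1}_\beta(p,r)\wedge\dot\beta(r)\,\Dd^{-1}_\beta(r,q)$, apply Stokes' theorem to the divergent contribution coming from the $\eps$-boundary, and match the resulting $\eps\to 0$ expansion with the circular-average computation of $\lim_{q\to p}\partial_t S_\beta$ in Step~3. Once this matching is established, Steps~1--3 fit together into a proof.
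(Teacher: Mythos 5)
Your proposal is correct and follows essentially the same route as the paper: differentiate the resolvent identity to get $\partial_t\Dd^{-1}_\beta=-\Dd^{-1}_\beta\dot\beta\,\Dd^{-1}_\beta$, then read off $\partial_t r_\alpha$ from the near-diagonal expansion~\eqref{eq:Dd_beta_inv_expansion}, and your bookkeeping of the $\partial\dot\vphi$ and $\overline{\dot\alpha}$ terms correctly reassembles into $\tfrac{1}{2\pi i}\overline{\dot\beta(p)}$. The only remark is that the principal-value/Hadamard regularization you flag as the main obstacle is not actually needed here: the hypothesis $p\notin\supp\beta(t)$ for all $t$ forces $\dot\beta$ to vanish in a neighborhood of $p$, so the integral is absolutely convergent, the directional ratio $(\bar z(q)-\bar z(p))/(z(q)-z(p))$ never appears (near $p$ one has $d\dot\vphi+\dot\alpha-\overline{\dot\alpha}=\partial\dot\vphi-\overline{\dot\alpha}$, a $(1,0)$-form, so $\lim_{q\to p}\partial_tS_\beta$ exists without averaging), and the limit and $t$-derivative interchange trivially.
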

\begin{proof}
  General calculus of operators implies that for each $p_1\neq p$ we have
  \[
    \frac{d}{dt}\Dd^{-1}_{\beta(t)}(p_1,p) = -\int_\Sigma \Dd_{\beta(t)}^{-1}(p_1,q)\Dd_{\beta(t)}^{-1}(q,p)\wedge\dot{\beta}(q,t).
  \]
  The lemma now follows from~\eqref{eq:Dd_beta_inv_expansion} and~\eqref{eq:def_of_r}.
\end{proof}

\begin{thmas}
  \label{thmas:determinantal_identity}
  Let $h = \phi + \int\psi$ be a compactified free field and $\psi_0 = 2\pi^{-1}\Im\alpha_0$ be such that $\psi - \psi_0$ is integer. Assume that $\m$ is a harmonic differential with integer periods and such that $\sigma^\ast\m = \m$. Let $\alpha_\m = \pi i\m^{0,1}$ and let $v_1,\dots, v_m$ be smooth 1-forms on $\Sigma_0$ whose supports do not intersect each other and the boundary of $\Sigma_0$. Then we have
  \begin{equation}
    \label{eq:determinantal_identity}
    \E_\m \prod_{j = 1}^m \int_{\Sigma_0}v_j\wedge dh = \int_{\Sigma^m}\det[1_{j\neq k} \Dd^{-1}_{\alpha_0 + \alpha_\m}(p_j,p_k) + 1_{j=k} r_{\alpha_0 + \alpha_\m}(p_j)]\wedge \prod_{j = 1}^m v_j(p_j)
  \end{equation}
  where for $\det[1_{j\neq k} \Dd^{-1}_{\alpha_0 + \alpha_\m}(p_j,p_k) + 1_{j=k} r_{\alpha_0 + \alpha_\m}(p_j)]$ is interpreted as a $(1,0)$-form in each variable by identifying $\Ff_0\otimes\Ff_0$ with the holomorphic cotangent bundle $K_\Sigma$, and $\wedge$ means the wedge product in each variable.
\end{thmas}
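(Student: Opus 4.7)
The plan is to treat $F_\m$ as a (formal) moment generating function and extract the $m$-point moments by Taylor-expanding $\log F_\m$ via the variation formulas above. Given $v_1,\ldots,v_m$ with pairwise disjoint supports bounded away from $\partial\Sigma_0$, extend each $v_j$ to $\Sigma$ with Neumann boundary conditions and write $v_j=2\Im\beta_j$ for a smooth $(0,1)$-form $\beta_j$ on $\Sigma$ with $\sigma^\ast\beta_j=-\overline{\beta_j}$; the supports $\supp\beta_j$ remain pairwise disjoint. Setting $\beta(t)=\sum_{j=1}^m t_j\beta_j$ and writing $F_\m=\exp(\log F_\m)$ gives
\[
  i^m\,\E_\m\prod_{j=1}^m\int_{\Sigma_0}v_j\wedge dh
  \;=\;\partial_{t_1}\cdots\partial_{t_m}F_\m(2\Im\beta(t))\big|_{t=0}
  \;=\;\sum_{\pi\vdash\{1,\ldots,m\}}\prod_{B\in\pi}\partial^{(B)}(\log F_\m)\big|_{t=0},
\]
the last equality being the standard moment-cumulant expansion (iterated Leibniz rule for $e^G$).

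The key step is to identify each ``cumulant'' $\partial^{(B)}(\log F_\m)|_0$ as the contribution of a single cycle in the determinantal expansion. For $B=\{j\}$, Theorem~\ref{thmas:variation_of_F} at $t=0$ (where $\vphi(0)=0$) gives $\int_\Sigma r_{\alpha_0+\alpha_\m}\wedge\beta_j$. For $B=\{j_1,j_2\}$, differentiating once more and applying Lemma~\ref{lemma:variation_of_r} yields
\[
  \partial_{t_{j_2}}\partial_{t_{j_1}}(\log F_\m)\big|_0
  =-\int_{\Sigma^2}K(p,q)K(q,p)\wedge\beta_{j_2}(q)\wedge\beta_{j_1}(p)+\int_\Sigma\tfrac{1}{2\pi i}\overline{\beta_{j_2}(p)}\wedge\beta_{j_1}(p),
\]
with $K\coloneqq\Dd^{-1}_{\alpha_0+\alpha_\m}$; the last integral vanishes by disjointness of supports. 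Beyond this point the gauge piece $\tfrac{1}{2\pi i}\overline{\dot\beta}$ is $t$-independent and drops out under further differentiation, while all remaining $t$-dependence sits inside $\Dd^{-1}_{\tilde\beta(t)}$ (with $\tilde\beta(t)=\beta(t)+\alpha_\m+\alpha_0$) and is controlled by the clean resolvent identity $\partial_t\Dd^{-1}_{\tilde\beta}(p,q)=-\int\Dd^{-1}_{\tilde\beta}(p,r)\Dd^{-1}_{\tilde\beta}(r,q)\wedge\dot\beta(r)$; each differentiation splices a new vertex into the growing cycle of kernels. An easy induction on $|B|=b\geq 2$ gives
\[
  \partial^{(B)}(\log F_\m)\big|_0=(-1)^{b-1}\sum_{c\in\mathrm{Cyc}(B)}\int_{\Sigma^b}K(p_{c_1},p_{c_2})\cdots K(p_{c_b},p_{c_1})\wedge\prod_{j\in B}\beta_j(p_j),
\]
where $\mathrm{Cyc}(B)$ denotes the set of cyclic orderings of $B$.

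To conclude, recall that every permutation $\sigma\in S_m$ decomposes uniquely into a set partition (its cycle partition) together with a cyclic ordering of each block, and $\mathrm{sgn}(\sigma)=\prod_C(-1)^{|C|-1}$. Matching fixed points of $\sigma$ to the diagonal entries $r(p_j)$ and longer cycles to products of off-diagonal $K$'s rewrites $\sum_\pi\prod_B\partial^{(B)}(\log F_\m)|_0$ as the expansion of the determinant in the statement, wedged with $\prod_j\beta_j(p_j)$. Finally, since $r$ and $K$ are of type $(1,0)$ in each argument, $(\,\cdot\,)\wedge\overline{\beta_j}=0$ and hence $\int(\,\cdot\,)\wedge\beta_j=i\int(\,\cdot\,)\wedge v_j$; the resulting factor of $i^m$ cancels the $i^m$ on the left-hand side, yielding the identity.

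The main subtlety is the induction in the second paragraph: one must check that no further gauge pieces (analogous to $\tfrac{1}{2\pi i}\overline{\dot\beta}$ or $\tfrac{1}{\pi i}\partial\Re\vphi$) appear beyond order two. This holds because, once the second-order variation has been extracted, the remaining $t$-dependence sits entirely inside $\Dd^{-1}_{\tilde\beta(t)}$, whose variation is governed cleanly by the resolvent identity above; the disjoint-support hypothesis is invoked only once, to annihilate the single gauge contribution that appears at order two, after which the entire expansion is combinatorial.
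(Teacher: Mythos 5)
Your proposal is correct and follows essentially the same route as the paper: differentiate $F_\m(t_1v_1+\dots+t_mv_m)$ at $t=0$, apply the variation identities of Theorem~\ref{thmas:variation_of_F} and Lemma~\ref{lemma:variation_of_r} repeatedly, and reassemble the resulting cycle sums into the determinant before trading $\beta_j$ for $v_j$ using that $(1,0)\wedge(1,0)=0$. In fact your write-up supplies the combinatorial details (moment--cumulant expansion, cycle decomposition of permutations, and the observation that the only gauge term $\tfrac{1}{2\pi i}\overline{\dot\beta}$ dies by disjointness of supports) that the paper compresses into the phrase ``repeatedly apply the variation identities.''
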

\begin{proof}
  Note that
  \begin{equation}
    \label{eq:crl1}
    \E_\m  \prod_{j = 1}^m \int_{\Sigma_0}v_j\wedge dh = (-i)^m\frac{\partial^m}{\partial t_1\partial t_2\ldots \partial t_m} F_\m(t_1v_1 + \ldots + t_mv_m)\vert_{t=0}.
  \end{equation}
  Let $\beta_j = iv_j^{0,1}$, so that we have
  \[
    v_j = 2\Im \beta_j.
  \]
  Note that $\supp\beta_j\subset \supp v_j$. Applying Dolbeault decomposition we can write
  \[
    \beta_j = \alpha_j + \dbar \vphi_j.
  \]
  In order to evaluate~\eqref{eq:crl1} we can repeatedly apply the variation identities~\eqref{eq:variation_identity} and~\eqref{eq:ddt_r}. This yields
  \[
    \E_\m  \prod_{j = 1}^m \int_{\Sigma_0}v_j\wedge dh = (-i)^m\int_{\Sigma^m}\det[1_{j\neq k} \Dd^{-1}_{\alpha_0 + \alpha_\m}(p_j,p_k) + 1_{j=k} r_{\alpha_0 + \alpha_\m}(p_j)]\wedge \prod_{j = 1}^m \beta_j(p_j).
  \]
  Note that we can replace $\beta_j$ with $2i\Im\beta_j$ as the wedge product of two $(1,0)$ forms is zero by convention. This implies~\eqref{eq:determinantal_identity}.
\end{proof}

We have the following immediate corollary:

\begin{cor}
  \label{cor:centered}
  Let $h$, $\E_\m $ and $v_1,\dots, v_m$ be as in Theorem~\ref{thmas:determinantal_identity}. Then
  \[
    \E_\m  \prod_{j = 1}^m \int_{\Sigma_0}v_j\wedge d(h - \E_\m h) = \int_{\Sigma^m}\det[1_{j\neq k} \Dd^{-1}_{\alpha_0 + \alpha_\m}(p_j,p_k)]\wedge \prod_{j = 1}^m v_j(p_j).
  \]
\end{cor}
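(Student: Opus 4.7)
The plan is to deduce this from Theorem~\ref{thmas:determinantal_identity} via a short combinatorial argument. Let $X_j = \int_{\Sigma_0} v_j\wedge dh$ and set $\mu_j = \E_\m X_j$. Applying the $m=1$ case of Theorem~\ref{thmas:determinantal_identity}, the determinant reduces to its single diagonal entry, giving
\[
  \mu_j = \int_\Sigma r_{\alpha_0+\alpha_\m}(p)\wedge v_j(p).
\]
Thus the diagonal terms in the determinantal formula of Theorem~\ref{thmas:determinantal_identity} are precisely the one-point expectations, which strongly suggests that after centering they should cancel out exactly.

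The key algebraic ingredient is the expansion of a matrix determinant by the fixed-point set of the summation permutations. Writing $A_{jk} = 1_{j\neq k}\Dd_{\alpha_0+\alpha_\m}^{-1}(p_j,p_k) + 1_{j=k} r_{\alpha_0+\alpha_\m}(p_j)$ and grouping the terms of $\det A = \sum_\sigma \mathrm{sgn}(\sigma)\prod_j A_{j,\sigma(j)}$ by the fixed-point set $F$ of $\sigma$, one gets the identity
\[
  \det A \;=\; \sum_{F\subseteq [m]} \Bigl(\prod_{j\in F} r_{\alpha_0+\alpha_\m}(p_j)\Bigr)\cdot \det\!\bigl[1_{j\neq k}\Dd^{-1}_{\alpha_0+\alpha_\m}(p_j,p_k)\bigr]_{j,k\in [m]\smm F}.
\]
Integrating against $\bigwedge_j v_j(p_j)$ and invoking Theorem~\ref{thmas:determinantal_identity}, one obtains
\[
  \E_\m\prod_j X_j \;=\; \sum_{F\subseteq[m]} \prod_{j\in F}\mu_j \cdot I_{[m]\smm F},
\]
where $I_S$ denotes the right-hand side of the desired corollary restricted to indices in $S$ (and $I_\emptyset = 1$).

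To conclude, I would expand $X_j = \mu_j + (X_j-\mu_j)$ inside the product on the left, taking expectations:
\[
  \E_\m\prod_j X_j \;=\; \sum_{F\subseteq[m]} \prod_{j\in F}\mu_j \cdot \E_\m\!\prod_{j\in [m]\smm F}(X_j - \mu_j).
\]
Comparing this with the previous display and inducting on $m$ (the $m=0$ case being trivial $1=1$), the contributions with $F\neq\emptyset$ match block by block, forcing $\E_\m\prod_j (X_j-\mu_j) = I_{[m]}$, which is exactly the assertion of the corollary. The whole argument is purely combinatorial; the only point worth care is the interpretation of the matrix entries as $(1,0)$-forms and the wedge product in each variable, but this is already fixed by the statement of Theorem~\ref{thmas:determinantal_identity}. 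I do not anticipate any substantial obstacle beyond bookkeeping.
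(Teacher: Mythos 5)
Your argument is correct, and it is the standard justification for what the paper states without proof as an ``immediate corollary'': the expansion of $\det[1_{j\neq k}\Dd^{-1}(p_j,p_k)+1_{j=k}r(p_j)]$ over fixed-point sets of the permutations (using that the zero-diagonal determinant picks out exactly the fixed-point-free permutations), the identification of the diagonal entry integrals with the one-point means via the $m=1$ case, and the matching induction against the multilinear expansion of $\prod_j\bigl(\mu_j+(X_j-\mu_j)\bigr)$. All steps check out, including the factorization of the integral over $\Sigma^m$ into the blocks indexed by $F$ and its complement, so nothing further is needed.
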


\begin{cor}
  \label{cor:correlations}
  Let $h = \phi + \int\psi$ and $\E_\m $ be such as in Theorem~\ref{thmas:determinantal_identity}. Let $p_1,q_1,\dots, p_m,q_m\in \Sigma_0$ be distinct points and $\gamma_1,\dots,\gamma_m\subset \Sigma_0$ be disjoint paths such that $\gamma_j$ connects $p_j$ and $q_j$ and oriented towards $p_j$. Write formally
  \[
    h(p_j) - h(q_j) = \phi(p_j) - \phi(q_j) + \int_{\gamma_j}\psi,\qquad h_0 = h - \E_\m h.
  \]
  Put $\gamma_j^+ = \gamma_j$ and $\gamma_j^- = \sigma(\overleftarrow{\gamma_j})$ where the arrow indicates that the orientation is reversed. Then we have
  \[
    \begin{split}
      &\E_\m \prod_{j = 1}^m(h(p_j) - h(q_j)) = (-1)^m\sum_{s\in \{ \pm \}^m}\int_{\gamma_1^{s_1}}\dots\int_{\gamma_m^{s_m}}\det[1_{j\neq k} \Dd^{-1}_{\alpha_0 + \alpha_\m}(p_j,p_k) + 1_{j=k} r_{\alpha_0 + \alpha_\m}(p_j)],\\
      &\E_\m \prod_{j = 1}^m(h_0(p_j) - h_0(q_j)) = (-1)^m\sum_{s\in \{ \pm \}^m}\int_{\gamma_1^{s_1}}\dots\int_{\gamma_m^{s_m}}\det[1_{j\neq k} \Dd^{-1}_{\alpha_0 + \alpha_\m}(p_j,p_k)].
    \end{split}
  \]
\end{cor}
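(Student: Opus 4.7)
The plan is to obtain both identities by applying Theorem~\ref{thmas:determinantal_identity} and Corollary~\ref{cor:centered} to a sequence of smooth test 1-forms $v_j^\epsilon$ approximating the 1-currents dual to the paths $\gamma_j$, and then letting the approximation sharpen. The sum over $s\in\{\pm\}^m$ in the corollary will appear by splitting each integration variable $p_j\in\Sigma$ according to whether it lives on $\Sigma_0$ (yielding the contribution $\gamma_j^+$) or on $\Sigma_0^\op$ (yielding $\gamma_j^-$), exploiting the fact that the determinantal kernel in Theorem~\ref{thmas:determinantal_identity} is of type $(1,0)$ in each variable.

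Concretely, for each $j$ take a tubular coordinate chart $(s,t)\in [0,L_j]\times(-\delta,\delta)$ around $\gamma_j$ inside the interior of $\Sigma_0$, disjoint from the other $\gamma_k$'s and from $\partial\Sigma_0$, with $\{t=0\}$ parametrising $\gamma_j$ from $q_j$ to $p_j$. For mollifiers $\chi_\epsilon$ and $\rho_\epsilon$ set $v_j^\epsilon\coloneqq \chi_\epsilon(s)\rho_\epsilon(t)\,dt$ and extend it to $\Sigma$ by Neumann reflection, $\sigma^\ast v_j^\epsilon = v_j^\epsilon$. A direct computation gives $dv_j^\epsilon\to \delta_{q_j}-\delta_{p_j}$ as currents, and Stokes' theorem then yields $\int_{\Sigma_0} v_j^\epsilon\wedge df\to -(f(p_j)-f(q_j))$ for every smooth $f$. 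Applying Theorem~\ref{thmas:determinantal_identity} to these pairwise disjoint test forms, and passing to the limit on the left-hand side via the standard renormalisation argument for the GFF component of $h$ (the smooth component $\int\psi$ causes no issue), one obtains that the left-hand side tends to $(-1)^m\E_\m\prod_j(h(p_j)-h(q_j))$.

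For the right-hand side, the integrand is a holomorphic $(1,0)$-form in each variable, so only the $(0,1)$-component of each $v_j^\epsilon$ contributes to the wedge product. Splitting each $p_j$-integration as $\int_\Sigma = \int_{\Sigma_0} + \int_{\Sigma_0^\op}$ independently produces $2^m$ terms indexed by $s\in\{\pm\}^m$. A local computation in the coordinates $(s,t)$ shows that the $\Sigma_0$-contribution converges to the path integral $\int_{\gamma_j}(\cdot)$, while the $\Sigma_0^\op$-contribution, taking into account that the complex structure of $\Sigma_0^\op$ swaps $(1,0)$ and $(0,1)$ under $\sigma$ and that the surface orientation inherited from $\Sigma$ is opposite to that of $\Sigma_0$, produces $\int_{\sigma(\overleftarrow{\gamma_j})}(\cdot) = \int_{\gamma_j^-}(\cdot)$. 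Combining the two limits yields the first identity. The centered version follows by the identical argument applied to Corollary~\ref{cor:centered} in place of Theorem~\ref{thmas:determinantal_identity}, since the determinantal structure is the same but without the diagonal $r$-terms.

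The main technical obstacle is the $\epsilon\to 0$ limit of the left-hand side: one must justify $\int v_j^\epsilon\wedge dh\to -(h(p_j)-h(q_j))$ jointly, in $L^k$ for all $k$, and then interchange this limit with the possibly sign-indefinite functional $\E_\m$. The harmonic part $\int\psi$ is harmless, and for the GFF part $\phi$ this is a standard renormalisation statement provided the supports of the $v_j^\epsilon$ are disjoint and stay at a definite distance from $\partial\Sigma_0$, both built into our construction; since $\E_\m$ is a bounded linear functional with finite moments of exponential observables (by Lemma~\ref{lemma:Em_exists} and the Gaussian structure of $\phi$), the limit commutes with $\E_\m$. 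A secondary, purely bookkeeping obstacle is the consistent tracking of complex-structure and orientation conventions on $\Sigma_0^\op$ so as to confirm that the reflected contribution is exactly $\int_{\sigma(\overleftarrow{\gamma_j})}$ and that all stray signs collapse into the single overall factor $(-1)^m$.
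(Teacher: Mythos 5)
Your proposal is correct and follows essentially the same route as the paper: approximate the path integrals by smooth Neumann-symmetric test $1$-forms concentrating on the $\gamma_j$, apply Theorem~\ref{thmas:determinantal_identity} and Corollary~\ref{cor:centered}, split each $\Sigma$-integral into its $\Sigma_0$ and $\Sigma_0^{\op}$ pieces to produce the sum over $s\in\{\pm\}^m$ with the reflected, orientation-reversed paths $\gamma_j^-$, and pass to the limit. The only differences are cosmetic (a tubular-neighbourhood mollifier versus the paper's $H_k\,df$ built from a defining function of the curve, and where in the computation the overall $(-1)^m$ is booked), and your sign conventions are internally consistent with the stated identity.
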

\begin{proof}
  It is possible to construct a sequence of smooth 1-forms $(v_{\gamma_1}^i,\dots,v_{\gamma_m}^i)_{i>0}$ such that 
  \begin{itemize}
    \item For each $j = 1,\dots, m$ the supports of $v_{\gamma_j}^1,v_{\gamma_j}^2,\dots$ concentrate around $\gamma_j$.
    \item For each smooth 1-form $v$ and $j = 1,\dots, m$ we have $\lim\limits_{i\to \infty} \int_{\Sigma_0} v_{\gamma_j}^i\wedge v = \int_{\gamma_j}v$.
    \item We have $\lim\limits_{i\to\infty}\E_\m \prod_{j = 1}^m\int_{\Sigma_0}v_{\gamma_j}^i\wedge dh = \E_\m \prod_{j = 1}^m(h(p_j) - h(q_j))$.
  \end{itemize}
  Once we constructed this sequence we can extend each $v_{\gamma_j}^i$ to $\Sigma$ such that $\sigma^\ast v_{\gamma_j}^i = v_{\gamma_j}^i$ and apply Theorem~\ref{thmas:determinantal_identity} and Corollary~\ref{cor:centered} and to get the result. Note that if $v$ is a smooth 1-form on $\Sigma$ then
  \[
    \int_{\Sigma_0^\op} v_{\gamma_j}^i\wedge v = -\int_{\Sigma_0} v_{\gamma_j}^i\wedge \sigma^\ast v \xrightarrow[i\to \infty]{} -\int_{\gamma_j} \sigma^\ast v = \int_{\sigma(\overleftarrow{\gamma_j})}v
  \]
  which explains the orientation reversal in the definition of $\gamma_j^-$.

  Let us construct a sequence $v_\gamma^i$ for a given smooth path $\gamma$. We can break path into small segments and construct a sequence for each of them, thus we can work in a coordinate chart, that is, we can assume that $\gamma$ is a path in $\CC$ and work there. By the implicit function theorem one can find a smooth real-valued function $f$ such that $\nabla f\neq 0$ in a neighborhood of $\gamma$ and $\gamma\subset \{z\colon f(z) = 0\}$. Since $\nabla f$ evaluated at a point on $\gamma$ is a normal vector to $\gamma$ it is possible to find a measure $\mu$ supported on $\gamma$ such that for each 1-form $v = v_1\,dx + v_2\,dy$ we have
  \[
    \int_\CC (f_xv_2 - f_y v_1)\,d\mu = \int_\gamma v.
  \]
  Let $\rho$ be a smooth non-negative function with a compact support such that $\int_\CC\rho\,dx\wedge dy = 1$. Define
  \[
    \rho_k(z) = k^2\rho(kz),\qquad H_k(z) = \int_\gamma \rho_k(z-w)\,d\mu(w)
  \]
  and put
  \[
    v_\gamma^k = H_k\,df.
  \]
  It is easy to verify that $v_\gamma^k$ satisfy the desired properties.
\end{proof}

\section{Proof of the main result}
\label{sec:proof}

Before we dive into the details, let us give a quick summary of the proof. We will prove only Theorem~\ref{thma:surfaces} as Theorem~\ref{thma:domains} is a particular case of the former. Given $p\in \Sigma_0$ define $p^{[+]} = p$ and $p^{[-]} = \sigma(p)$. First, we use an abstract argument to find a gauge relating $\sqrt{\omega^{[\pm]}(p^{[\pm]})}f^{[\pm,\pm]}(p^{[\pm]},q^{[\pm]})\sqrt{\omega^{[\pm]}(q^{[\pm]})}$ with the Cauchy kernel of a line bundle $\Ff_0\otimes \Ll_{\alpha_0}$ on the double $\Sigma$, where $\alpha_0$ is an unknown anti-holomorphic differential. Then we use Theorem~\ref{thmas:determinantal_identity} and its corollaries to deduce that the determinantal identities involving $f^{[\pm,\pm]}$ define correlations of the compactified free field with the shift $\psi_0 = 2\pi^{-1}\Im\alpha_0$.

Let us now dive into details. In what follows we will use the notation introduced in Section~\ref{subsec:Main results: bordered Riemann surfaces} and the assumptions made in Theorem~\ref{thma:surfaces}. Without loss of generality we can assume that $\omega$ does not vanish on $\Sigma_0\smm\{ x_1,\dots,x_N \}$ (otherwise we can just add its zeros to $\{ x_1,\dots, x_N \}$). Recall that $\Sigma$ denotes the double of $\Sigma_0$ and $\sigma:\Sigma\to \Sigma$ is the associated involution (see Section~\ref{subsec:Schottky double}). Let us put $x_{N+j} = \sigma(x_j)$. Given $p,q\in \Sigma\smm(\partial\Sigma_0\cup \{ x_1,\dots, x_{2N} \}),\ p\neq q$, define
\begin{equation}
  \label{eq:f_and_omega}
  f(p,q) = \begin{cases}
    f^{[+,+]}(p,q),\quad p,q\in \Sigma_0,\\
    -f^{[+,-]}(p,\sigma(q)),\quad p\in \Sigma_0, q\in \sigma(\Sigma_0),\\
    f^{[-,+]}(\sigma(p),q),\quad p\in \sigma(\Sigma_0), q\in \Sigma_0,\\
    -f^{[-,-]}(\sigma(p),\sigma(q)),\quad p,q\in \sigma(\Sigma_0),
  \end{cases}
  \qquad \omega(p) = \begin{cases}
    \omega^{[+]}(p),\quad p\in \Sigma_0,\\
    \omega^{[-]}(\sigma(p)),\quad p\in \sigma(\Sigma_0).
  \end{cases}
\end{equation}
Then $f(p,q)$ is a holomorphic multivalued function in both variables and $\omega$ is a holomorphic multivalued differential. Note that
\begin{equation}
  \label{eq:sigmaf}
  f(\sigma(p),\sigma(q)) = -\overline{f(p,q)}
\end{equation}
and, by the assumption on $U_2$, when $p,q$ are close we have
\begin{equation}
  \label{eq:f_near_diag}
  f(p,q)f(q,p)\omega(p)\omega(q) = \frac{dz(p)dz(q)}{4\pi^2(z(p)- z(q))^2} + O(1)
\end{equation}
where $z$ is any local coordinate.

\begin{lemma}
  \label{lemma:extension}
  Let $p_1,p_2\in \Sigma\smm(\partial\Sigma_0\cup \{ x_1,\dots, x_{2N} \})$ be two distinct points. Then $f(p_1,p)f(p,p_2)\omega(p)$ extends to $\Sigma\smm\{ p_1,p_2 \}$ as a single valued holomorphic differential with simple poles at $p_1,p_2$.
\end{lemma}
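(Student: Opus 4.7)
The plan is to realise $h(p):=f(p_1,p)f(p,p_2)\omega(p)$ as one of the two roots of an explicit quadratic polynomial whose coefficients are components of $U_2$ and $U_3$, and hence to inherit extension properties from the assumptions of Theorem~\ref{thma:surfaces}. This is the ``square root trick'' alluded to in Section~\ref{subsec:Discussion of the proof}.

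First I will record the algebraic structure. Set $h_{12}(p):=f(p_2,p)f(p,p_1)\omega(p)$, $B(p,q):=f(p,q)f(q,p)\omega(p)\omega(q)$, and let $F^{+++}(p_1,p,p_2)$ denote the pure $(1,0)$-in-every-variable component of the harmonic tri-form $U_3(p_1,p,p_2)$. By assumption~(3) this coincides with the $s=(+,+,+)$ summand of~\eqref{eq:Um_surface}, and a direct expansion of the $3{\times}3$ determinant (with zero diagonal) yields
\begin{equation*}
F^{+++}(p_1,p,p_2)\;=\;a\,h(p)+b\,h_{12}(p),\qquad h(p)\,h_{12}(p)\;=\;\frac{B(p_1,p)\,B(p,p_2)}{\omega(p_1)\omega(p_2)},
\end{equation*}
where $a:=f(p_2,p_1)\omega(p_1)\omega(p_2)$ and $b:=f(p_1,p_2)\omega(p_1)\omega(p_2)$ are independent of $p$. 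Hence $h$ and $h_{12}$ are the two roots of $ax^2-F^{+++}x+bQ=0$ with $Q:=h\,h_{12}$, and the discriminant of this quadratic is the perfect square $(ah-bh_{12})^2$.

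Next I will show that $B$ and $F^{+++}$ are themselves single-valued and extend holomorphically across $\partial\Sigma_0$, across the punctures $x_j$ and across their reflections $\sigma(x_j)$. Single-valuedness follows because the bi- and tri-type Hodge decomposition commutes with monodromy, so single-valuedness of $U_2, U_3$ (assumption~(1)) transfers to their $(1,0)^k$-components. Extension across $\partial\Sigma_0$ is by Schwarz reflection applied variable-by-variable to the harmonic forms $U_2, U_3$, whose Dirichlet boundary conditions (assumption~(4)) make them $\sigma$-antisymmetric and continuous up to the totally real boundary. Extension across the punctures uses Riemann's removable-singularity theorem together with the assumed continuity of $U_2$ on $\Sigma_0^2\smm\diag$ and of $U_3$ on $\Sigma_0^3$, which forces the coefficients of $B$ and $F^{+++}$ to be bounded in local trivialisations near $x_j$.

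Because $h$ and $h_{12}$ carry exactly the same multiplicative monodromy in $p$ (both are of the shape $f(\cdot,p)\,f(p,\cdot)\,\omega(p)$, differing only by which of $p_1,p_2$ appears in which slot), the single-valuedness of $F^{+++}=ah+bh_{12}$ forces this common character to be trivial for generic $p_1,p_2$, so that $h$ and $h_{12}$ are already single-valued on $\Sigma_0'\smm\{p_1,p_2\}$. The sign of the square root in the quadratic formula can then be fixed consistently with $ah-bh_{12}$ on this set, and the resulting formula $h=(F^{+++}+(ah-bh_{12}))/(2a)$ extends $h$ to a single-valued holomorphic differential on $\Sigma\smm\{p_1,p_2\}$; the simple poles at $p_1,p_2$ come from the double poles of $Q$ there, which themselves follow from the prescribed diagonal asymptotic of $B$ in assumption~(4). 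The main obstacle is the Schwarz-reflection step: one has to verify that reflecting the multi-variable harmonic forms $U_2, U_3$ across $\partial\Sigma_0$ one variable at a time is compatible with the pure $(1,0)^k$ type-projection, so that the holomorphic extension of $F^{+++}$ thus produced genuinely analytically continues the original $(1,0)^3$-component.
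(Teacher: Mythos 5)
Your algebra is exactly the paper's: your $B$ is $\Aa_2=U_2^{(1,0),(1,0)}$, your $F^{+++}$ is $\Aa_3=U_3^{(1,0),(1,0),(1,0)}$, your identity $(ah-bh_{12})^2=(F^{+++})^2-4abQ$ is precisely the relation $\wtd\Aa_3^2=\Aa_3^2-4\Aa_2(p_1,p)\Aa_2(p,p_2)\Aa_2(p_1,p_2)$ on which the paper's square root trick rests, and extending $B,F^{+++}$ to the double by antisymmetric reflection is also the paper's first step. However, the step you dispose of in one sentence --- ``the sign of the square root can be fixed consistently \ldots{} and the resulting formula extends $h$'' --- is where the paper's proof does its real work, and as written there is a genuine gap. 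Knowing that the discriminant extends holomorphically to $\Sigma\smm\{p_1,p_2\}$ does not by itself yield a single-valued holomorphic extension of its square root $ah-bh_{12}$: one must rule out odd multiplicity in the zero divisor of $\Aa_3^2-4\Aa_2\Aa_2\Aa_2$ (and, relatedly, a sign jump of $ah-bh_{12}$ between the two sides of the real hypersurface $\partial\Sigma_0$, which is not excluded by continuity of its square). Since this expression is a perfect square on the dense open set $(\Sigma\smm(\partial\Sigma_0\cup\{x_1,\dots,x_{2N}\}))^3$, any odd part of its divisor must sit entirely over the bad set; the paper kills such ``vertical'' components $\{p_0\}\times\Sigma\times\Sigma$ by letting one of the remaining arguments collide with $p_0$, where the prescribed quadratic pole of $\Aa_2$ dominates and makes the expression nonzero. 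Nothing in your write-up replaces this argument, and it is the essential content of the lemma.

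A secondary problem: you derive single-valuedness of $h$ and $h_{12}$ from single-valuedness of $F^{+++}=ah+bh_{12}$, which fails precisely when $F^{+++}(p_1,\cdot,p_2)\equiv 0$. This is not an exotic case --- odd correlations of a centered Gaussian field vanish, so $U_3\equiv 0$ is the expected situation in the simply connected setting. The robust source of triviality of the common character $\chi_1\chi_2\chi_\omega$ is $U_2$: the product $f(p,q)f(q,p)\omega(p)\omega(q)$ is single-valued by assumption and cannot vanish identically because of its quadratic pole on the diagonal. By contrast, the compatibility of reflection with the $(1,0)^k$ type projection, which you single out as the main obstacle, is comparatively routine; the even-multiplicity argument is the part you cannot omit.
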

\begin{proof}
  Since both $U_2,U_3$ have Dirichlet boundary conditions in each variable, we can extend them to $\Sigma$ as harmonic differentials  such that $U_m(\ldots, \sigma(p_j),\ldots) = -U_m(\ldots, p_j, \ldots)$ for each $m = 2,3$ and $j = 1,\dots, m$, where the rest of the arguments are fixed. Define
  \[
    \Aa_2 = U_2^{(1,0), (1,0)}, \qquad \Aa_3 = U_3^{(1,0),(1,0),(1,0)},
  \]
  that is, $\Aa_m$ is the $(1,0)$ component of $U_m$. Note that $\Aa_2$ and $\Aa_3$ are meromorphic bi- and tri-differentials respectively. The bi-differential $\Aa_2$ is defined on $\Sigma^2\smm\diag$ with and has a quadratic pole along the diagonal, while $\Aa_3$ is well-defined and holomorphic on $\Sigma^3$ by assumptions from Theorem~\ref{thma:surfaces}. On the other hand, the definitions of $f$ and $\omega$ imply that
  \[
    \begin{split}
      \Aa_2(p_1,p_2) &= f(p_1,p_2)f(p_2,p_1)\cdot \omega(p_1)\omega(p_2),\\
      \Aa_3(p_1,p_2,p_3) &= \Big[f(p_1,p_2)f(p_2,p_3)f(p_3,p_1) + f(p_1,p_3)f(p_3,p_2)f(p_2,p_1)\Big]\cdot\omega(p_1)\omega(p_2)\omega(p_3).
    \end{split}
  \]
  Define
  \[
    \wtd\Aa_3(p_1,p_2,p_3) = \Big[f(p_1,p_2)f(p_2,p_3)f(p_3,p_1) - f(p_1,p_3)f(p_3,p_2)f(p_2,p_1)\Big]\cdot\omega(p_1)\omega(p_2)\omega(p_3).
  \]
  A priori $\wtd\Aa_3$ is a tri-differential defined on $(\Sigma\smm(\partial\Sigma_0\cup\{ x_1,\dots,x_{2N} \}))^3\smm\diag$ with simple poles along the diagonals. However, we can write
  \begin{equation}
    \label{eq:exe1}
    \wtd\Aa_3(p_1,p_2,p_3)^2 = \Aa_3(p_1,p_2,p_3)^2 - 4\Aa_2(p_1,p_2)\Aa(p_2,p_3)\Aa(p_1,p_3)
  \end{equation}
  which implies that $\wtd\Aa_3^2$ extends to the whole $\Sigma^3\smm\diag$. Let us argue that $\wtd\Aa_3$ itself extends as well. To this end it is enough to show that the zero locus of the right-hand side of~\eqref{eq:exe1} has even multiplicity. Let $\Zz_\odd\subset \Sigma^3$ be the union of those components of this locus that have odd multiplicity. As the left-hand side of~\eqref{eq:exe1} is a full square on $(\Sigma\smm(\partial\Sigma_0\cup\{ x_1,\dots,x_{2N} \}))^3$ we have 
  \[
    \Zz_\odd \cap (\Sigma\smm(\partial\Sigma_0\cup\{ x_1,\dots,x_{2N} \}))^3 = \varnothing.
  \]
  Since $\Zz_\odd$ is a holomorphic subvariety, this can only happen if $\Zz_\odd$ is either empty or consists of components of the form $\{ p_0 \}\times \Sigma\times \Sigma,\ p_0\in \partial\Sigma_0\cup\{ x_1,\dots,x_{2N} \}$. In the latter case we would have
  \[
    \Aa_3(p_0,p_2,p_3)^2 - 4\Aa_2(p_0,p_2)\Aa(p_2,p_3)\Aa(p_0,p_3) = 0
  \]
  for any $p_2,p_3\in \Sigma$. However, if we fix $p_3$ and consider $p_2$ close to $p_0$, then the principal part of the near-diagonal asymptotics of $\Aa_2(p_0,p_2)$ will dominate everything else on the right-hand side and the expression above will be non-zero which is a contradiction.

  We conclude that
  \[
    f(p_1,p_2)f(p_2,p_3)f(p_3,p_1)\cdot\omega(p_1)\omega(p_2)\omega(p_3) = \tfrac12 (\Aa_3(p_1,p_2,p_3) + \wtd\Aa_3(p_1,p_2,p_3))
  \]
  extends as a tri-differential on $\Sigma^3\smm\diag$ with simple poles along the diagonals which implies the result.
\end{proof}

\begin{cor}
  \label{cor:f++residue}
  There exists a global constant $c = \pm1$ such that for any $q\in \Sigma_0\smm\{ x_1,\dots, x_N \}$ and a local coordinate $z$ defined in a neighborhood of $q$ we have
  \[
    f^{[+,+]}(p,q)\omega(q) = \frac{cdz(q)}{2\pi i(z(q) - z(p))} + O(1)
  \]
  as $p\to q$.
\end{cor}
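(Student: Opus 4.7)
The plan is to read off the scalar $c$ locally from the near-diagonal expansion of $f^{[+,+]}(p,q)\omega(q)$, obtaining a function $c(q)\in\{\pm 1\}$, and then upgrade this to global constancy by applying Lemma~\ref{lemma:extension} together with the residue theorem on the compact surface $\Sigma$. Throughout I work with the extensions $f$ and $\omega$ to the double defined in~\eqref{eq:f_and_omega}.

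For the first step, fix $q\in\Sigma_0\smm\{x_1,\dots,x_N\}$, choose a local holomorphic coordinate $z$ near $q$, and write $\omega=\omega_0\,dz$. Equation~\eqref{eq:f_near_diag} forces $f(p,q)f(q,p)$ to have a double pole along the diagonal; this pole cannot be concentrated in one factor alone, so $f(p,q)$ itself has a simple pole and admits an expansion
\[
  f(p,q)=\frac{A(p,q)}{z(p)-z(q)}+O(1)\qquad(p\to q)
\]
with $A$ holomorphic near the diagonal. Substituting this and the analogous expansion for $f(q,p)$ into~\eqref{eq:f_near_diag} yields $(A(q,q)\omega_0(q))^2=(2\pi i)^{-2}$, so setting $c(q)\coloneqq -2\pi i\cdot A(q,q)\omega_0(q)$ produces $c(q)\in\{+1,-1\}$ and precisely the asymptotic claimed in the corollary, with $c=c(q)$ a priori depending on $q$.

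For the second step, I pick distinct $p_0,q_0\in\Sigma\smm(\partial\Sigma_0\cup\{x_1,\dots,x_{2N}\})$ and apply Lemma~\ref{lemma:extension} to
\[
  g(p)\coloneqq f(p_0,p)\,f(p,q_0)\,\omega(p),
\]
which is then a single-valued meromorphic differential on $\Sigma$ with simple poles only at $p_0$ and $q_0$. A short Laurent calculation using the expansion above gives
\[
  \Res_{p=p_0}g=\frac{c(p_0)\,f(p_0,q_0)}{2\pi i},\qquad \Res_{p=q_0}g=-\frac{c(q_0)\,f(p_0,q_0)}{2\pi i},
\]
so the vanishing of the sum of residues on the compact $\Sigma$ forces $(c(p_0)-c(q_0))f(p_0,q_0)=0$. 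Since $f$ cannot vanish identically (otherwise~\eqref{eq:f_near_diag} would fail), the locus $\{f\neq 0\}$ is open and dense, and combining with the holomorphic dependence of $A(q,q)\omega_0(q)$ on $q$ and the connectedness of $\Sigma_0\smm\{x_1,\dots,x_N\}$, the value $c(q)$ is a global constant in $\{\pm 1\}$.

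The main obstacle is not the residue computation itself but the extraction of a single-valued meromorphic object from the multivalued data $f^{[\pm,\pm]}$; this is exactly the content of Lemma~\ref{lemma:extension}, so once that lemma is invoked the corollary reduces to standard residue calculus on the compact surface.
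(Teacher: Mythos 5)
Your overall strategy is the paper's (read off the residue squared from the $U_2$ asymptotics, then globalize), but your first step contains a genuine gap. From \eqref{eq:f_near_diag} alone you cannot conclude that $f(p,q)$ has a simple pole along the diagonal: the product $f(p,q)f(q,p)$ having exactly a double pole is compatible with each factor having an \emph{essential} singularity there. Indeed, multiplying $f(p,q)$ by $\exp\bigl(g(p,q)\bigr)$ with $g$ antisymmetric under swapping the arguments (e.g.\ $g(p,q)=(z(p)-z(q))^{-1}$) leaves $f(p,q)f(q,p)$ unchanged while destroying the Laurent expansion of each factor. The swap-symmetry you appeal to does rule out a higher-order \emph{pole} (order $k$ in one factor forces order $k$ in the other and order $2k$ in the product), but the essential singularity needs a separate argument, and this is precisely what the paper's one-line proof supplies: Lemma~\ref{lemma:extension} is invoked \emph{first}, to see that for a generic fixed $p_1$ the differential $f(p_1,p)f(p,q)\omega(p)$ has at most a simple pole at $p=q$ while $f(p_1,p)\omega(p)$ is holomorphic and nonvanishing there, whence $f(p,q)$ has at most a simple pole in $p$; only then does \eqref{eq:f_near_diag} force $(A(q,q)\omega_0(q))^2=(2\pi i)^{-2}$. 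Since you already cite Lemma~\ref{lemma:extension}, the repair costs one sentence, but as written the inference ``this pole cannot be concentrated in one factor alone, so $f(p,q)$ itself has a simple pole'' is not valid.

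Your second step is correct: the residue computations check out, and Lemma~\ref{lemma:extension} does provide the single-valued meromorphic differential on the compact $\Sigma$ with poles only at $p_0,q_0$ to which the residue theorem applies. It is, however, heavier than necessary: once the local expansion is established, $c(q)=-2\pi i\,A(q,q)\omega_0(q)$ is a continuous (indeed holomorphic) function on the connected set $\Sigma_0\smm\{x_1,\dots,x_N\}$ with values in the discrete set $\{\pm1\}$, hence constant, which is all the corollary asks for. (Your argument does buy a little more, namely that the constant agrees at points lying on opposite sides of $\partial\Sigma_0$, but this is not needed here.) Note also that your residue computation at $p=p_0$ uses the simple-pole expansion of $f$ in its \emph{second} variable; this follows from the first-variable statement via the two-variable removable-singularity theorem applied to $(z(p)-z(q))f(p,q)$, and is worth a word.
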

\begin{proof}
  Lemma~\ref{lemma:extension} implies that $f(p,q)$ does not have an essential singularity at $q$. The statement follows from properties of $U_2$ assumed in Theorem~\ref{thma:surfaces}.
\end{proof}

Without loss of generality we can assume that $c = -1$ in Corollary~\ref{cor:f++residue}. Indeed, replacing $f^{[\pm,\pm]}$ with $-f^{[\pm,\pm]}$ is equivalent to replacing $h$ with $-h$ and $\psi_0$ with $-\psi_0$ in Theorem~\ref{thma:surfaces}. Thus, we have
\begin{equation}
  \label{eq:f++residue}
  f^{[+,+]}(p,q)\omega(q) = -\frac{dz(q)}{2\pi i(z(q) - z(p))} + O(1),\qquad p\to q.
\end{equation}

Let us now fix a simplicial basis $A_1,\dots, A_g,B_1,\dots, B_g\in H_1(\Sigma,\ZZ)$ as in Section~\ref{subsec:Schottky double} and denote by $\Ff_0$ be the spin line bundle with zero characteristics defined as in Section~\ref{subsec:Family of Cauchy--Riemann operators}. Further, let us fix an open cover $(U_i)_{i\in I_x\cup I_0\cup I_\partial}$ of $\Sigma$ described as follows. The index sets are
\[
  \begin{split}
    &I_x = I^{[+]}_x\cup I^{[-]}_x,\qquad I_x^{[+]} = \{ 1,\dots, N \},\quad I_x^{[-]} = \{ N+1,\dots, 2N \}\\
    &I_0 = I^{[+]}_0\cup I^{[-]}_0,\qquad I_0^{[+]} = 2N + \{ 1,\dots, k \},\quad I_0^{[-]} = 2N + \{ k+1,\dots, 2k \}\\
    &I_\partial = I^{[+]}_\partial\cup I^{[-]}_\partial,\qquad I_\partial^{[+]} = (2N+2k+1)+\{0,\dots, n \},\quad I_\partial^{[-]} = (2N+2k+1)+\{ n+1,\dots, 2n+1\}.
  \end{split}
\]
where $k$ is a large enough integer. By abusing the notation we denote by $\sigma$ the involution on $I_x\cup I_0\cup I_\partial$ defined such that
\[
  \begin{split}
    &\sigma(i) = i+N,\quad i\in I^{[+]}_x,\\
    &\sigma(i) = i+k,\quad i\in I^{[+]}_0,\\
    &\sigma(i) = i+n+1,\quad i\in I_\partial^{[+]}.
  \end{split}
\]
We assume that the sets $U_i$ from the cover satisfy the following properties:
\begin{itemize}
  \item For each $i$ we have $\sigma(U_i) = U_{\sigma(i)}$.
  \item For each $i\in I_x^{[+]}$ the set $U_i$ is a simply connected neighborhood of $x_i$ not intersecting $\partial\Sigma_0$ or the neighborhoods of other $x_j$-s.
  \item For each $i\in I_0^{[+]}$ the set $U_i\subset \Sigma_0$ is simply connected and does not intersect $\{ x_1,\dots,x_{2N} \}\cup\partial \Sigma_0$.
  \item For each $i$ such that $2N+2k+1+i\in I_\partial^{[+]}$ the set $U_{2N+2k+1+i}$ is a neighborhood of the boundary component $B_i$ which does not intersect neighborhoods of $x_1,\dots, x_{2N}$ or other components $B_j$-s for $j\neq i$.
\end{itemize}
For each $j$ fix a holomoprhic section $\zeta_j\in H^0(U_j, \Ff_0)$ with no zeros (this is possible for each $U_j$ because all holomorphic line bundles on a non-compact Riemann surface are trivial). Let $\psi_{ij}:U_i\cap U_j\to \CC^\ast$ be holomorphic functions such that for each $i\neq j$
\[
  \zeta_j = \psi_{ij}\zeta_i,
\]
that is, $\psi_{ij}$ are transition functions representing $\Ff_0$ if we trivialize it over each $U_i$ such that $\zeta_i$ is sent to 1. By adjusting the choice of $\zeta_j$ we can assure that for every $i\neq j$
\[
  \psi_{ij}(p) = \overline{\psi_{\sigma(i)\sigma(j)}(\sigma(p))}.
\]
Fix $p_0\in\Sigma_0\smm(\partial\Sigma_0\cup\{ x_1,\dots, x_N \})$. For each $i\in I_0^{[+]}$ choose single valued branches $f_i(p,p_0)$ of $f(p,p_0)$ and $\omega_i$ of $\omega$. Further, for every $i\in I_0^{[+]}$ choose $h_i$ such that
\[
  \omega_i = h_i^2\zeta_i^2.
\]
Finally, define $H_i$ as follows:
\[
  \begin{split}
    &H_i(p) = f_i(p,p_0)h_i(p),\qquad i\in I_0^{[+]},\\
    &H_i(p) = \overline{H_{\sigma(i)}(\sigma(p))},\qquad i\in I_0^{[-]},\\
    &H_i(p) = 1,\qquad i\in I_x\cup I_\partial,\\
  \end{split}
\]
Consider the line bundle $\Ll$ on $\Sigma$ represented by the transition functions $\vphi_{ij},\ i,j\in I_x\cup I_0\cup I_\partial,$ defined such that:
\[
  \begin{split}
    &\vphi_{ij}(p) = \psi_{ij}(p)^{-1}\frac{H_i(p)}{H_j(p)},\qquad (i,j)\notin I_\partial\times I_\partial,\\
    &\vphi_{ij}(p) = \psi_{ij}(p)^{-1}\frac{f(p,\sigma(p_0))}{f(p,p_0)},\qquad i\in I_\partial^{[+]}\ j\in I_\partial^{[-]},\\
  \end{split}
\]
Using Lemma~\ref{lemma:extension} it is easy to see that $\vphi_{ij}$ are well-defined as meromorphic functions. Moreover, arguing as in the proof of Lemma~\ref{lemma:extension} one can show that $\vphi_{ij}$ has no zeros and no poles provided that $p_0$ is chosen generically enough and $U_j$ are chosen properly. Note also that we have
\begin{equation}
  \label{eq:symmetric}
  \vphi_{ij}(p) = \overline{\vphi_{\sigma(i)\sigma(j)}(\sigma(p))}.
\end{equation}
Let us define
\[
  \begin{split}
    &C_{[++]}(p,q) = f(p,q)\frac{f(q,p_0)\omega(q)}{f(p,p_0)},\\
    &C_{[+-]}(p,q) = f(p,q)\frac{f(q,\sigma(p_0))\omega(q)}{f(p,p_0)},\\
    &C_{[-+]}(p,q) = f(p,q)\frac{f(q,p_0)\omega(q)}{f(p,\sigma(p_0))},\\
    &C_{[--]}(p,q) = f(p,q)\frac{f(q,\sigma(p_0))\omega(q)}{f(p,\sigma(p_0))}.
  \end{split}
\]
By Lemma~\ref{lemma:extension} every $C_{[\pm\pm]}$ can be extended to $\Sigma\times\Sigma$ as a meromorphic function in the first variable and a meromorphic differential in the second. Adjusting $p_0$ and the cover again we can make sure that for each fixed $q$ the function $C_{[\pm\pm]}(p,q)$ does not have poles on $U_i\smm\{ q \},\ i\in I_x\cup I_\partial$.

For each $i$ denote by $\zeta_i^{-1}$ the section of $\Ff_0^\ast$ over $U_i$ such that $\zeta_i\cdot \zeta_i^{-1} = 1$. Finally, for $p\in U_i$ and $q\in U_j$ let us define:
\[
  D_{ij}^{-1}(p,q) = -\zeta_i(p)H_i(p)C_{[s_1s_2]}(p,q)H_j(q)^{-1}\zeta_j^{-1}(q),\qquad i\in I^{[s_1]},\ j\in I^{[s_2]},\qquad s_1,s_2\in \{ \pm \}
\]
where $I^{[s]} = I_x^{[s]}\cup I_0^{[s]}\cup I_\partial^{[s]}$.

\begin{lemma}
  \label{lemma:Ddij_is_Cauchy_kernel}
  There is a global constant $c = \pm1$ such that the family $(c\Dd_{ij}^{-1})_{i,j}$ form together the Cauchy kernel of the operator $\dbar$ acting on smooth sections of $\Ff_0\otimes \Ll$.
\end{lemma}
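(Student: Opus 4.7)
The plan is to verify three properties of the family $(D_{ij}^{-1})_{i,j}$: (i)~it assembles into a well-defined global meromorphic section of $(\Ff_0\otimes\Ll)\boxtimes (K_\Sigma\otimes(\Ff_0\otimes\Ll)^\ast)$; (ii)~it is holomorphic on $\Sigma\times\Sigma\smm\diag$; (iii)~along the diagonal it has a simple pole with residue $c/(2\pi i)$ for some globally defined sign $c=\pm1$. Given (i)--(iii), the standard residue/Stokes argument outlined in Section~\ref{subsec:Cauchy kernel of a line bundle} shows that $c\cdot D^{-1}$ satisfies the defining identity~\eqref{eq:def_of_Dd-1}, and Lemma~\ref{lemma:Dd-1_exists} then yields uniqueness.

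For (i), the key identity is $\psi_{ij}\vphi_{ij}=H_i/H_j$ for $(i,j)\notin I_\partial^{[+]}\times I_\partial^{[-]}$: this is exactly the transition cocycle of $\Ff_0\otimes\Ll$ written in the chosen frames, and it says that the local sections $H_i\zeta_i$ (interpreted as frames of $\Ff_0\otimes\Ll$ via the trivializations fixing $\vphi_{ij}$) glue across those charts into a global meromorphic frame; an analogous statement holds in the dual bundle in the second variable with $H_j^{-1}\zeta_j^{-1}$. Single-valuedness of each bi-differential $C_{[\pm\pm]}(p,q)$ on $\Sigma\times\Sigma\smm\diag$ follows immediately from Lemma~\ref{lemma:extension} applied to $f(p,q)f(q,r)\omega(q)$ with $r\in\{p_0,\sigma(p_0)\}$. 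On the remaining pair $I_\partial^{[+]}\times I_\partial^{[-]}$ the transition $\vphi_{ij}$ carries the extra factor $f(p,\sigma(p_0))/f(p,p_0)$, which is precisely what converts the ratio used in $C_{[++]}$ (base point $p_0$) into the one used in $C_{[-+]}$ (base point $\sigma(p_0)$); a symmetric compensation occurs in the second variable. The compatibility~\eqref{eq:symmetric} of the transitions with the involution is then a consequence of~\eqref{eq:sigmaf}.

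For (ii) and (iii): away from the diagonal, every factor entering $D_{ij}^{-1}$ is regular and non-vanishing on its chart provided $p_0$ is chosen generically (so that $f(\cdot,p_0)$ and $f(\cdot,\sigma(p_0))$ have no zeros in the relevant $U_i$) and the cover avoids $\{x_1,\dots,x_{2N}\}$. On the diagonal, restricting to a chart $U_i\subset\Sigma_0$ with $i\in I_0^{[+]}$ and invoking the normalization~\eqref{eq:f++residue} (with $c=-1$ fixed just before that display), together with $H_i(p)/H_i(q)\to 1$ and $f(q,p_0)/f(p,p_0)\to 1$ as $p\to q$, one obtains
\[
  D_{ii}^{-1}(p,q) \;=\; \zeta_i(p)\zeta_i^{-1}(q)\cdot \frac{dz(q)}{2\pi i(z(q)-z(p))} + O(1),
\]
so that $\Res_{p=q}D_{ii}^{-1}(p,q)=1/(2\pi i)$ in the convention of Section~\ref{subsec:Cauchy kernel of a line bundle}. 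The corresponding computation in a chart with $i\in I_0^{[-]}$, performed by applying $\sigma$ and using $\overline{f^{[+,+]}}=f^{[-,-]}$ together with the mirror definition of $H_i$, reduces to the previous one and gives the same residue up to a common overall sign: this is the global constant $c=\pm1$ of the statement.

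The main obstacle is bookkeeping rather than any single hard step: one must track the sign ambiguity introduced by the square-root choice $h_i$ of $\omega_i/\zeta_i^{\,2}$ (which produces the global $c$), the cross-boundary gluing governed by the distinguished form of $\vphi_{ij}$ on $I_\partial^{[+]}\times I_\partial^{[-]}$, and the generic conditions on $p_0$ and the cover that are needed to rule out stray zeros and poles. Once those choices are made, each of the individual verifications reduces to a direct computation using~\eqref{eq:sigmaf}, \eqref{eq:f_near_diag}, \eqref{eq:f++residue} and Lemma~\ref{lemma:extension}.
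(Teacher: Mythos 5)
Your proposal is correct and follows essentially the same route as the paper: the paper's proof likewise reduces the claim to verifying the three defining properties of a Cauchy kernel (gluing into a meromorphic section of $\Ff_0\otimes\Ll$ in the first variable and of $K_\Sigma\otimes(\Ff_0\otimes\Ll)^\ast$ in the second, no poles off the diagonal, and a simple pole with residue $\frac{1}{2\pi i}$ along it), declaring the first two checks straightforward and deducing the residue from Corollary~\ref{cor:f++residue} and the normalization~\eqref{eq:f++residue}. You simply carry out in more detail the cocycle and near-diagonal computations that the paper leaves implicit.
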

\begin{proof}
  We need to show that $c\Dd_{ij}^{-1}$ put together form an object $\Dd^{-1}$ which 
  \begin{itemize}
    \item is a meromorphic section of $\Ff_0\otimes \Ll$ in the first variable,
    \item is a meromorphic section of $K_\Sigma\otimes (\Ff_0\otimes \Ll)^\ast$ in the second variable,
    \item has a simple pole along the diagonal with the residue $\frac{1}{2\pi i}$ and no other poles,
  \end{itemize}
  see Section~\ref{subsec:Cauchy kernel of a line bundle} for more details. The verification of the first two properties as well as of the fact that $\Dd^{-1}$ has no poles outside the diagonal is absolutely straightforward (although maybe tedious). The residue calculation follows from Corollary~\ref{cor:f++residue} and the remark after it.
\end{proof}

\begin{cor}
  \label{cor:Ll_is_Llalpha}
  There exist harmonic differentials $\psi_0$ and $\m$ on $\Sigma$ such that $\sigma^\ast\psi_0 = -\psi_0$, $\sigma^\ast\m = \m$, $\m$ has integer periods and $\Ll\cong\Ll_{\alpha_0+\alpha_\m}$ where
  \[
    \alpha_0 = \pi i\psi_0^{0,1},\qquad \alpha_\m = \pi i \m^{0,1}.
  \]
\end{cor}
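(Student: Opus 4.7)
The plan is to combine the Cauchy-kernel existence for $\Ff_0\otimes\Ll$ with the symmetry~\eqref{eq:symmetric} of its transition functions, then decompose the resulting flat class in $\Jac(\Sigma)$ into its Dirichlet and Neumann parts.

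First, Lemma~\ref{lemma:Ddij_is_Cauchy_kernel} asserts that $\Ff_0\otimes\Ll$ admits a Cauchy kernel, so Lemma~\ref{lemma:Dd-1_exists} yields $\deg(\Ff_0\otimes\Ll)=g-1$. Since $\deg\Ff_0=g-1$, this forces $\deg\Ll=0$, and Lemma~\ref{lemma:flat_line_bundles} produces an anti-holomorphic 1-form $\alpha$ with $\Ll\cong\Ll_\alpha$. Let $\psi=2\pi^{-1}\Im\alpha$ be the associated real harmonic 1-form, so that $\alpha=\pi i\,\psi^{0,1}$.

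Second, I reinterpret the transition-function symmetry~\eqref{eq:symmetric} bundle-theoretically. The identity $\vphi_{ij}(p)=\overline{\vphi_{\sigma(i)\sigma(j)}(\sigma(p))}$ is exactly the statement that $\overline{\sigma^\ast\Ll}\cong\Ll$ as holomorphic line bundles: the right-hand side is the transition function of $\overline{\sigma^\ast\Ll}$ in the original cover after relabeling via $\sigma$ (the complex conjugation is what converts the anti-holomorphic pullback by $\sigma$ back into a holomorphic bundle). Since the character of $\Ll_\alpha$ is $\chi_\alpha(\gamma)=\exp(\pi i\int_\gamma\psi)$ (see~\eqref{eq:def_of_chi}) and the character of $\overline{\sigma^\ast\Ll_\alpha}$ is $\gamma\mapsto\overline{\chi_\alpha(\sigma_\ast\gamma)}$ (read off from the monodromy of a global smooth trivializing section), the isomorphism $\overline{\sigma^\ast\Ll}\cong\Ll$ becomes
\[
  \chi_\alpha(\sigma_\ast\gamma+\gamma)=1\qquad\text{for all }\gamma\in H_1(\Sigma,\ZZ),
\]
which in terms of $\psi$ says that $(\sigma^\ast+1)\psi$ has all its periods in $2\ZZ$.

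Third, I decompose $\psi=\psi_0+\m$ into its $\sigma^\ast$-antisymmetric and $\sigma^\ast$-symmetric parts via $\psi_0=\tfrac12(\psi-\sigma^\ast\psi)$ and $\m=\tfrac12(\psi+\sigma^\ast\psi)$. By construction $\sigma^\ast\psi_0=-\psi_0$ and $\sigma^\ast\m=\m$. Since $(\sigma^\ast+1)\psi=2\m$, the condition from the previous step is precisely that $\m$ has integer periods. Setting $\alpha_0=\pi i\,\psi_0^{0,1}$ and $\alpha_\m=\pi i\,\m^{0,1}$ gives $\alpha=\alpha_0+\alpha_\m$ by linearity of $\psi\mapsto\pi i\,\psi^{0,1}$, and thus $\Ll\cong\Ll_\alpha\cong\Ll_{\alpha_0+\alpha_\m}$, as required.

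The main obstacle is the middle step: one must carefully track how the anti-holomorphic involution $\sigma$ interacts with the holomorphic structure of $\Ll$. The correct operation is not the bare pullback $\sigma^\ast$ (which yields an anti-holomorphic object) but $\overline{\sigma^\ast(\cdot)}$, and verifying that the associated character of a flat bundle transforms by $\chi_\alpha\mapsto\overline{\chi_\alpha\circ\sigma_\ast}$ requires writing down a smooth trivializing section of $\Ll_\alpha$ and computing its monodromy under $\overline{\sigma^\ast}$. Once this is done, translating back through $\Jac(\Sigma)\cong\Hom(H_1(\Sigma,\ZZ),\TT)$ and using the symmetric/antisymmetric decomposition is immediate.
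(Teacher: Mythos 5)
Your proposal is correct and follows essentially the same route as the paper: degree zero from the Cauchy kernel, $\Ll\cong\Ll_\alpha$ via Lemma~\ref{lemma:flat_line_bundles}, the $\sigma$-symmetry~\eqref{eq:symmetric} of the transition functions forcing $\alpha$ and $\sigma^\ast\bar\alpha$ to agree in $\Jac(\Sigma)$, and then the symmetric/antisymmetric split of $\psi=2\pi^{-1}\Im\alpha$ (identical to the paper's $\m=\pi^{-1}\Im(\alpha-\sigma^\ast\bar\alpha)$, $\psi_0=\pi^{-1}\Im(\alpha+\sigma^\ast\bar\alpha)$). Your character computation $\chi_\alpha(\gamma+\sigma_\ast\gamma)=1$ is just a more explicit rendering of the step the paper states in one line.
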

\begin{proof}
  By Lemma~\ref{lemma:Dd-1_exists} and Lemma~\ref{lemma:Ddij_is_Cauchy_kernel} we have $\deg\Ll = 0$, thus $\Ll\cong \Ll_\alpha$ for some $\alpha$ by Lemma~\ref{lemma:flat_line_bundles}. Recall that the set of transition functions $\vphi_{ij}$ determining $\Ll$ is invariant under $\sigma$. This implies that $\alpha$ and $\sigma^\ast\bar\alpha$ determine the same point in $\Jac(\Sigma)$. Define
  \[
    \m = \pi^{-1}\Im(\alpha - \sigma^\ast\bar\alpha),\qquad \psi_0 = \pi^{-1}\Im(\alpha + \sigma^\ast\bar\alpha)
  \]
  By~\eqref{eq:Jac_torus} $\m$ must be integer and, by construction, $\sigma^\ast\psi_0 = -\psi_0$, $\sigma^\ast\m = \m$ and $\alpha = \alpha_0 + \alpha_\m$.
\end{proof}

\begin{proof}[Proof of Theorem~\ref{thma:domains} and Theorem~\ref{thma:surfaces}]
  We will only prove Theorem~\ref{thma:surfaces} as Theorem~\ref{thma:domains} is a particular case of it. Let $\psi_0$ and $\m$ be as in Corollary~\ref{cor:Ll_is_Llalpha}, let $h = \phi + \int \psi$ denotes the compactified free field with the shift $\psi_0$. By the Lemma~\ref{lemma:Em_exists} and Lemma~\ref{lemma:Ddij_is_Cauchy_kernel} we have $\EE\exp(\pi i Q_\m(\psi - \psi_0))\neq 0$ and the expectation $\E_\m$ is well-defined. Let $p_1,\dots, p_m\in \Sigma_0$ be distinct points and assume that for each $j$ the index $i_j$ is such that $p_j\in U_{i_j}$. Using the definition of $f$ and $\Dd^{-1}_{ij}$ it is easy to see that
  \[
    \sum_{s\in \{ \pm \}^m}\det\left[1_{j\neq k}f^{[s_j,s_k]}(p_j,p_k)\right]\prod_{j = 1}^m\omega^{[s_j]}(p_j) = \sum_{s\in \{ \pm \}^m} \left(\prod_{j = 1}^m-s_j\right) \det\left[1_{j\neq k}\Dd_{i_ji_k}^{-1}(p^{[s_j]}_j, p^{[s_k]}_{i_k})\right]
  \]
  where $p^{[+]} = p$ and $p^{[-]} = \sigma(p)$. Using Lemma~\ref{lemma:Ddij_is_Cauchy_kernel} we can rewrite it more invariantly:
  \begin{equation}
    \label{eq:prf1}
    \sum_{s\in \{ \pm \}^m}\det\left[1_{j\neq k}f^{[s_j,s_k]}(p_j,p_k)\right]\prod_{j = 1}^m\omega^{[s_j]}(p_j) = \sum_{s\in \{ \pm \}^m} \left(\prod_{j = 1}^m-s_j\right) \det\left[1_{j\neq k}\Dd^{-1}(p^{[s_j]}_j, p^{[s_k]}_{i_k})\right]
  \end{equation}
  where $\Dd^{-1}$ is the Cauchy kernel of the $\dbar$ operator acting on smooth sections of $\Ff_0\otimes \Ll$. By Corollary~\ref{cor:Ll_is_Llalpha} we have $\Ll\cong\Ll_{\alpha_0+\alpha_\m}$, hence $\Dd^{-1}$ is intertwined with the kernel $\Dd^{-1}_{\alpha_0+\alpha_\m}$ (see Section~\ref{subsec:Family of Cauchy--Riemann operators}), so we have
  \begin{equation}
    \label{eq:prf2}
    \sum_{s\in \{ \pm \}^m} \left(\prod_{j = 1}^ms_j\right) \det\left[1_{j\neq k}\Dd^{-1}(p^{[s_j]}_j, p^{[s_k]}_{i_k})\right] = \sum_{s\in \{ \pm \}^m} \left(\prod_{j = 1}^ms_j\right) \det\left[1_{j\neq k}\Dd^{-1}_{\alpha_0+\alpha_\m}(p^{[s_j]}_j, p^{[s_k]}_{i_k})\right].
  \end{equation}
  Combining~\eqref{eq:prf1},~\eqref{eq:prf2} and Corollary~\ref{cor:correlations} we conclude that
  \[
    \sum_{s\in \{ \pm \}^m}\det\left[1_{j\neq k}f^{[s_j,s_k]}(p_j,p_k)\right]\prod_{j = 1}^m\omega^{[s_j]}(p_j) = \E_\m\prod_{j = 1}^m d(h - \E_\m h)(p_j)
  \]
  where $h= \phi + \int \psi$ is the compactified free field such that $\psi_0 = 2\pi^{-1}\Im\alpha_0$ is the shift making $\psi$ integer and used to define $\E_\m$.
\end{proof}

We finish the section with the following technical remark:

\begin{rem}
  \label{rem:multivalued_spinors}
  As follows from Remark~\ref{rem:multivalued_functions}, meromorphic sections of $\Ff_0\otimes \Ll_{\alpha_0 + \alpha_\m}$ are in one-to-one correspondence with meromorphic multivalued sections of $\Ff_0$ with multiplicative monodromy given by the character
  \[
    \chi_0(\gamma)\chi_\m(\gamma) = \exp(\pi i\int_\gamma(\psi_0 + \m))
  \]
  representing the same point in $\Jac(\Sigma)$ as $\alpha_0 + \alpha_\m$. In particular, $\Dd_{\alpha_0 + \alpha_\m}^{-1}(p,q)$ is gauge equivalent to a meromorphic multivalued section of $\Ff_0\boxtimes\Ff_0$ with the monodromy given by $\chi_0\chi_\m$ in the first variable and by $(\chi_0\chi_\m)^{-1}$ in the second. This provides us with a useful tool to reconstruct the height shift $\psi_0$ and the differential $\m$: it is enough to find a gauge relating the above function $f(p,q)$ (or, more generally, the spinor $\sqrt{\omega(p)}f(p,q)\sqrt{\omega(q)}$) with such multivalued section, then the monodromy determines $\psi_0$ and $\m$ as anti-symmetric and summetric components of $\psi_0 + \m$ up to an integer differential with even periods (cf. Remark~\ref{rem:dependence_on_psi0}). As we will see below, it allows to easily determine the height shift in various special cases.
\end{rem}

\section{Examples}

\label{sec:Examples}

In many cases of interest functions $f^{[\pm,\pm]}$ satisfy explicit boundary conditions that allow to extend them to the double $\Sigma$ and to determine the line bundle $\Ff_0\otimes \Ll_{\alpha_0}$ explicitly. Below we consider a number of examples which arise from Temperleyan boundary conditions and their simple modifications. We begin with the following general remark. Assume that $\Omega_k$ is a sequence of discrete domains approximating a domain $\Omega$, and assume that there is a sequence $\partial_k$ of boundary arcs of $\Omega_k$ converging to an arc $\partial\subset \partial \Omega$ such that $\Omega_k$ has Temperleyan boundary along $B_k$. Assume that $f^{[\pm,\pm]}$ in $\Omega$ arise as limits of $F^{[\pm,\pm]}_k$ in $\Omega_k$. Then one can set the conventions determining $F^{[\pm,\pm]}_k$ such that
\begin{equation}
  \label{eq:temp_bc}
  \begin{split}
    &f^{[+,+]}(p,q) - f^{[-,+]}(p,q) = 0,\qquad p\in \partial,\ q\in \Omega,\\
    &(f^{[+,+]}(p,q) - f^{[+,-]}(p,q))dq = 0 \quad \text{along $\partial$ for each $p\in \Omega$ fixed}.
  \end{split}
\end{equation}
In particular, if we define $f(p,q)$ on the double of $\Omega$ by~\eqref{eq:f_and_omega}, then $f(p,q)\,dq$ extends holomorphically to $\partial\times \partial\smm\diag$ provided that $f^{[\pm, \pm]}(p,q)\,dq$ stays bounded if $p,q$ approach $\partial$ and stay on a definite distance from each other.

\subsection{Cylinder with Temperleyan boundaries}
\label{subsec:Cylinder with Temperley boundaries}

Fix $\tau>0$ and consider the discrete cylinder
\[
  \Cc_k^{DD} = \{ (x,y)\in \ZZ^2\ \mid\ 0\leq y\leq 2\lfloor \tfrac{k\tau}{2} \rfloor \}/_{(x,y)\sim (x+2k,y)}.
\]
The scaling limit of $\Cc_k^{DD}$ is the cylinder
\begin{equation}
  \label{eq:Cc_tau}
  \Cc_\tau = \{ x+iy\ \mid\ 0\leq y\leq \tfrac12\tau \}/_{z\sim z+1}.
\end{equation}
The dimer model sampled on $\Cc_k$ with uniform weights then has Temperleyan boundary conditions on both boundary components of the underlying surface. To construct the Kasteleyn operator on $\Cc_k$ enumerating the dimer configurations one can start with the standard Kasteleyn weights on $\ZZ^2$ and then consider the corresponding Kasteleyn operator acting on multivalued functions with multiplicative monodromy $-1$.

Consider the double $\Tt$ of $\Cc_\tau$, that is,
\begin{equation}
  \label{eq:Tt_tau}
  \Tt = \CC/(\ZZ + i\tau\ZZ).
\end{equation}
The cotangent bundle $K_\Tt$ is trivial, hence the spin line bundle $\Ff_0$ has degree zero and hence its sections can be identified with multivalued functions on $\Tt$. Since $\Ff_0^{\otimes 2}$ is trivial, the monodromies must be $\pm1$. To compute them, recall the definition~\eqref{eq:def_of_q} of the quadratic form $q$ associated with a spin bundle and the definition~\eqref{eq:def_of_q0} of the particular form $q_0$ that defines $\Ff_0$. Combining these definitions it is easy to see that sections of $\Ff_0$ can be identified with multivalued functions on the torus $\Tt$ with monodromies $-1$ along both basic cycles.

At the same time,~\eqref{eq:temp_bc} (and the standard regularity estimates on $f^{[\pm,\pm]}$ near the boundary) shows that the function $f(p,q)$ defined in the proof of Theorem~\ref{thma:surfaces} (see~\eqref{eq:f_and_omega}) extends to $\Tt$ in both variables and has monodromy only around the $B$-cycle (that is, around the cylinder), but not around the $A$-cycle (the path connecting two boundary components of $\Cc_\tau$ and its image under the involution). Combined with the remark above this implies that $f(p,q)$ (say, when $q$ is fixed and $p$ varies) is a multivalued section of $\Ff_0$ with the monodromy $-1$ along the $A$-cycle. Combining this with Remark~\ref{rem:multivalued_spinors} we conclude that the scaling limit $h$ of dimer height functions on $\Cc_k^{DD}$ has half-integer height change between two boundary components.

We can also consider a slightly modified version of $\Cc_k^{DD}$ given by
\[
  \Cc_k^{ND} = \{ (x,y)\in \ZZ^2\ \mid\ 1\leq y\leq 2\lfloor \tfrac{k\tau}{2} \rfloor \}/_{(x,y)\sim (x+2k,y)},
\]
that is, $\Cc_k^{ND}$ is $\Cc_k^{DD}$ with the bottom layer removed. In this case both relations in~\eqref{eq:temp_bc} will appear with different sign when $\partial$ is the bottom component. This results in $f(p,q)$ having monodromies $-1$ along both basis cycles, which identifies it as a single valued section of $\Ff_0$. In particular, Remark~\ref{rem:multivalued_spinors} now tells us that the instanton component $\psi$ of $h$ is integer in the limit, that is, the shift $\psi_0 = 0$.

Let now $\Cc_k$ be either of $\Cc_k^{DD}$ and $\Cc_k^{ND}$. Let $b_k, w_k$ be two its boundary vertices colored in black and white respectively, and put $\Cc_k' = \Cc_k\smm\{ b_k,w_k \}$. Such a setup naturally appears in the study of the double-dimer interface. Assume that $b_k$ and $w_k$ converge to the boundary points $p_1,p_2\in \partial\Cc_\tau$ respectively. This results in asymptotic relations $f^{[+,+]}(p,q) = O(p-p_1)$ at $p_1$ and $f^{[+,+]}(p,q)= O((p-p_2)^{-1})$ at $p_2$, and the opposite for the second argument $q$. Moreover, if $p_1$ and $p_2$ (and hence $b_k$ and $w_k$ for a large $k$) belong to different boundary components we will have to modify the Kasteleyn operator by introducing an additional monodromy $-1$ along the $B$-cycle (boundary of $\Cc_\tau$). It follows that (say, when $q$ is fixed) $f(p,q)$ corresponds to the Cauchy kernel of the line bundle
\[
  \Ff_0\otimes\Ll_{\alpha_0}\otimes\Ll_{\alpha_1}\otimes\Oo_\Sigma(p_2 - p_1)
\]
where $\Ff_0\otimes\Ll_{\alpha_0}$ is the line bundle corresponding to the limit taken on $\Cc_k$, $\Ll_{\alpha_1}$ takes into account the additional monodromy, and $\Oo_\Sigma(p_2 - p_1)$ incorporates singularities at $p_1$ and $p_2$. 

To calculate the height change in this case we need to find periods of $\Im\alpha_2$, where $\alpha_2$ is such that $\Oo_\Sigma(p_2 - p_1)\cong \Ll_{\alpha_2}$. A standard calculation using Abel--Jacobi map (see e.g.~\cite[Section~9.4]{basok2023dimers}) implies that
\[
  \int_A\Im\alpha_2 = \pi\Re(p_2 - p_1),\qquad \int_B \Im\alpha_2 = \pi\tau^{-1}\Im(p_2 - p_1) = \int_A\Im\alpha_1 \mod \pi\ZZ.
\]
where the last equality follows from the definition $\alpha_1$. We conclude that the shift of the compactified free field obtained as a limit of height functions of $\Cc_k'$ is given by
\[
  \pi^{-1}\int_A \Im\alpha_0 + \Re(p_2 - p_1).
\]

\subsection{Cylinder with black and white boundaries}
\label{subsec:Cylinder with black and white boundaries}

This setup has been studied in details in~\cite{ChelkakDeiman}. Let $\Omega\subset \RR/\ZZ$ be a doubly connected bounded domain, and let $\Omega_k\subset \ZZ^2/_{z\sim z + 2k}$ be a sequence of discrete doubly connected domains approximating $\Omega$ and such that
\begin{itemize}
  \item[--] The top boundary component of each $\Omega_k$ is Temperleyan, that is, all the corners are black and have the same type.
  \item[--] The bottom boundary component of each $\Omega_k$ is white Temperleyan, that is, all the corners are white and are of the same type.
\end{itemize}
Let $\tau>0$ be such that $\Omega$ is conformally equivalent to $\Cc_\tau$, let $\phi: \Cc_\tau\to \Omega$ be a conformal map chosen such that the bottom boundary component is sent to the bottom boundary component. Let $\Tt$ denote the double of $\Cc_\tau$ and $\sigma$ denote the involution. Define
\[
  U_1 = \Tt\smm \partial_{\mathrm{bottom}}\Cc_\tau,\qquad U_2 = \Tt\smm\partial_{\mathrm{top}}\Cc_\tau.
\]
Then $U_1\cap U_2 = \Cc_\tau\smm\partial\Cc_\tau\cup \sigma(\Cc_\tau\smm\partial\Cc_\tau)$. Call the first component $U_{12}^+$ and the second $U_{12}^-$, and define
\[
  \vphi_{12}^+(z) = \phi'(z),\qquad \vphi_{12}^-(z) = -\overline{\phi'(\bar z)}.
\]
Let $\Ll$ denote the line bundle defined by the transition function $\vphi_{12}$ equal to $\vphi_{12}^\pm$ on $U_{12}^\pm$.

Consider now the functions $f^{[\pm,\pm]}$ on $\Omega$ originating from the dimer model on the sequence $\Omega_k$. Note that the Kasteleyn operator on each $\Omega_k$ can be chosen by taking standard Kasteleyn weights on $\ZZ^2$ twisting it with the monodromy $-1$ around the cylinder. Consequently, $f^{[\pm,\pm]}$ will inherit this monodromy $-1$. Summing this with~\eqref{eq:temp_bc} (see~\cite{ChelkakDeiman} for details) we can find a gauge relating $f^{[\pm,\pm]}$ with the Cauchy kernel in $\Ff_0\otimes \Ll$; thus, the height shift is determined by the point in the Jacobian corresponding to $\Ll$.

\subsection{Multiply connected domains: Kenyon's original setup}
\label{subsec:Multiply connected domains: Kenyon's original setup}

Let now $\Omega$ be a multiply connected domain with boundary components $B_0,B_1,\dots, B_n,\ n\geq 2,$ where $B_0$ denotes the outer component. Let $\Omega_k\subset \tfrac1k\ZZ^2$ be discrete domains with Temperleyan boundary conditions approximating $\Omega$. For each $k$ and $i = 1,\dots, n$ pick a black vertex $b_k$ on the outer boundary of $\Omega_k$ and a white vertex $w_k^{(i)}$ on the $i$-th boundary component of $\Omega_k$. Define $\Omega'_k = \Omega_k\smm\{ b_k,w_k^{(1)},\dots, w_k^{(n)} \}$. The dimer model in such $\Omega'_k$ has been originally considered in the foundational work~\cite{KenyonConfInvOfDominoTilings} of Kenyon. Let us show how the result of~\cite{nicoletti2025temperleyan} describing the limit of the height function can be re-derived using the language we developed.

\begin{rem}
  \label{rem:removing_adding}
  Note that from the dimer model perspective removing a white vertex $w_k^{(i)}$ is the same as adding an ``exposed'' black vertex incident to $w_k^{(i)}$ and to no other vertices in $\Omega_k$. The latter convention was taken in~\cite{KenyonConfInvOfDominoTilings}.
\end{rem}

Assume that $b_k, w_k^{(1)},\dots, w_k^{(n)}$ converge to points $p_0,p_1,\dots, p_n\in \partial\Omega$ respectively. Consider the functions $f^{[\pm,\pm]}$ induced by the sequence of dimer graphs $\Omega_k'$. Note that $f^{[\pm,\pm]}$ satisfy the conditions~\eqref{eq:temp_bc} on $\Omega\smm\{ p_0,\dots, p_n \}$. Moreover, for any fixed $q\in \Omega$, the function $f^{[+,\pm]}(p,q)$ has a simple pole at each $p_1,\dots, p_n$, a simple zero at $p_0$.

Let us construct such $f^{[\pm,\pm]}$. Denote by $\Sigma$ the double of $\Omega$, note that the genus of $\Sigma$ is $n$. Consider the line bundle $\Ll = \Oo_\Sigma(-p_0+p_1+\ldots+p_n)$. By construction $\Ll$ has a meromorphic section $\zeta$ with the divisor $-p_0+p_1+\ldots+p_n$. Note that $\deg \Ll = n-1 = \deg \Ff_0$, thus there exists $\alpha$ such that
\[
  \Ll \cong\Ff_0\otimes \Ll_\alpha.
\]
This allows to identify $\zeta$ with a multivalued meromorphic section of $\Ff_0$ (see Remark~\ref{rem:multivalued_spinors}). Note that $\zeta^2$ is then a multivalued meromorphic differential.

Let now $\Dd^{-1}$ be the Cauchy kernel of $\Ll$, also interpreted as a multivalued section of $\Ff_0$ in both variables. Given $p,q\in \Omega$ define
\[
  f^{[+,+]}(p,q) = \zeta(p)^{-1}\Dd^{-1}(p,q)\zeta(q)\,(dq)^{-1},\qquad f^{[-,+]}(p,q) = \zeta(\sigma(p))^{-1}\Dd^{-1}(\sigma(p),q)\zeta(q)\,(dq)^{-1}.
\]
A straightforward calculation shows that $f^{[\pm,\pm]}$ satisfy all the required properties. One can also show that $\alpha$ can be chosen such that $\sigma^\ast\alpha = \bar\alpha$, that it, $\psi_0 = 2\pi^{-1}\Im\alpha$ is anti-symmetric. We conclude that the compactified free field $h$ arising as a limit of dimer height functions on $\Omega_k'$ has the shift $\psi_0$ representing the same point as the divisor $-p_0+p_1+\ldots+p_n$ in the Jacobian of the double $\Sigma$.

\subsection{Multiply connected domains: punctures in the bulk}
\label{subsec:Multiply connected domains: punctures in the bulk}

In this example we take the same domains $\Omega_k$ as in the previous case but instead of puncturing them at the boundary we remove $n-1$ white vertices $w_k^{(1)},\dots, w_k^{(n-1)}$ from the bulk of each $\Omega_k$. This setup corresponds to the one considered in~\cite{basok2023dimers} and~\cite{BerestyckiLaslierRayI}. Assume that $w_k^{(i)}$ converge to a point $p_i\in \Omega$ for each $i = 1,\dots, n-1$. The Kasteleyn weights on $\Omega_k$ can be chosen such that $f^{[\pm,\pm]}$ are multivalued functions in $\Omega\smm\{ p_1,\dots, p_{n-1} \}$ with monodromy $-1$ around each inner boundary component and each $p_1,\dots, p_{n-1}$. Since $\Omega_k$ are Temperleyan, $f^{[\pm,\pm]}$ satisfy~\eqref{eq:temp_bc}.

We can construct such $f^{[\pm,\pm]}$ by generalizing the construction from the previous example. Let $\Sigma$ be the double of $\Omega$ and $p_{n-1+i} = \sigma(p_i),\ i = 1,\dots, n-1$. Consider the line bundle $\Oo_\Sigma(p_1 + \ldots + p_{2n-2})$. Recall that $n$ is the genus of $\Sigma$, therefore $\deg K_\Sigma = 2n-2$ and we can write
\[
  \Oo_\Sigma(p_1 + \ldots + p_{2n-2}) \cong K_\Sigma\otimes \Ll_{\alpha_0}
\]
for some anti-holomorphic $\alpha_0$. By construction, $\Oo_\Sigma(p_1 + \ldots + p_{2n-2})$ has a holomorphic section $\omega$ with the divisor $p_1+\ldots+p_{2n-2}$. The isomorphism above allows to identify $\omega$ with a multivalued holomorphic differential on $\Sigma$ with the same divisor and monodromy $\chi(\gamma) = \exp(-2i\int_\gamma\Im\alpha_0)$. One can choose $\omega$ and $\alpha_0$ such that $\sigma^\ast\alpha_0 = \bar\alpha_0$ and $\sigma^\ast\omega = \bar\omega$. 

Let us fix simple paths $\gamma_1,\dots,\gamma_{n-1}$ such that each $i$ the path $\gamma_i$ connects $p_i$ with $\sigma(p_i)$ and $\sigma(\gamma_i) = \gamma_i$. Fix a simplicial basis $A_1,\dots, A_n, B_1,\dots, B_n$ in $H_1(\Sigma,\ZZ)$ in the usual way (see Section~\ref{subsec:Schottky double}, note that $B_i$-s are inner boundary components of $\Omega$). Let $\Ff_0$ be the spin line bundle with zero characteristics. By the definition of the quadratic form associated with a spin line bundle (see Section~\ref{subsec:Spin line bundles}) we can identify $\sqrt{\omega}$ with a multivalued section of $\Ff_0$ over $\Sigma\smm\{ p_1,\dots, p_{2n-2} \}$ whose monodromy around a loop $\gamma$ is given by
\begin{equation}
  \label{eq:monodromy_of_w}
  \chi(\gamma) = -\exp\left( \pi i (\wind(\gamma, \omega) - q_0(\gamma)) \right) = \pm\exp\left(-i\int_\gamma \Im \alpha_0 + \pi i \sum_{j = 1}^{n-1} \gamma\cdot \gamma_j\right)
\end{equation}
where $\wind$ is defined as in Section~\ref{subsec:Spin line bundles} and the choice of $\pm$ in front of the exponent depend on the choice of $\alpha_0$. The symmetries of $\omega$ and $\alpha_0$ with respect to the involution $\sigma$ force the sign to be $-1$ if $\gamma = B_i$ for some $i$. Adjusting $\alpha_0$ we can also achieve the sign to be $+1$ when going around $A_i$ for each $i$.

Assume now that the line bundle $\Ff_0\otimes \Ll_{\tfrac{-\alpha_0}{2}}$ has a Cauchy kernel $\Dd^{-1}$. Identify this kernel with a multivalued section of $\Ff_0\boxtimes\Ff_0$ as usual and define
\[
  f^{[+,+]}(p,q) = \sqrt{\omega}(p)^{-1}\Dd^{-1}(p,q)\sqrt{\omega}(q)\,(dq)^{-1},\qquad f^{[-,+]}(p,q) = \sqrt{\omega}(\sigma(p))^{-1}\Dd^{-1}(\sigma(p),q)\sqrt{\omega}(q)\,(dq)^{-1}.
\]
It is easy to see that $f^{[\pm,\pm]}$ defined in this way satisfy all the required properties. We conclude that the shift of the compactified free field $h$ is given by $-\pi^{-1}\Im\alpha_0$. This agrees with the results of~\cite{basok2023dimers}.

\begin{rem}
  \label{rem:meaning_of_alpha0}
  Let us make some comments about the choice of $\alpha_0$.

  \begin{itemize}
    \item This differential has a geometric meaning: indeed, consider the Riemannian metric given by $|\omega|^2$, then this is the unique (up to a multiplicative constant) locally flat metric on $\Sigma$ with conical singularities of angles $4\pi$ at $p_1,\dots, p_{2n-2}$. The character $\chi(\gamma) = \exp(2i\int_\gamma\Im\alpha_0)$ is nothing but the holonomy of this metric. This appearance of the holonomy is very consonant with the approach of~\cite{BerestyckiLaslierRayI} where the dimer height function is related with the winding of branches of random trees, see~\cite{basok2023dimers} for more details.
    \item It is worth noting that the construction above fixes the square root $[\frac12\alpha_0]\in \Jac(\Sigma)$ of $[\alpha_0]\in \Jac(\Sigma_0)$, that is, it fixes $\pi^{-1}\Im\alpha_0$ up to an integer differential with even periods. Note that this may force $\alpha_0$ to be non-zero even when the holonomy of the metric above it trivial. This, for example, happens in the case of a Temperleyan cylinder discussed in the first example: in this case $\pi^{-1}\Im\alpha_0$ is required to have an odd $A$-period which results in the half-integer shift of the height change.
  \end{itemize}
\end{rem}

\printbibliography

\end{document}